\documentclass[11pt]{article}
\usepackage{graphicx}%插图
\usepackage{float}%浮动
\usepackage{subfigure}%插入多图
\usepackage{caption}
\usepackage{multirow}
\usepackage{makecell}
\usepackage{appendix}
\usepackage[figuresright]{rotating}
\usepackage{booktabs}
\usepackage[linesnumbered,ruled]{algorithm2e}
\usepackage[margin=1in]{geometry}
\usepackage{hyperref}
\usepackage{amsfonts}
\usepackage{mathrsfs}
\usepackage{comment}
\usepackage{amsmath}
\usepackage{amssymb}
\usepackage{amsthm}
\usepackage{amscd}
\usepackage{graphicx}
\usepackage{indentfirst}
\usepackage[all]{xy}
\usepackage{titlesec}
\usepackage{enumerate}
\usepackage{bm}
\usepackage{enumitem}
\usepackage{color}
\usepackage{dsfont}
\usepackage{arydshln}
\usepackage{booktabs}
\newtheorem{theorem}{Theorem}[section]
\newtheorem{lemma}[theorem]{Lemma}
\newtheorem{proposition}[theorem]{Proposition}

\newtheorem{definition}[theorem]{Definition}

\newtheorem{remark}[theorem]{Remark}

\newtheorem{example}[theorem]{Example}

\DeclareMathOperator{\rank}{rank}

\begin{document}
\title{ Two variants of Twisted Reed-Solomon Codes\footnote{The research was supported by the National Natural Science Foundation of China under the Grant No. 12441105.}}
\author{Haojie Gu\footnote{Haojie Gu is with the School of Mathematical Sciences, Capital Normal University, Beijing 100048, China. Email: 2200502051@cnu.edu.cn.},
   \and Huiyue Lei\footnote{Huiyue Lei is with the School of Mathematical Sciences, Capital Normal University, Beijing 100048, China. Email:362885063@qq.com. },
	\and Jun Zhang\footnote{Jun Zhang is with the School of Mathematical Sciences, Capital Normal University, Beijing 100048, China. Email: junz@cnu.edu.cn.}
}

\date{}
\maketitle

\begin{abstract}
	%In this paper, we study a  class of column twisted Reed-Solomon(TRS) codes $CTRS(\mathcal{A},\boldsymbol{B},\boldsymbol{\lambda})$ and a class of row-column TRS codes $RCTRS_{\ell,\eta}(\mathcal{A},\boldsymbol{B},\boldsymbol{\lambda})$, motivated by the works of~\cite{beelen2017twisted,bhagat2025row,liu2025column}. Firstly, we establish some conditions for codes $CTRS(\mathcal{A},\boldsymbol{B},\boldsymbol{\lambda})$ to be MDS codes.  We  also present some explicit constructions for the code $CTRS(\mathcal{A},\boldsymbol{B},\boldsymbol{\lambda})$ being MDS and non-RS MDS codes. Secondly, we establish some conditions for codes $RCTRS_{\ell,\eta}(\mathcal{A},\boldsymbol{B},\boldsymbol{\lambda})$ to be MDS codes and show that the dimension of their Schur square codes are at least $2k+2$. Consequently, these codes are not only not equivalent to RS codes, but also different from the RCTRS codes constructed in~\cite{bhagat2025row,liu2025column}, showing the novelty of our construction.   Finally, we present the dual codes of column TRS codes $CTRS(\mathcal{A},\boldsymbol{B},\boldsymbol{\lambda})$ and row-column TRS codes $RCTRS(\mathcal{A},\boldsymbol{B},\boldsymbol{\lambda})$. 

    Generalized Reed–Solomon codes and twisted generalized Reed–Solomon codes provide important sources of maximum distance separable codes. In this paper, we study two variants obtained by introducing column twists and simultaneous row-column twists into Reed–Solomon type evaluation codes. For the column-twisted family, we provide necessary and sufficient conditions
     for the code to be MDS in terms of explicit subset product conditions. Under the stated parameter assumptions, the Schur square has dimension $2k+1$, 
      which leads to MDS codes that are not equivalent to Reed–Solomon codes. For the row-column twisted family, we establish necessary and sufficient conditions for the MDS property in terms of elementary symmetric functions. The larger Schur square dimension provides a further distinction from both Reed–Solomon codes and known twisted families, thereby yielding new non-RS MDS codes. Finally, explicit parity-check matrices and dual descriptions are obtained for both code families. These results provide a foundation for subsequent studies of self-orthogonality, hull dimensions and applications to quantum-code constructions.
	
	\begin{flushleft}
		\textbf{Keywords: twisted Reed-Solomon codes, column twisted Reed-Solomon codes, row-column twisted Reed-Solomon codes, non-RS MDS codes, Schur product} 
	\end{flushleft}
\end{abstract}

\section{Introduction}

Let $\mathbb{F}_{q}$ be the  finite field with $q$ elements and $\mathbb{F}_{q}^{*}=\mathbb{F}_{q}\backslash\{0\}$, where $q$ is a power of the prime $p$. Let $\mathbb{F}_{q}^n$ be the $n$-dimensional vector space over the finite field $\mathbb{F}_{q}$. For any vector $ \boldsymbol{x}=(x_1,x_2,\cdots,x_n)\in \mathbb{F}_{q}^n$, the \emph{Hamming weight} $wt( \boldsymbol{x})$ of $ \boldsymbol{x}$ is defined to be the number of non-zero coordinates, i.e.,
$$wt( \boldsymbol{x})=|\left\{i\,|\,1\leqslant i\leqslant n,\,x_i\neq 0\right\}|.$$

An $[n,k,d]$-linear code $\mathcal{C}\subseteq\mathbb{F}_{q}^n$ is a $k$-dimensional linear subspace of $\mathbb{F}_{q}^n$ whose minimum distance $d=d(\mathcal{C})$ is given by
$$d(\mathcal{C})=\min\left\{wt(\boldsymbol{c}):\boldsymbol{c}\in\mathcal{C}\backslash\{0\}\right\}.$$
The well-known Singleton bound states that $d\leq n-k+1$ for any linear code $\mathcal{C}=[n,k,d]$.
A code attaining equality, i.e., with $d=n-k+1$, is called a maximum distance separable (MDS) code.
MDS codes have been extensively studied in the literature, including investigations of their covering radius and deep hole problem~\cite{bartoli2014covering,ZWK2020}, weight distribution~\cite{ezerman2010weights}, LCD property~\cite{carlet2018euclidean}, self-dual property~\cite{ezerman2010weights,lebed2019construction}, classification~\cite{kokkala2015classification} and related cryptographic topics~\cite{cramer2008codes}, etc. 

\begin{definition}
    Let $\mathcal{A}=\left\{\alpha_{1},\cdots,\alpha_{n}\right\}\subseteq\mathbb{F}_{q}$ be the evaluation set and $\boldsymbol{v}=(v_{1},\cdots,v_{n})\in (\mathbb{F}_{q}^{*})^n$, then the generalized Reed-Solomon code $GRS_{k}(\mathcal{A},\boldsymbol{v})$ of length $n$ and dimension $k$ is defined as 
    \begin{equation*}
        GRS_{k}(\mathcal{A},\boldsymbol{v})=\left\{(v_{1}f(\alpha_{1}),\cdots,v_{n}f(\alpha_{n})):f(x)\in\mathbb{F}_{q}[x]_{<k}\right\},
    \end{equation*}
    where $\mathbb{F}_{q}[x]_{<k}:=\left\{f(x)\in\mathbb{F}_{q}[x]:\deg(f(x))<k \right\}$. If $\boldsymbol{v}$ is the all-one vector, it is called a Reed-Solomon code of length $n$ and dimension $k$, denoted as $RS_{k}(\mathcal{A}).$
\end{definition}

Generalized Reed-Solomon $(\operatorname{GRS})$ codes are one of the most important families of MDS codes, as they can correct burst errors and provide high fidelity in CD players. Most known MDS codes are equivalent to GRS  or extended GRS codes. Finding new MDS codes that are not equivalent to GRS codes, which are called non-RS MDS codes, is a challenging problem. This problem has attracted considerable recent attention. It is a well-established fact that any $[n,k]$-MDS code is equivalent to a Reed-Solomon code when $k<3$ or $n-k<3$ \cite{beelen2017twisted}. Therefore, constructing non-RS $[n,k]$-MDS codes is restricted to $3\leq k\leq n-3$. In addition, since the dual of an RS code is itself an RS code, it suffices to focus on the case where $3\leq k\leq n/2$ in the subsequent discussion.

An MDS code that is not equivalent to a generalized Reed-Solomon $(\operatorname{GRS})$ code is called a non-RS MDS code. The first construction of such codes was given by Roth and Lempel in 1989 \cite{roth1989construction}, based on generator matrices and special subsets of finite fields. More recently, in 2016, Sheekey \cite{Sheekey2016anew} introduced a new class of maximum rank distance codes, known as Twisted Gabidulin codes, which are MDS with respect to the rank metric and were shown to be inequivalent to Gabidulin codes (the rank metric analog of Reed-Solomon codes). Inspired by Sheekey's work, Beelen et al. \cite{beelen2017twisted,beelen2022twisted} proposed Twisted Reed-Solomon $(\operatorname{TRS})$ codes and demonstrated that certain families of TRS codes are non-RS MDS codes. Following this development, extensive research has been devoted to the structure and properties of TRS codes \cite{cheng2023parity,ding2025new,gu2023twisted,huang2021mds,sui2022mds1,sui2023new,sui2022mds2,zhang2022class,zhu2024class}. Furthermore, many families of non-RS MDS codes have been constructed using various techniques; see, for instance, \cite{abdukhalikov2025some,chen2023many,jin2025new,li2025non,wu2024more,zhi2025new}.

Recently, Liu et al.~\cite{liu2025column} introduced column-twisted Reed-Solomon $(\operatorname{CTRS})$ codes by twisting a column of the generator matrix of a Reed-Solomon code. They established MDS conditions for these codes and, via Schur square analysis, proved their non-RS MDS property. Subsequently, Bhagat et al.~\cite{bhagat2025row} extended this idea to both rows and columns, proposing Row-Column Twisted Reed-Solomon $(\operatorname{RCTRS})$ codes. They established conditions for these codes to be MDS and demonstrated their existence. In addition, by studying the Schur square of the $\operatorname{RCTRS}$ code, they proved that these codes are non-RS MDS codes and non-equivalent to $\operatorname{CTRS}$ codes.

Compared with the $\operatorname{CTRS}$ and $\operatorname{RCTRS}$ constructions~\cite{bhagat2025row,liu2025column}, the present paper focuses on two special but flexible variants
whose MDS property can be characterized by explicit subset conditions. The main contributions are as follows. First, we provide necessary and sufficient conditions for
 the column-twisted and row-column twisted families to be MDS. Second, we give field-extension and subgroup-based parameter choices that satisfy these conditions and yield explicit non-RS MDS codes. Third, we compute the Schur-square dimensions of the constructed codes.
For the column-twisted family, the Schur square has dimension $2k+1$; for the row-column twisted family, it has dimension $2k+2$ or $2k+3$ in the considered parameter ranges. These dimensions distinguish the constructed codes from GRS codes and, in the corresponding ranges, from previously known twisted families. Finally,
 we determine explicit parity-check matrices and dual descriptions for both families, which may be useful in future studies of self-orthogonality, self-duality, hulls and quantum-code applications.

This paper is organized as follows.
In Section~\ref{sec2}, we present some basic notations and definitions, and introduce two special classes of $\operatorname{CTRS}$ and $\operatorname{RCTRS}$ codes considered in this paper. In Section~\ref{sec3}, we determine a necessary and sufficient condition for  $\operatorname{CTRS}$ codes $\operatorname{CTRS}(\mathcal{A},\boldsymbol{B},\lambda)$ to be MDS. Furthermore, based on this necessary and sufficient condition, we construct MDS codes whose Schur square has dimension $2k+1$, which implies that the resulting codes are non-RS MDS codes. In Section~\ref{sec4}, we determine a necessary and sufficient condition for the $\operatorname{RCTRS}$ codes $\operatorname{RCTRS}_{\ell,\eta}(\mathcal{A},\boldsymbol{B},\lambda)$ to be MDS. Then we provide some constructions of non-RS MDS codes based on $\operatorname{RCTRS}$ codes $\operatorname{RCTRS}_{\ell,\eta}(\mathcal{A},\boldsymbol{B},\boldsymbol{\lambda})$. In Section~\ref{sec5}, we determine the parity-check matrices for $\operatorname{CTRS}$ codes and $\operatorname{RCTRS}$ codes, thus obtaining their dual codes. In Section~\ref{sec6}, we conclude the paper.

\section{Preliminaries}\label{sec2}
In this section, we establish the notation for the rest of the paper. By “natural numbers”, we mean  positive integers, i.e., $\mathbb{N}^{+}=\left\{1,2,3,\cdots\right\}$.  The set of non-negative integers is denoted by $\mathbb{N}$. For $m\in\mathbb{N}^{+}$, let $[m]$ denote the set of integers from $1$ to $m$ and let $[0,m]$ denote the set of integers from $0$ to $m$, i.e., $[m]:=\left\{1,2,\cdots,m\right\}$ and $[0,m]:=\left\{0,1,\cdots,m\right\}$. Given  $n$-subset $U=\left\{\beta_{1},\beta_{2},\cdots,\beta_{n}\right\}\subseteq \mathbb{F}_{q}$ and $k$-subset $I=\left\{i_{1},\cdots,i_{k}\right\}\subseteq [n]$, let $U_{I}=\left\{\beta_{i_{1}},\beta_{i_{2}},\cdots,\beta_{i_{k}}\right\}$ and $V(U_{I}):=V(\beta_{i_{1}},\cdots,\beta_{i_{k}})=\prod\limits_{1\leq j_{1}<j_{2}\leq k}(\beta_{i_{j_{2}}}-\beta_{i_{j_{1}}})$.

 TGRS codes are generalizations of GRS codes and were first introduced in~\cite{beelen2017twisted}.

\begin{definition}[\cite{beelen2017twisted}]
For two positive integers $l, k$ with $l \le k \le n \le q$, suppose that $\mathbf{h}=(h_1,h_2,\ldots,h_l),\mathbf{t}=(t_1,t_2,\ldots,t_l)$ and 
$\boldsymbol{\eta}=(\eta_1,\eta_2,\ldots,\eta_l)\in \mathbb{F}_q^l$, where $0 \le h_i \le k-1$ are distinct and $0 \le t_i \le n-k$ are also distinct.
Then
\[
\mathcal{S}
=
\left\{
\sum_{i=0}^{k-1} f_i x^i
+
\sum_{j=1}^{l} \eta_j f_{h_j} x^{k-1+t_j}
\;:\;
f_0,f_1,\ldots,f_{k-1}\in\mathbb{F}_q
\right\}
\]
is a $k$-dimensional subspace of $\mathbb{F}_q[x]$ over $\mathbb{F}_q$. Furthermore, let
$
\boldsymbol{\alpha}=(\alpha_1,\alpha_2,\ldots,\alpha_n)\in \mathbb{F}_q^n,
$
where $\alpha_i$, $i=1,2,\ldots,n$, are distinct, and
$
\boldsymbol{v}=(v_1,v_2,\ldots,v_n)\in (\mathbb{F}_q^{*})^n.
$
The linear code
\[
\mathcal{C}
=
\left\{
\operatorname{ev}_{\boldsymbol{\alpha},\boldsymbol{v}}(f(x))
\,:\,
f(x)\in\mathcal{S}
\right\}
\]
is called a twisted generalized Reed-Solomon $(\operatorname{TGRS})$ code.
When $\boldsymbol{v}=(1,\ldots,1)$, the code is called 
a twisted Reed-Solomon $(\operatorname{TRS})$ code.
\end{definition}
Later, Hu et al.~\cite{hu2025p} proposed a more general form of TGRS codes.
\begin{definition}[\cite{hu2025p}]
Let $n$, $k$ and $s$ be integers with $0<k\le n$ and $0\le s \le n-k$.
For given
\begin{itemize}
    \item $\mathcal{L}\subseteq [0,n-k-1]$ (called the twist set), where $s:=|\mathcal{L}|$ denotes the number of twists;
    \item $\mathcal{P}\subseteq [0,k-1]$ (called the position set);
    \item $B=[b_{i,j}]\in \mathbb{F}_q^{k\times (n-k)}$ (called the coefficient matrix), where $0\le i\le k-1$ and $0\le j\le n-k-1$,
\end{itemize}
denote by
\begin{equation}\label{Equ:F,L,P,B}
F(\mathcal{L},\mathcal{P},B)
=
\left\{
\sum_{i=0}^{k-1} f_i x^i
+
\sum_{i\in\mathcal{P}} f_i \sum_{j\in\mathcal{L}} b_{i,j}x^{k+j}
:
f_i\in \mathbb{F}_q,\ 0\le i\le k-1
\right\}.
\end{equation}
Let $\boldsymbol{\alpha}=(\alpha_1,\alpha_2,\ldots,\alpha_n)\in \mathbb{F}_q^n$ with pairwise distinct $\alpha_1,\ldots,\alpha_n$, and
$\boldsymbol{v}=(v_1,\ldots,v_n)\in (\mathbb{F}_q^*)^n$.
Then the following twisted generalized Reed-Solomon $(\operatorname{TGRS})$ code 
\begin{equation}
C(\mathcal{L},\mathcal{P},B)
=
\left\{
\operatorname{ev}_{\boldsymbol{\alpha},\boldsymbol{v}}(f(x))
=
(v_1f(\alpha_1),\ldots,v_nf(\alpha_n))
:
f(x)\in F(\mathcal{L},\mathcal{P},B)
\right\}.
\end{equation}
is called an $(\mathcal{L},\mathcal{P})$-TGRS code. When $\boldsymbol{v}=(1,\ldots,1)$, the code is referred to as $(\mathcal{L},\mathcal{P})$-TRS code.
\end{definition}

In this paper, we will consider the special case $\mathcal{L}=\{0\}$ and $\mathcal{P}=\{\ell\}$, where $0\leq\ell\leq k-1\leq n-2$. For any $\eta\in\mathbb{F}_{q}$ denote by
\begin{equation*}
    S_{k,\ell,\eta}=\left\{\sum\limits_{0\leq i\leq k-1\atop i\neq\ell}f_{i}x^i+f_{\ell}\left(x^{\ell}+\eta x^k\right):f_{0},f_{1},\cdots,f_{k-1}\in\mathbb{F}_{q}\right\}.
\end{equation*}
For any $\mathcal{A}=\left\{\alpha_{1},\cdots,\alpha_{n}\right\}\subseteq\mathbb{F}_{q}$, we will focus on the following TRS codes:
\begin{equation*}
    TRS_{k}(\mathcal{A},\ell,\eta)=\left\{(f(\alpha_{1}),\cdots,f(\alpha_{n})):f\in S_{k,\ell,\eta}\right\}
\end{equation*}

 In addition, Liu et al.~\cite{liu2025column} and Bhagat et al.~\cite{bhagat2025row} generalized the way of twisting TRS codes from rows to columns and row-columns, respectively. 
Motivated by their work, we consider the following two classes of variants of TRS codes.
\begin{definition}\label{def:RCTRS}
	Let $0\leq\ell<k\leq n\leq q+2$ and $\mathcal{A}=\left\{\alpha_{1},\cdots,\alpha_{n-2}\right\}\subseteq \mathbb{F}_{q}$, where $\alpha_{1},\cdots,\alpha_{n-2}$ are distinct. Suppose that $\eta\in \mathbb{F}_{q}$, $\boldsymbol{B}=(b_{1},b_{2},b_{3})\in \mathbb{F}_{q}^3$ and $\boldsymbol{\lambda}=(\lambda_{1},\lambda_{2})\in \mathbb{F}_{q}^2$. A column-twisted Reed-Solomon $(\operatorname{CTRS})$ code is defined as 
    \begin{equation*}
    \begin{aligned}
\operatorname{CTRS}(\mathcal{A},\boldsymbol{B},\boldsymbol{\lambda})=&\left\{(f(\alpha_{1}),\cdots,f(\alpha_{n-2}),f(b_{1})-\lambda_{1}f(b_{3}),f(b_{2})-\lambda_{2} f(b_{3})):f(x)\in \mathbb{F}_{q}[x]_{<k}\right\}.
    \end{aligned}
    \end{equation*}
	 A row-column twisted Reed-Solomon $(\operatorname{RCTRS})$ code is defined as \begin{equation*}
     \begin{aligned}
         \operatorname{RCTRS}_{\ell,\eta}(\mathcal{A},\boldsymbol{B},\boldsymbol{\lambda})=&\left\{(f(\alpha_{1}),\cdots,f(\alpha_{n-2}),f(b_{1})-\lambda_{1}f(b_{3}),f(b_{2})-\lambda_{2} f(b_{3})):f(x)\in S_{k,\ell,\eta} \right\}.
     \end{aligned}
\end{equation*}
\end{definition}

Obviously, the corresponding generator matrices are given by 
\begin{equation}
G_{\operatorname{CTRS}}=\begin{pmatrix}
 1&\cdots&1&1-\lambda_{1}&1-\lambda_{2}\\
 \alpha_{1}&\cdots&\alpha_{n-2}&b_{1}-\lambda_{1} b_{3}&b_{2}-\lambda_{2} b_{3}\\
 \vdots&\cdots&\vdots&\vdots&\vdots\\
 \alpha_{1}^{k-1}&\cdots&\alpha_{n-2}^{k-1}&b_{1}^{k-1}-\lambda_{1} b_{3}^{k-1}&b_{2}^{k-1}-\lambda_{2} b_{3}^{k-1}\\
\end{pmatrix}
\end{equation}
and
\small{\begin{equation}
G_{\operatorname{RCTRS}}=\begin{pmatrix}
	1&\cdots&1&1-\lambda_{1}&1-\lambda_{2}\\
	\alpha_{1}&\cdots&\alpha_{n-2}&b_{1}-\lambda_{1}b_{3}&b_{2}-\lambda_{2}b_{3}\\
	\vdots&\vdots&\vdots&\vdots&\vdots\\
	\alpha_{1}^{\ell-1}&\cdots&\alpha_{n-2}^{\ell-1}&b_{1}^{\ell-1}-\lambda_{1}b_{3}^{\ell-1}&b_{2}^{\ell-1}-\lambda_{2}b_{3}^{\ell-1}\\
	\alpha_{1}^{\ell+1}&\cdots&\alpha_{n-2}^{\ell+1}&b_{1}^{\ell+1}-\lambda_{1}b_{3}^{\ell+1}&b_{2}^{\ell+1}-\lambda_{2}b_{3}^{\ell+1}\\
	\vdots&\vdots&\vdots&\vdots&\vdots\\
	\alpha_{1}^{k-1}&\cdots&\alpha_{n-2}^{k-1}&b_{1}^{k-1}-\lambda_{1}b_{3}^{k-1}&b_{2}^{k-1}-\lambda_{2}b_{3}^{k-1}\\
	\alpha_{1}^{\ell}+\eta\alpha_{1}^{k}&\cdots&\alpha_{n-2}^{\ell}+\eta\alpha_{n-2}^k&b_{1}^{\ell}-\lambda_{1}b_{3}^{\ell}+\eta\left( b_{1}^{k}-\lambda_{1}b_{3}^{k}\right)&b_{2}^{\ell}-\lambda_{2}b_{3}^{\ell}+\eta\left( b_{2}^{k}-\lambda_{2}b_{3}^{k}\right)\\
	\end{pmatrix}
\end{equation}}
for the codes $\operatorname{CTRS}(\mathcal{A},\boldsymbol{B},\boldsymbol{\lambda})$ and $\operatorname{RCTRS}_{\ell,\eta}(\mathcal{A},\boldsymbol{B},\boldsymbol{\lambda})$, respectively.

The Schur product of linear codes over a finite field has emerged as a fundamental operation in both classical and quantum coding theory~\cite{randriambololona2015products}. It is widely used in the construction of code-equivalence distinguishers in coding theory and code-based cryptography.
\begin{definition}\label{schur product}
For $\boldsymbol{x}=(x_1,x_2,\cdots ,x_n)$, $\boldsymbol{y}=(y_1,y_2,\cdots ,y_n) \in {\mathbb{F}_q ^n}$, the \textit{Schur product} of $\boldsymbol{x}$ and $\boldsymbol{y}$ is defined as $\boldsymbol{x}*\boldsymbol{y}:=\left( {{x_1}{y_1}, \ldots ,{x_n}{y_n}} \right)$. The Schur product of two linear codes $\mathcal{C}_1,\mathcal{C}_2 \subseteq \mathbb{F}_q^n$ is defined as
\[\mathcal{C}_1*\mathcal{C}_2: =\operatorname{Span}_{\mathbb{F}_{q}}\left\langle {{\textit{\textbf{c}}_1}*{\textit{\textbf{c}}_2}:{\textit{\textbf{c}}_1} \in \mathcal{C}_1,{\textit{\textbf{c}}_2} \in \mathcal{C}_2} \right\rangle.\]
In particular, the Schur product of $\mathcal{C}$ with itself is denoted by $\mathcal{C}^2$ and is often called the Schur square of $\mathcal{C}$.
\end{definition}

The definition of the equivalence for linear codes is given in the following.
\begin{definition}[\cite{beelen2017twisted}]
  Let $\mathcal{C}$,~$\mathcal{D}$ be $[n,k]$ linear codes over $\mathbb{F}_q$. We say that $\mathcal{C}$ and $\mathcal{D}$ are \textit{equivalent} if there is a permutation $\pi \in S_n$ and $\boldsymbol{v}$=$(v_1,v_2,\cdots ,v_n)\in {\left( {\mathbb{F}_q^ * } \right)^n}$ such that $\mathcal{C}={\varphi _{\pi ,\textit{\textbf{v}}}}(\mathcal{D})$ where 
\[{\varphi _{\pi ,v}}:\mathbb{F}_q^n \to \mathbb{F}_q^n,\left( {{c_1}, \ldots ,{c_n}} \right) \mapsto \left( {{v_1}{c_{\pi \left( 1 \right)}}, \ldots ,{v_n}{c_{\pi \left( n \right)}}} \right)\]
is the Hamming-metric isometry of $\mathbb{F}_{q}^n$.
\end{definition}

The following proposition determines the Schur square of a GRS code.

\begin{proposition}[\cite{couvreur2014distinguisher}]\label{distinguisher:RS}
  If $k \leq \frac{n}{2}$, then
$GRS_{k}^{2}(\mathcal{A},\boldsymbol{v}) = GRS_{2k-1}(\mathcal{A},\boldsymbol{v}^2)$. Thus, for $k \leq \frac{n}{2}$, $\dim(GRS_{k}^{2}(\mathcal{A},\boldsymbol{v}))=2k-1.$

\end{proposition}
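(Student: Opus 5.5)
The plan is to show both inclusions between the spans directly, working with the evaluation map $\operatorname{ev}_{\boldsymbol{v}}(f)=(v_1f(\alpha_1),\ldots,v_nf(\alpha_n))$, and then compute the dimension using the fact that $n>2k-2$.

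\textbf{Inclusion $GRS_k^2\subseteq GRS_{2k-1}(\mathcal{A},\boldsymbol{v}^2)$.} Take codewords $\operatorname{ev}_{\boldsymbol{v}}(f)$ and $\operatorname{ev}_{\boldsymbol{v}}(g)$ with $\deg f,\deg g\le k-1$. Their Schur product is $(v_1^2 f(\alpha_1)g(\alpha_1),\ldots,v_n^2 f(\alpha_n)g(\alpha_n))$. Since $\deg(fg)\le 2k-2$, this vector lies in $GRS_{2k-1}(\mathcal{A},\boldsymbol{v}^2)$. The target is a linear space, so it contains the span of all such products, which is $GRS_k^2(\mathcal{A},\boldsymbol{v})$.

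\textbf{Reverse inclusion.} The code $GRS_{2k-1}(\mathcal{A},\boldsymbol{v}^2)$ is spanned by the vectors $\operatorname{ev}_{\boldsymbol{v}^2}(x^m)$ for $0\le m\le 2k-2$. For each such $m$, write $m=i+j$ with $0\le i,j\le k-1$; for instance $i=\min(m,k-1)$ and $j=m-i$. Then
\[
\operatorname{ev}_{\boldsymbol{v}}(x^i)*\operatorname{ev}_{\boldsymbol{v}}(x^j)=\operatorname{ev}_{\boldsymbol{v}^2}(x^m).
\]
So every basis vector of $GRS_{2k-1}(\mathcal{A},\boldsymbol{v}^2)$ lies in the Schur square, and the two codes are equal.

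\textbf{Dimension.} The map $f\mapsto \operatorname{ev}_{\boldsymbol{v}^2}(f)$ on $\mathbb{F}_q[x]_{<2k-1}$ is injective when $2k-1\le n$. Indeed, a nonzero polynomial of degree at most $2k-2<n$ cannot vanish at the $n$ distinct points $\alpha_i$, and the weights $v_i^2$ are nonzero. Equivalently, the evaluation matrix is a full-rank Vandermonde matrix scaled by nonzero column factors.

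The hypothesis $k\le n/2$ gives $2k-1\le n-1<n$. Therefore $\dim GRS_k^2(\mathcal{A},\boldsymbol{v})=2k-1$.

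There is no real obstacle here. The only point needing care is checking that the hypothesis $k\le n/2$ is what makes the evaluation map on degree $\le 2k-2$ polynomials injective.
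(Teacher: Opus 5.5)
Your proof is correct and complete. The paper gives no proof of its own and only cites Couvreur et al.; your argument is the standard one behind that result: products of evaluations of polynomials of degree $<k$ span the evaluations of all polynomials of degree $<2k-1$, and evaluation is injective on that space because $2k-2<n$. As you note, the hypothesis $k\le n/2$ is used only in the dimension count, and $2k-1\le n$ would already suffice; the equality of codes holds for every $k\ge 1$.
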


\begin{remark}
If two $[n,k]$-codes $\mathcal{C}_{1}$ and $\mathcal{C}_{2}$ over $\mathbb{F}_{q}$ are equivalent, then $\mathcal{C}_{1}^2$ and $\mathcal{C}_{2}^2$ are equivalent. Hence, if an $[n,k]$-MDS code $\mathcal{C}$ with $k\leq\frac{n}{2}$ satisfies $\dim(\mathcal{C}^2)\neq 2k-1$, then it is not equivalent to the GRS code. We call such codes non-RS MDS codes.
\end{remark}

\section{Column twisted Reed-Solomon codes
\texorpdfstring{$\operatorname{CTRS}(\mathcal{A},\boldsymbol{B},\boldsymbol{\lambda})$}{CTRS(A,B,lambda)}}\label{sec3}
In this section, we study the column twisted Reed-Solomon code
$\operatorname{CTRS}(\mathcal{A},\boldsymbol{B},\boldsymbol{\lambda})$ introduced in Definition~\ref{def:RCTRS} and explicitly construct non-RS MDS codes $\operatorname{CTRS}(\mathcal{A},\boldsymbol{B},\boldsymbol{\lambda})$. 

We first give the necessary and sufficient conditions for the code $\operatorname{CTRS}(\mathcal{A},\boldsymbol{B},\boldsymbol{\lambda})$ to be MDS in the following theorem.

\begin{theorem}\label{Theorem1}
	The code $\operatorname{CTRS}(\mathcal{A},\boldsymbol{B},\boldsymbol{\lambda})$ is MDS if and only if the following conditions hold:
	\begin{description}
		\item[(i)] for any $k-1$-subset $\mathcal{I}\subseteq [n-2]$, we have 
		\begin{equation*}
		\prod\limits_{i\in \mathcal{I}}(b_{t}-\alpha_{i})\neq\lambda_{t}\cdot 	\prod\limits_{i\in \mathcal{I}}(b_{3}-\alpha_{i}),t=1,2;
		\end{equation*}
		\item[(ii)] for any $k-2$-subset $\mathcal{J}\subseteq [n-2]$, we have
		\begin{equation*}
		\begin{aligned}
		\Phi_{J}(b_{2},b_{1})-\lambda_{2}\Phi_{J}(b_{3},b_{1})+\lambda_{1}\Phi_{J}(b_{3},b_{2})\neq 0,
		\end{aligned}		
		\end{equation*}
		where $\Phi_{J}(x,y)=(x-y)\prod\limits_{j\in \mathcal{J}}(x-\alpha_{j})(y-\alpha_{j})$.
	\end{description}
\end{theorem}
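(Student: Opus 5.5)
The plan is to use the standard criterion: a $k\times n$ generator matrix generates an MDS code if and only if every $k\times k$ minor is nonzero. Equivalently, no nonzero $f\in\mathbb{F}_q[x]_{<k}$ vanishes on $k$ coordinates. The first step is to check that $G_{\operatorname{CTRS}}$ has full rank $k$; this holds because it contains at least $k$ ordinary Vandermonde columns whenever $n-2\ge k$, and the degenerate cases are handled by the minors below. I will then split the $k$-subsets of columns according to how many of the two twisted columns $n-1,n$ they contain.

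\textbf{No twisted column.} The minor is a Vandermonde determinant $V(\mathcal{A}_I)$, which is nonzero because the $\alpha_i$ are distinct. These subsets impose no condition.

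\textbf{Exactly one twisted column $t\in\{1,2\}$, together with a $(k-1)$-subset $\mathcal{I}$.} By multilinearity in the last column, the minor equals $D(\alpha_{\mathcal I},b_t)-\lambda_t D(\alpha_{\mathcal I},b_3)$, where $D$ denotes the Vandermonde determinant. Since $D(\alpha_{\mathcal I},y)=V(\mathcal{A}_{\mathcal I})\prod_{i\in\mathcal I}(y-\alpha_i)$, the minor is nonzero exactly when condition (i) holds. Note that this identity stays valid even if $b_t$ or $b_3$ coincides with some $\alpha_i$, since both sides then vanish consistently.

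\textbf{Both twisted columns, together with a $(k-2)$-subset $\mathcal{J}$.} Expanding bilinearly in the last two columns gives
\[
D(\alpha_{\mathcal J},b_1,b_2)-\lambda_2D(\alpha_{\mathcal J},b_1,b_3)-\lambda_1D(\alpha_{\mathcal J},b_3,b_2)+\lambda_1\lambda_2D(\alpha_{\mathcal J},b_3,b_3).
\]
The last term vanishes because two columns are equal. Each remaining term is
\[
D(\alpha_{\mathcal J},x,y)=V(\mathcal A_{\mathcal J})(y-x)\prod_{j}(x-\alpha_j)(y-\alpha_j)=-V(\mathcal A_{\mathcal J})\,\Phi_{\mathcal J}(x,y).
\]
Substituting this, the minor is $-V(\mathcal A_{\mathcal J})$ times
\[
\Phi_{\mathcal J}(b_1,b_2)-\lambda_2\Phi_{\mathcal J}(b_1,b_3)-\lambda_1\Phi_{\mathcal J}(b_3,b_2).
\]
Using the antisymmetry $\Phi_{\mathcal J}(x,y)=-\Phi_{\mathcal J}(y,x)$, this expression equals $-\bigl(\Phi_{\mathcal J}(b_2,b_1)-\lambda_2\Phi_{\mathcal J}(b_3,b_1)+\lambda_1\Phi_{\mathcal J}(b_3,b_2)\bigr)$. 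Hence this case is equivalent to condition (ii).

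Combining the three cases, every $k\times k$ minor is nonzero if and only if (i) and (ii) hold, which proves both directions.

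The main obstacle is the bookkeeping of signs and orientations in the case with both twisted columns, so that the result matches exactly the form stated in (ii). I would fix the column order as $(\alpha_{\mathcal J},\text{col}_{n-1},\text{col}_n)$ and verify the closed formula for $D(\alpha_{\mathcal J},x,y)$ directly before substituting. The edge cases $k-1>n-2$ or $k-2<0$ correspond to empty families of subsets and are handled by vacuity. When $k=1$, condition (ii) is vacuous and (i) reads $1\ne\lambda_t$, which matches the requirement that the twisted columns be nonzero.
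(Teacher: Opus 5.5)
Your proposal is correct and follows the paper's proof. Both reduce MDS to the nonvanishing of all $k\times k$ minors, expand by (multi)linearity in the twisted columns, and factor the resulting Vandermonde determinants. Your sign bookkeeping through the antisymmetry of $\Phi_{\mathcal J}$ gives exactly $V(\mathcal A_{\mathcal J})\bigl(\Phi_{\mathcal J}(b_2,b_1)-\lambda_2\Phi_{\mathcal J}(b_3,b_1)+\lambda_1\Phi_{\mathcal J}(b_3,b_2)\bigr)$, which is the paper's formula for $\det(B)$.
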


\begin{proof}
	It is well-known that an $[n,k]$-linear code is MDS if and only if any $k$ columns of its generator matrix are linearly independent. Therefore, it is enough to prove that any $k\times k$ submatrix of $G_{\operatorname{CTRS}}$ is invertible. It suffices to show that for any subsets $\mathcal{I}=\left\{i_{1},\cdots,i_{k-1}\right\}$ and $\mathcal{J}=\left\{i_{1},\cdots,i_{k-2}\right\}$ of $[n-2]$, the matrices
	\begin{equation*}
	A_{t}=\begin{pmatrix}
	1&\cdots&1&1-\lambda_{t}\\
	\alpha_{i_{1}}&\cdots&\alpha_{i_{k-1}}&b_{t}-\lambda_{t}b_{3}
   \\
    \vdots&\vdots&\vdots&\vdots\\
    \alpha_{i_{1}}^{k-1}&\cdots&\alpha_{i_{k-1}}^{k-1}&b_{t}^{k-1}-\lambda_{t}b_{3}^{k-1}
    \\
	\end{pmatrix},t=1,2
	\end{equation*}
	and 
	\begin{equation*}
	B=\begin{pmatrix}
	1&\cdots&1&1-\lambda_{1}&1-\lambda_{2}\\
	\alpha_{i_{1}}&\cdots&\alpha_{i_{k-2}}&b_{1}-\lambda_{1}b_{3}
	&b_{2}-\lambda_{2}b_{3}\\
	\vdots&\vdots&\vdots&\vdots&\vdots\\
	\alpha_{i_{1}}^{k-1}&\cdots&\alpha_{i_{k-2}}^{k-1}&b_{1}^{k-1}-\lambda_{1}b_{3}^{k-1}&b_{2}^{k-1}-\lambda_{2}b_{3}^{k-1}
	\\
	\end{pmatrix}
	\end{equation*}
	are invertible. 
    
    (I) We compute the determinant of $A_t$ as follows:
    \begin{equation*}
	\begin{aligned}
	\det(A_{t})&=\left|
	\begin{array}{cccc}
			1&\cdots&1&1\\
			\alpha_{i_{1}}&\cdots&\alpha_{i_{k-1}}&b_{t}
			\\
			\vdots&\vdots&\vdots&\vdots\\
			\alpha_{i_{1}}^{k-1}&\cdots&\alpha_{i_{k-1}}^{k-1}&b_{t}^{k-1}
		\end{array}
	\right|-\lambda_{t}\left|
	\begin{array}{cccc}
	1&\cdots&1&1\\
	\alpha_{i_{1}}&\cdots&\alpha_{i_{k-1}}&b_{3}
	\\
	\vdots&\vdots&\vdots&\vdots\\
	\alpha_{i_{1}}^{k-1}&\cdots&\alpha_{i_{k-1}}^{k-1}&b_{3}^{k-1}
	\end{array}
	\right|\\
	&=\left(\prod\limits_{1\leq j\leq k-1}(b_{t}-\alpha_{i_{j}})-\lambda_{t}\prod\limits_{1\leq j\leq k-1}(b_{3}-\alpha_{i_{j}})\right)\prod\limits_{1\leq j_{1}<j_{2}\leq k-1}(\alpha_{i_{j_{2}}}-\alpha_{i_{j_{1}}})
	\end{aligned}
	\end{equation*}
	Thus, $\det(A_{t})\neq 0$ if and only if $\prod\limits_{1\leq j\leq k-1}(b_{t}-\alpha_{i_{j}})\neq \lambda_{t}\prod\limits_{1\leq j\leq k-1}(b_{3}-\alpha_{i_{j}})$.
    
    (II) We compute the determinant of $B$ as follows:
    \begin{equation*}
	\begin{aligned}
	\det(B)&=\left|
	\begin{array}{ccccc}
	1&\cdots&1&1&1\\
	\alpha_{i_{1}}&\cdots&\alpha_{i_{k-2}}&b_{1}&b_{2}
	\\
	\vdots&\vdots&\vdots&\vdots&\vdots\\
	\alpha_{i_{1}}^{k-1}&\cdots&\alpha_{i_{k-2}}^{k-1}&b_{1}^{k-1}&b_{2}^{k-1}
	\end{array}
	\right|-\lambda_{2}\left|
	\begin{array}{ccccc}
	1&\cdots&1&1&1\\
	\alpha_{i_{1}}&\cdots&\alpha_{i_{k-2}}&b_{1}&b_{3}
	\\
	\vdots&\vdots&\vdots&\vdots&\vdots\\
	\alpha_{i_{1}}^{k-1}&\cdots&\alpha_{i_{k-2}}^{k-1}&b_{1}^{k-1}&b_{3}^{k-1}
	\end{array}
	\right|\\
	&+\lambda_{1}\left|
	\begin{array}{ccccc}
	1&\cdots&1&1&1\\
	\alpha_{i_{1}}&\cdots&\alpha_{i_{k-2}}&b_{2}&b_{3}
	\\
	\vdots&\vdots&\vdots&\vdots&\vdots\\
	\alpha_{i_{1}}^{k-1}&\cdots&\alpha_{i_{k-2}}^{k-1}&b_{2}^{k-1}&b_{3}^{k-1}
	\end{array}
	\right|\\
	&=\left(  \Phi_{J}(b_{2},b_{1})-\lambda_{2}\Phi_{J}(b_{3},b_{1})+\lambda_{1}\Phi_{J}(b_{3},b_{2})   \right)\cdot \prod\limits_{1\leq j_{1}<j_{2}\leq k-2}(\alpha_{i_{j_{2}}}-\alpha_{i_{j_{1}}}),
	\end{aligned}
	\end{equation*}
	where $\Phi_{J}(x,y)=(x-y)\prod\limits_{j\in \mathcal{J}}(x-\alpha_{j})(y-\alpha_{j})$. Thus, $\det(B)\neq 0$ if and only if $\Phi_{J}(b_{2},b_{1})-\lambda_{2}\Phi_{J}(b_{3},b_{1})+\lambda_{1}\Phi_{J}(b_{3},b_{2})\neq 0$. This completes the proof.
\end{proof}

The following example gives an MDS code $\operatorname{CTRS}(\mathcal{A},\boldsymbol{B},\boldsymbol{\lambda})$.
\begin{example}
    Let $n=11,k=5$ and $\mathcal{A}=\{1,2,3,5,6,7,9,12,15\}\subseteq\mathbb{F}_{17}$. Choose $b_{1}=11,b_2=b_3=14,\lambda_{1}=0,\lambda_{2}=9$. 
    A direct computation verifies that $\prod\limits_{i\in \mathcal{I}}(b_{t}-\alpha_{i})\neq\lambda_{t}\cdot 	\prod\limits_{i\in \mathcal{I}}(b_{3}-\alpha_{i}),t=1,2$ for any $4$-subset $\mathcal{I}\subseteq [9]$ and $\Phi_{J}(b_{2},b_{1})-\lambda_{2}\Phi_{J}(b_{3},b_{1})+\lambda_{1}\Phi_{J}(b_{3},b_{2})\neq 0$ for any $3$-subset $\mathcal{J}\subseteq [9]$ by using Magma. So by Theorem~\ref{Theorem1}, the code $\operatorname{CTRS}(\mathcal{A},\boldsymbol{B},\boldsymbol{\lambda})$ is an MDS code with parameters $[11,5,7]$ over $\mathbb{F}_{17}$. %by using Magma.
\end{example}

Next, we present some explicit constructions for the code $\operatorname{CTRS}(\mathcal{A},\boldsymbol{B},\boldsymbol{\lambda})$, which are MDS and non-RS MDS codes.

\begin{theorem}\label{The:3.2}
    Let $\mathbb{F}_{q_{0}}\subsetneq\mathbb{F}_{q_{1}}\subsetneq\mathbb{F}_{q}$ be a chain of finite fields. Let $\mathcal{A}=\left\{\alpha_{1},\cdots,\alpha_{n-2}\right\}\subseteq \mathbb{F}_{q_{0}}$ and $\boldsymbol{B}=\left(b_{1},b_{2},b_{3}\right)\in\mathbb{F}_{q_{0}}^3$ be such that $b_{1},b_{2},b_{3}$ are not all the same and $\alpha_{i}\neq b_{j}$ for all $1\leq i\leq n-2,1\leq j\leq 3$. Let $\boldsymbol{\lambda}=(\lambda_{1},\lambda_{2})\in\mathbb{F}_{q}^2$ be such that $\lambda_{1}\in \mathbb{F}_{q_{1}}^{*}\backslash \mathbb{F}_{q_{0}}^*,\lambda_{2}\in \mathbb{F}_{q}^{*}\backslash \mathbb{F}_{q_{1}}^*$. Then
    \begin{description}
        \item[(i)] The code $\operatorname{CTRS}(\mathcal{A},\boldsymbol{B},\boldsymbol{\lambda})$ is an MDS code.
        \item[(ii)] If $4\leq k\leq\frac{n-1}{2}$ and $b_{1},b_{2},b_{3}$ are pairwise distinct, then the Schur square code of $\operatorname{CTRS}(\mathcal{A},\boldsymbol{B},\boldsymbol{\lambda})$ has dimension $2k+1$. Thus, the code $\operatorname{CTRS}(\mathcal{A},\boldsymbol{B},\boldsymbol{\lambda})$ is a non-RS MDS code.
    \end{description}
\end{theorem}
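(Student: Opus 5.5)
We will derive part (i) from the criterion of Theorem~\ref{Theorem1} using only field-membership arguments. For part (ii), we will sandwich $\dim \operatorname{CTRS}(\mathcal{A},\boldsymbol{B},\boldsymbol{\lambda})^2$ between $2k+1$ and $2k+1$ by projecting onto the first $n-2$ coordinates.

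\emph{Part (i).} All $\alpha_i$ and $b_j$ lie in $\mathbb{F}_{q_0}$, so every product $P_t(\mathcal{I})=\prod_{i\in\mathcal{I}}(b_t-\alpha_i)$ lies in $\mathbb{F}_{q_0}$. It is nonzero because $\alpha_i\neq b_j$. If condition (i) of Theorem~\ref{Theorem1} failed, we would have $\lambda_t=P_t(\mathcal{I})/P_3(\mathcal{I})\in\mathbb{F}_{q_0}$. This is impossible, since $\lambda_1\notin\mathbb{F}_{q_0}$ and $\lambda_2\notin\mathbb{F}_{q_1}\supseteq\mathbb{F}_{q_0}$.

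For condition (ii), write $x=\Phi_J(b_2,b_1)$, $y=\Phi_J(b_3,b_2)$ and $z=\Phi_J(b_3,b_1)$, all in $\mathbb{F}_{q_0}$, and suppose $x+\lambda_1 y-\lambda_2 z=0$.
\begin{itemize}
\item If $z\neq 0$, then $\lambda_2=(x+\lambda_1y)/z\in\mathbb{F}_{q_1}$, a contradiction. Hence $z=0$.
\item Then $x+\lambda_1 y=0$. If $y\neq 0$, then $\lambda_1\in\mathbb{F}_{q_0}$, a contradiction. Hence $y=0$, and so $x=0$.
\end{itemize}
Each $\Phi_J(b_s,b_r)$ equals $(b_s-b_r)$ times a product of nonzero factors $(b_s-\alpha_j)(b_r-\alpha_j)$. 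So $x=y=z=0$ would force $b_1=b_2=b_3$, contradicting the hypothesis. Theorem~\ref{Theorem1} then gives the MDS property.

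\emph{Part (ii), upper bound.} Write $\mathcal{C}=\operatorname{CTRS}(\mathcal{A},\boldsymbol{B},\boldsymbol{\lambda})$. Its Schur square is spanned by the products $c_{i,j}=c_i*c_j$ for $0\le i,j\le k-1$, where $c_i$ denotes the row of $G_{\operatorname{CTRS}}$ coming from $x^i$. On the first $n-2$ coordinates, $c_{i,j}$ equals $(\alpha_1^{i+j},\dots,\alpha_{n-2}^{i+j})$. Hence the projection of $\mathcal{C}^2$ onto those coordinates is $RS_{2k-1}(\mathcal{A})$. This code has dimension $2k-1$, because $n-2\ge 2k-1$ follows from $k\le\frac{n-1}{2}$. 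The kernel of the projection is supported on the last two coordinates, so it has dimension at most $2$, and therefore $\dim\mathcal{C}^2\le 2k+1$.

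\emph{Part (ii), lower bound.} We exhibit two independent kernel vectors as differences $c_{i,j}-c_{i',j'}$ with $i+j=i'+j'$. At coordinate $n-1$, the entry of $c_{i,j}$ is
\[
b_1^{i+j}-\lambda_1\bigl(b_1^ib_3^j+b_3^ib_1^j\bigr)+\lambda_1^2b_3^{i+j}.
\]
Coordinate $n$ is analogous, with $b_2,\lambda_2$ in place of $b_1,\lambda_1$.
\begin{itemize}
\item The difference $u=c_{0,2}-c_{1,1}$ vanishes on the first $n-2$ coordinates and ends in $\bigl(-\lambda_1(b_1-b_3)^2,\,-\lambda_2(b_2-b_3)^2\bigr)$.
\item The difference $v=c_{0,3}-c_{1,2}$ ends in $\bigl(-\lambda_1(b_1+b_3)(b_1-b_3)^2,\,-\lambda_2(b_2+b_3)(b_2-b_3)^2\bigr)$. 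This vector requires $3\le k-1$, which is exactly where the hypothesis $k\ge 4$ is used.
\end{itemize}
The $2\times2$ determinant formed by the tails of $u$ and $v$ is
\[
\lambda_1\lambda_2(b_1-b_3)^2(b_2-b_3)^2(b_2-b_1).
\]
It is nonzero because the $\lambda_t$ are nonzero and the $b_t$ are pairwise distinct. Thus $\dim\mathcal{C}^2=2k+1$.

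Since $k\le\frac{n}{2}$ and $2k+1\neq 2k-1$, the Remark following Proposition~\ref{distinguisher:RS} shows that $\mathcal{C}$ is a non-RS MDS code. The main obstacle is the lower bound: one must pick two degree-collisions whose last-two-coordinate tails are independent. The choice above reduces this to the factor $b_2-b_1$, which is why pairwise distinctness of the $b_t$ is needed.
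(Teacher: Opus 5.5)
Your proposal is correct and takes essentially the same approach as the paper. Part (i) uses the same field-membership argument on the two conditions of Theorem~\ref{Theorem1}; your peeling off of $\lambda_2$ and then $\lambda_1$ just replaces the paper's case split on which $b_t$ coincide. Part (ii) uses the paper's two degree-collision differences with tail determinant $\lambda_1\lambda_2(b_1-b_3)^2(b_2-b_3)^2(b_2-b_1)$, and your projection/rank--nullity count $2k-1+2$ is a cleaner presentation of the paper's row reduction to block form.
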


\begin{proof}
	(i) First, we prove that the code $\operatorname{CTRS}(\mathcal{A},\boldsymbol{B},\boldsymbol{\lambda})$ is MDS. For any $k-1$-subset $I\subseteq [n-2]$, we know
	\begin{equation*}
	\frac{\prod\limits_{i\in I}(b_{t}-\alpha_{i})}{\prod\limits_{i\in I}(b_{3}-\alpha_{i})}\in \mathbb{F}_{q_{0}}^{*}\ \mbox{for}\ t=1,2,\lambda_{1}\in \mathbb{F}_{q_1}^*\backslash \mathbb{F}_{q_{0}}^*\ \mbox{and}\ \lambda_{2}\in \mathbb{F}_{q}^{*}\backslash \mathbb{F}_{q_{1}}^*.
	\end{equation*}
	Thus, for any $k-1$-subset $\mathcal{I}\subseteq [n-2]$, we have\begin{equation*}
	\prod\limits_{i\in \mathcal{I}}(b_{t}-\alpha_{i})\neq\lambda_{t}\cdot\prod\limits_{i\in \mathcal{I}}(b_{3}-\alpha_{i}),t=1,2.
	\end{equation*} 
	Next, suppose that there is a subset $\mathcal{J}=\left\{i_{1},\cdots,i_{k-2}\right\}\subseteq [n-2]$ such that $$\Phi_{J}(b_{2},b_{1})-\lambda_{2}\Phi_{J}(b_{3},b_{1})+\lambda_{1}\Phi_{J}(b_{3},b_2)=0.$$
	If $b_{1}=b_{2}$, then $b_{3}\neq b_{t},\Phi_{J}(b_{3},b_{t})\neq 0,t=1,2$ and $\Phi_{J}(b_{2},b_{1})=0$. Thus,
	\begin{equation*}
	\lambda_{1}\frac{\Phi_{J}(b_{3},b_{2})}{\Phi_{J}(b_{3},b_{1})}\in \mathbb{F}_{q_{1}}^{*},\ \lambda_{2}\in \mathbb{F}_{q}^{*}\backslash \mathbb{F}_{q_{1}}^*,
	\end{equation*}
	which contradicts $\lambda_{2}=\lambda_{1}\frac{\Phi_{J}(b_{3},b_{2})}{\Phi_{J}(b_{3},b_{1})}$. The following two situations are similar: $(1)$\ $b_{1}\neq b_{2},b_{3}=b_{t},t=1,2$; $(2)$\ $b_{1},b_{2},b_{3}$ are pairwise distinct . Thus, for any $k-2$-subset $\mathcal{J}\subseteq [n-2]$, we have
	\begin{equation*}
	\begin{aligned}
	\Phi_{J}(b_{2},b_{1})-\lambda_{2}\Phi_{J}(b_{3},b_{1})+\lambda_{1}\Phi_{J}(b_{3},b_{2})\neq 0,
	\end{aligned}		
	\end{equation*}
	From Theorem~\ref{Theorem1}, we know that $\operatorname{CTRS}(\mathcal{A},\boldsymbol{B},\boldsymbol{\lambda})$ is MDS.
	
	(ii) To prove that $\mathcal{C}:=\operatorname{CTRS}(\mathcal{A},\boldsymbol{B},\boldsymbol{\lambda})$ is a non-RS code, we compute the dimension of the Schur square $\mathcal{C}^2$ of $\mathcal{C}$. By definition, $\mathcal{C}^2$ is generated by the following rows
	\begin{equation*}
	\boldsymbol{g}_{i,j}:=\left(\alpha_{1}^{i+j},\cdots,\alpha_{n-2}^{i+j},(b_{1}^i-\lambda_{1}b_{3}^i)(b_{1}^j-\lambda_{1}b_{3}^j),(b_{2}^i-\lambda_{2}b_{3}^i)(b_{2}^j-\lambda_{2}b_{3}^j)\right),
	\end{equation*}
    for $0\leq i,j\leq k-1$. Choosing $(i,j)=(0,2),(1,1)$ respectively, we obtain two rows in $\mathcal{C}^2$:
	\begin{equation*}
	\boldsymbol{g}_{0,2}=\left(\alpha_{1}^{2},\cdots,\alpha_{n-2}^{2},(1-\lambda_{1})(b_{1}^2-\lambda_{1}b_{3}^2),(1-\lambda_{2})(b_{2}^2-\lambda_{2}b_{3}^2)\right)
	\end{equation*}
	and \begin{equation*}
	\boldsymbol{g}_{1,1}=\left(\alpha_{1}^{2},\cdots,\alpha_{n-2}^{2},(b_{1}-\lambda_{1}b_{3})(b_{1}-\lambda_{1}b_{3}),(b_{2}-\lambda_{2}b_{3})(b_{2}-\lambda_{2}b_{3})\right).
	\end{equation*}
	The difference between the above two rows is 
	\begin{equation*}
	-\boldsymbol{g}_{0,2}+\boldsymbol{g}_{1,1}=(0,\cdots,0,\lambda_{1}(b_{1}-b_{3})^2,\lambda_{2}(b_{2}-b_{3})^2).
	\end{equation*}
	Similarly, Choosing $(i,j)=(0,3),(1,2)$, we have
	\begin{equation*}
	\begin{aligned}
	-\boldsymbol{g}_{0,3}+\boldsymbol{g}_{1,2}&=-\left(\alpha_{1}^{3},\cdots,\alpha_{n-2}^{3},(1-\lambda_{1})(b_{1}^3-\lambda_{1}b_{3}^3),(1-\lambda_{2})(b_{2}^3-\lambda_{2}b_{3}^3)\right)\\
	&+\left(\alpha_{1}^{3},\cdots,\alpha_{n-2}^{3},(b_{1}-\lambda_{1}b_{3})(b_{1}^2-\lambda_{1}b_{3}^2),(b_{2}-\lambda_{2}b_{3})(b_{2}^2-\lambda_{2}b_{3}^2)\right)\\
	&=(0,\cdots,0,\lambda_{1}(b_{1}-b_{3})^2(b_{3}+b_{1}),\lambda_{2}(b_{2}-b_{3})^2(b_{2}+b_{3})).
	\end{aligned}	
	\end{equation*}
	Since $$\left| \begin{array}{cc}
	\lambda_{1}(b_{1}-b_{3})^2&\lambda_{2}(b_{2}-b_{3})^2\\
	\lambda_{1}(b_{1}-b_{3})^2(b_{1}+b_{3})&\lambda_{2}(b_{2}-b_{3})^2(b_{2}+b_{3})
	\end{array}\right|=\lambda_{1}\lambda_{2}(b_{1}-b_{3})^2(b_{2}-b_{3})^2(b_{2}-b_{1})\neq 0,$$ $(0,\cdots,0,\beta_{1},\beta_{2})\in \mathbb{F}_{q}^n$ can be linearly expressed by $\boldsymbol{g}_{1,1}-\boldsymbol{g}_{0,2}$ and $\boldsymbol{g}_{1,2}-\boldsymbol{g}_{0,3}$, for any $\beta_{1},\beta_{2}\in \mathbb{F}_{q}$. Thus, for any $(i_{1},j_{1}),(i_{2},j_{2})\in [0,k-1]\times [0,k-1]$ that satisfies $i_{1}+j_{1}=i_{2}+j_{2}$, we find that $\boldsymbol{g}_{i_1,j_1}-\boldsymbol{g}_{i_2,j_2}$ can be linearly expressed by $\boldsymbol{g}_{1,1}-\boldsymbol{g}_{0,2}$ and $\boldsymbol{g}_{1,2}-\boldsymbol{g}_{0,3}$. 
	
	Therefore, $\mathcal{C}^2$ is generated by the following matrix
	\small{\begin{equation*}
	\begin{pmatrix}
	1&\cdots&1&(1-\lambda_{1})^2&(1-\lambda_{2})^2\\
	\alpha_{1}&\cdots&\alpha_{n-2}&(1-\lambda_{1})(b_1-\lambda_{1}b_{3})&(1-\lambda_{2})(b_2-\lambda_{2}b_{3})\\
	\alpha_{1}^2&\cdots&\alpha_{n-2}^2&(1-\lambda_{1})(b_1^2-\lambda_{1}b_{3}^2)&(1-\lambda_{2})(b_2^2-\lambda_{2}b_{3}^2)\\
	\alpha_{1}^3&\cdots&\alpha_{n-2}^3&(1-\lambda_{1})(b_1^3-\lambda_{1}b_{3}^3)&(1-\lambda_{2})(b_2^3-\lambda_{2}b_{3}^3)\\
	\vdots&\vdots&\vdots&\vdots&\vdots\\
     \alpha_{1}^{k-1}&\cdots&\alpha_{n-2}^{k-1}&(1-\lambda_{1})(b_{1}^{k-1}-\lambda_{1}b_{3}^{k-1}&(1-\lambda_{2})(b_{2}^{k-1}-\lambda_{2}b_{3}^{k-1})\\
     \alpha_{1}^{k}&\cdots&\alpha_{n-2}^{k}&(b_1-\lambda_{1}b_3)(b_{1}^{k-1}-\lambda_{1}b_{3}^{k-1})&(b_2-\lambda_{2}b_3)(b_{2}^{k-1}-\lambda_{2}b_{3}^{k-1})\\
     \vdots&\cdots&\vdots&\vdots&\vdots\\
	\alpha_{1}^{2k-2}&\cdots&\alpha_{n-2}^{2k-2}&(b_{1}^{k-1}-\lambda_{1}b_{3}^{k-1})(b_1^{k-1}-\lambda_{1}b_{3}^{k-1})&(b_{2}^{k-1}-\lambda_{2}b_{3}^{k-1})(b_2^{k-1}-\lambda_{2}b_{3}^{k-1})\\
	\alpha_{1}^2&\cdots&\alpha_{n-2}^2&(b_{1}-\lambda_{1}b_{3})(b_1-\lambda_{1}b_{3})&(b_{2}-\lambda_{2}b_{3})(b_2-\lambda_{2}b_{3})\\
	\alpha_{1}^3&\cdots&\alpha_{n-2}^3&(b_{1}-\lambda_{1}b_{3})(b_1^2-\lambda_{1}b_{3}^2)&(b_{2}-\lambda_{2}b_{3})(b_2^2-\lambda_{2}b_{3}^2)\\
    \alpha_{1}^4&\cdots&\alpha_{n-2}^4&(b_{1}-\lambda_{1}b_{3})(b_1^3-\lambda_{1}b_{3}^3)&(b_{2}-\lambda_{2}b_{3})(b_2^3-\lambda_{2}b_{3}^3)\\
    \alpha_{1}^4&\cdots&\alpha_{n-2}^4&(b_{1}^2-\lambda_{1}b_{3}^2)(b_1^2-\lambda_{1}b_{3}^2)&(b_{2}^2-\lambda_{2}b_{3}^2)(b_2^2-\lambda_{2}b_{3}^2)\\
	\vdots&\vdots&\vdots&\vdots&\vdots\\
    \alpha_{1}^{2k-4}&\cdots&\alpha_{n-2}^{2k-4}&(b_{1}^{k-2}-\lambda_{1}b_{3}^{k-2})(b_1^{k-2}-\lambda_{1}b_{3}^{k-2})&(b_{2}^{k-2}-\lambda_{2}b_{3}^{k-2})(b_2^{k-2}-\lambda_{2}b_{3}^{k-2})\\
	\end{pmatrix},
	\end{equation*}}
	which is row-equivalent to
	\small{\begin{equation}\label{Equ:CTRS,C^2,Schur}
	\begin{pmatrix}
	1&\cdots&1&0&0\\
	\alpha_{1}&\cdots&\alpha_{n-2}&0&0\\
	\alpha_{1}^2&\cdots&\alpha_{n-2}^2&0&0\\
	\alpha_{1}^3&\cdots&\alpha_{n-2}^3&0&0\\
	\vdots&\vdots&\vdots&\vdots&\vdots\\
	\alpha_{1}^{2k-2}&\cdots&\alpha_{n-2}^{2k-2}&0&0\\
	0&\cdots&0&\lambda_{1}(b_{1}-b_{3})^2&\lambda_{2}(b_{2}-b_{3})^2\\
	0&\cdots&0&\lambda_{1}(b_{1}-b_{3})^2(b_{1}+b_{3})&\lambda_{2}(b_{2}-b_{3})^2(b_{2}+b_{3})\\
	\end{pmatrix}.
	\end{equation}}
    Since $2k-1\leq n-2$ and
    \begin{equation*}
    \begin{aligned}
        & \det\begin{pmatrix}
            1&\cdots&1&0&0\\
            \vdots&\vdots&\vdots&\vdots&\vdots\\
            \alpha_1^{2k-2}&\cdots&\alpha_{2k-1}^{2k-2}&0&0\\
            0&\cdots&0&\lambda_{1}(b_{1}-b_{3})^2&\lambda_{2}(b_{2}-b_{3})^2\\
            0&\cdots&0&\lambda_{1}(b_{1}-b_{3})^2(b_{1}+b_{3})&\lambda_{2}(b_{2}-b_{3})^2(b_{2}+b_{3})
        \end{pmatrix}\\
        =&\lambda_{1}\lambda_{2}(b_{1}-b_{3})^2(b_{2}-b_{3})^2(b_{2}-b_{1})\cdot V(\boldsymbol{\alpha}_{[2k-1]})\neq 0,
    \end{aligned}
    \end{equation*}
    we know that the rank of matrix~\eqref{Equ:CTRS,C^2,Schur} is $2k+1$.
	 As a result, the code $\operatorname{CTRS}(\mathcal{A},\boldsymbol{B},\boldsymbol{\lambda})$ has Schur square dimension $2k+1$, and therefore is a non-RS MDS code by Proposition~\ref{distinguisher:RS}.
\end{proof}

\begin{remark}
    Since the dimension of  Schur square of the code $\operatorname{CTRS}(\mathcal{A},\boldsymbol{B},\boldsymbol{\lambda})$ constructed here is $2k+1$, which differs from that of the $\operatorname{RCTRS}$ codes constructed in~\cite{bhagat2025row}, we obtain new non-RS MDS codes distinct from those in~\cite{bhagat2025row}.
\end{remark}

\begin{example}
    Consider the chain of finite fields $\mathbb{F}_{13}\subseteq\mathbb{F}_{13^2}\subseteq\mathbb{F}_{13^4}$. Let $n=12,k=4,\alpha_{i}=i-1\in\mathbb{F}_{13}$ for all $1\leq i\leq 10$ and $b_{1}=10,b_{2}=11,b_{3}=12\in\mathbb{F}_{13}$. Let $\gamma$ be a primitive element of $\mathbb{F}_{13^4}$ and $\lambda_{1}=\gamma^{169},\lambda_{2}=\gamma$. For $\mathcal{A}=\left\{\alpha_{1},\cdots,\alpha_{10}\right\},\boldsymbol{B}=(b_{1},b_{2},b_{3}),\boldsymbol{\lambda}=(\lambda_{1},\lambda_{2})$, the codes $\operatorname{CTRS}(\mathcal{A},\boldsymbol{B},\boldsymbol{\lambda})$ are MDS codes with parameters $[12,4,9]$ over $\mathbb{F}_{13^4}$. In addition, the Schur square code of $\operatorname{CTRS}(\mathcal{A},\boldsymbol{B},\boldsymbol{\lambda})$ has dimension $9\neq 7$. Thus, the codes $\operatorname{CTRS}(\mathcal{A},\boldsymbol{B},\boldsymbol{\lambda})$ are non-RS MDS codes.
\end{example}

Indeed, the multiplicative subgroup $\mathbb{F}_{q_{0}}^*$ used in Theorem~\ref{The:3.2}(ii) can be generalized to any multiplicative subgroup of $\mathbb{F}_{q_{1}}^*$. Taking  proper multiplicative subgroups, one can obtain non-RS  MDS codes $\operatorname{CTRS}(\mathcal{A},\boldsymbol{B},\boldsymbol{\lambda})$ with longer lengths compared to those in Theorem~\ref{The:3.2}(ii).

%Since the cardinality of the largest multiplicative subgroup of a field is larger than that of its maximal subfield, the next theorem gives explicit constructions of non-RS  MDS codes $CTRS(\mathcal{A},\boldsymbol{B},\boldsymbol{\lambda})$ with longer lengths compared to Theorem~\ref{The:3.2}.
%For example, suppose that $\mathbb{F}_{q_0}\subseteq \mathbb F_q$ and $q_0=p^m$  is an odd prime power. Then the code constructed in Theorem~\ref{The:3.3} can have length $\frac{p^m+1}{2}$, while the code constructed in Theorem~\ref{The:3.2} has length at most $p^s+2$, which is much smaller than  $\frac{p^m+1}{2}$, where $s$ is a divisor of $m$.

\begin{theorem}
    \label{The:3.3}
    Let $\mathbb{F}_{q_{1}}$ be a subfield of $\mathbb{F}_{q}$ and $H$ be a subgroup of $\mathbb{F}_{q_{1}}^{*}$ of order $n-1$. Let $\boldsymbol{B}=\left(b_{1},b_{2},b_{3}\right)\in\mathbb{F}_{q_{1}}^3,\boldsymbol{\lambda}=(\lambda_{1},\lambda_{2})$ and $H=\left\{1,\mu_{1},\cdots,\mu_{n-2}\right\}$ with $\mu_i\neq 1$ for all $1\leq i\leq n-2$, where $b_{1},b_{2},b_{3}$ are pairwise distinct, $\lambda_{1}\in\mathbb{F}_{q_{1}}^{*}\backslash H$ and $\lambda_{2}\in\mathbb{F}_{q}^{*}\backslash\mathbb{F}_{q_{1}}^*$. Let $\alpha_{i}=\frac{b_{3}\mu_{i}-b_{1}}{\mu_{i}-1}$ for all $1\leq i\leq n-2$. If $4\leq k\leq\frac{n-1}{2}$, then  the code $\operatorname{CTRS}(\mathcal{A},\boldsymbol{B},\boldsymbol{\lambda})$ is a non-RS MDS code.
   
\end{theorem}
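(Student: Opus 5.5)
The plan is to check the two conditions of Theorem~\ref{Theorem1} directly, using the special form of the $\alpha_i$, and then to reuse the Schur-square computation from the proof of Theorem~\ref{The:3.2}(ii) unchanged.

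First, some preliminary facts. The map $\mu\mapsto \frac{b_3\mu-b_1}{\mu-1}$ is a Möbius transformation with determinant $-b_3+b_1\neq 0$, since $b_1\neq b_3$. It is therefore injective, so the $\alpha_i$ are pairwise distinct. They also lie in $\mathbb{F}_{q_1}$, because $b_1,b_3,\mu_i\in\mathbb{F}_{q_1}$. A direct computation gives
\begin{equation*}
b_1-\alpha_i=\frac{\mu_i(b_1-b_3)}{\mu_i-1},\qquad b_3-\alpha_i=\frac{b_1-b_3}{\mu_i-1}.
\end{equation*}
Both are nonzero, so $\alpha_i\neq b_1,b_3$, and
\begin{equation*}
\frac{b_1-\alpha_i}{b_3-\alpha_i}=\mu_i\in H .
\end{equation*}

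Condition (i) of Theorem~\ref{Theorem1} follows. For $t=1$ and any $(k-1)$-subset $\mathcal{I}$, the ratio $\prod_{i\in\mathcal I}(b_1-\alpha_i)/\prod_{i\in\mathcal I}(b_3-\alpha_i)=\prod_{i\in\mathcal I}\mu_i$ lies in the group $H$. Since $\lambda_1\notin H$, the required inequality holds. For $t=2$, the left-hand side $\prod_{i\in\mathcal I}(b_2-\alpha_i)$ lies in $\mathbb{F}_{q_1}$, while $\lambda_2\prod_{i\in\mathcal I}(b_3-\alpha_i)$ is $\lambda_2$ times a nonzero element of $\mathbb{F}_{q_1}$. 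That product is not in $\mathbb{F}_{q_1}$ because $\lambda_2\notin\mathbb{F}_{q_1}$, so the two sides differ; this also covers the case where the left-hand side vanishes.

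For condition (ii), suppose equality holds for some $(k-2)$-subset $\mathcal J$. Then $\Phi_J(b_3,b_1)\neq0$, since $b_3\ne b_1$ and neither equals any $\alpha_j$. We can therefore solve
\begin{equation*}
\lambda_2=\frac{\Phi_J(b_2,b_1)+\lambda_1\Phi_J(b_3,b_2)}{\Phi_J(b_3,b_1)}.
\end{equation*}
Every quantity on the right lies in $\mathbb{F}_{q_1}$, so $\lambda_2\in\mathbb{F}_{q_1}$, which is a contradiction. Note that $b_2$ may coincide with some $\alpha_j$ without harm, since only $\Phi_J(b_3,b_1)\neq 0$ is needed. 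By Theorem~\ref{Theorem1}, the code is MDS.

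For the non-RS part, the proof of Theorem~\ref{The:3.2}(ii) used only four facts: $\lambda_1\lambda_2\neq 0$, $b_1,b_2,b_3$ pairwise distinct, $\alpha_i\neq b_j$ is not needed there, and $2k-1\le n-2$. The last holds because $k\le\frac{n-1}{2}$. Under these facts, $\mathcal C^2$ is row-equivalent to the matrix~\eqref{Equ:CTRS,C^2,Schur}, whose rank is $2k+1$ by the same Vandermonde-times-$2\times2$ determinant
\begin{equation*}
\lambda_1\lambda_2(b_1-b_3)^2(b_2-b_3)^2(b_2-b_1)\neq0 .
\end{equation*}
Since $2k+1\neq 2k-1$, Proposition~\ref{distinguisher:RS} shows that the code is non-RS. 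The only genuinely new step is the Möbius identity that makes the ratio equal to $\mu_i$; I expect the rest to be routine.
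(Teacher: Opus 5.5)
Your proposal is correct and follows the paper's proof essentially step for step. The shared steps are:
\begin{itemize}
\item the identity $(b_1-\alpha_i)/(b_3-\alpha_i)=\mu_i\in H$ for condition (i) with $t=1$;
\item the $\mathbb{F}_{q_1}$-rationality argument against $\lambda_2$, used both for $t=2$ and for condition (ii) after dividing by $\Phi_{J}(b_3,b_1)\neq 0$;
\item the reuse of the Schur-square computation from Theorem~\ref{The:3.2}(ii).
\end{itemize}
You also check two points the paper leaves implicit: that the $\alpha_i$ are pairwise distinct (by injectivity of the M\"obius map) and that they lie in $\mathbb{F}_{q_1}$.
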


\begin{proof}
	 Since $b_3\mu_i-b_1\neq b_3(\mu_i-1)$ and $b_3\mu_i-b_1\neq b_1(\mu_i-1)$, we know that $\alpha_{i}\neq b_j$ for all $1\leq i\leq n-2$ and $j=1,3$.
    We first prove that the code $\operatorname{CTRS}(\mathcal{A},\boldsymbol{B},\boldsymbol{\lambda})$ is MDS.
For any $k-1$-subset $\mathcal{I}\subseteq [n-2]$, we know that
\[
\frac{\prod\limits_{i\in I}(b_{1}-\alpha_{i})}{\prod\limits_{i\in I}(b_{3}-\alpha_{i})}=\prod\limits_{i\in I}\frac{b_{1}-\alpha_{i}}{b_{3}-\alpha_{i}}=\prod\limits_{i\in I}\mu_{i}\in H,\ \ \lambda_{1}\in \mathbb{F}_{q_{1}}^{*}\backslash H
\]
and 
\[
 \frac{\prod\limits_{i\in I}(b_{2}-\alpha_{i})}{\prod\limits_{i\in I}(b_{3}-\alpha_{i})}\in \mathbb{F}_{q_1},\ \ \lambda_{2}\in\mathbb{F}_{q}^{*}\backslash\mathbb{F}_{q_{1}}^*,
\]
which means 
$$\prod\limits_{i\in\mathcal{I}}(b_{t}-\alpha_{i})\neq\lambda_{t}\prod\limits_{i\in\mathcal{I}}(b_{3}-\alpha_{i})\quad\mbox{for }\quad t=1,2.$$

    Next, suppose that there is a subset $\mathcal{J}=\left\{i_{1},\cdots,i_{k-2}\right\}\subseteq [n-2]$ such that $$\Phi_{J}(b_{2},b_{1})-\lambda_{2}\Phi_{J}(b_{3},b_{1})+\lambda_{1}\Phi_{J}(b_{3},b_2)=0.$$
%If there exists $1\leq i\leq n-2$ such that $\alpha_{i}=b_{2}$, then $\Phi_{\mathcal{J}}(b_2,b_1)=\Phi_{\mathcal{J}}(b_3,b_2)=0$. Thus, $\lambda_{2}\Phi_{\mathcal{J}}(b_3,b_1)=0$, which contradicts $\lambda_{2}\Phi_{\mathcal{J}}(b_3,b_1)\neq 0$. Therefore, let $\alpha_{i}\neq b_2$ for all $1\leq i\leq n-2$. 

Since $b_{1},b_{2},b_{3}$ are pairwise distinct, then $\Phi_{J}(b_{3},b_{1})\neq 0$. In addition, $$\frac{\Phi_{J}(b_{2},b_{1})}{\Phi_{J}(b_{3},b_{1})}+\lambda_{1}\frac{\Phi_{J}(b_{3},b_{2})}{\Phi_{J}(b_{3},b_{1})}\in\mathbb{F}_{q_{1}},\ \ \lambda_{2}\in \mathbb{F}_{q}^{*}\backslash \mathbb{F}_{q_{1}}^*,$$
   which  contradicts $\lambda_{2}=\frac{\Phi_{J}(b_{2},b_{1})}{\Phi_{J}(b_{3},b_{1})}+\lambda_{1}\frac{\Phi_{J}(b_{3},b_{2})}{\Phi_{J}(b_{3},b_{1})}$.
	  Thus, for any $k-2$-subset $\mathcal{J}\subseteq [n-2]$, we have
	\begin{equation*}
	\begin{aligned}
	\Phi_{J}(b_{2},b_{1})-\lambda_{2}\Phi_{J}(b_{3},b_{1})+\lambda_{1}\Phi_{J}(b_{3},b_{2})\neq 0.
	\end{aligned}		
	\end{equation*}
	From Theorem~\ref{Theorem1}, we know that $\operatorname{CTRS}(\mathcal{A},\boldsymbol{B},\boldsymbol{\lambda})$ is MDS. 
    
    For calculating the dimension of the Schur square of $\operatorname{CTRS}(\mathcal{A},\boldsymbol{B},\boldsymbol{\lambda})$, similar to Theorem~\ref{The:3.2}, the Schur square code of $\operatorname{CTRS}(\mathcal{A},\boldsymbol{B},\boldsymbol{\lambda})$ has dimension $2k+1$ for $4\leq k\leq\frac{n-1}{2}$. Thus, the code $\operatorname{CTRS}(\mathcal{A},\boldsymbol{B},\boldsymbol{\lambda})$ is a non-RS MDS code by Proposition~\ref{distinguisher:RS}.
    \end{proof}

\section{Row-column twisted Reed-Solomon codes
\texorpdfstring{$\operatorname{RCTRS}_{\ell,\eta}(\mathcal{A},\boldsymbol{B},\boldsymbol{\lambda})$}{CTRS(A,B,lambda)}}\label{sec4}

In this section, we first study the MDS property of row-column twisted Reed-Solomon codes $\operatorname{RCTRS}_{\ell,\eta}(\mathcal{A},\boldsymbol{B},\boldsymbol{\lambda})$ and then explicitly construct non-RS MDS codes $\operatorname{RCTRS}_{\ell,\eta}(\mathcal{A},\boldsymbol{B},\boldsymbol{\lambda})$. 

The following lemma plays an important role in determining the necessary and sufficient conditions for these codes to be MDS codes.

\begin{lemma}[{\cite[Lemma 2.3]{yan2024mutually}}]\label{Lem:4.1}
	Let $m$ be a fixed positive integer and $$I_{m}=\left\{0,1,\cdots,m-1\right\}=\left\{t_{1},\cdots,t_{s}\right\}\bigcup\left\{r_{1},r_{2},\cdots,r_{s^{'}}\right\}$$ be any partition of $I_{m}$ with $m=s+s^{\prime},0=t_{1}<t_2<\cdots<t_{s}=m-1$ and $r_{1}<r_{2}<\cdots<r_{s^{'}}$.
	For any $\mathcal{S}=\left\{a_{1},a_{2},\cdots,a_{s}\right\}\subseteq \mathbb{F}_{q}$, denote by $S_{i}(\mathcal{S})=\sum\limits_{1\leq j_{1}<\cdots<j_{i}\leq s}\prod\limits_{t=1}^{i}a_{j_{t}}$. Then we have the following determinant formula
	\begin{equation*}
	\det \begin{pmatrix}
	a_{1}^{t_{1}}&a_{2}^{t_{1}}&\cdots&a_{s}^{t_{1}}\\
	a_{1}^{t_{2}}&a_{2}^{t_{2}}&\cdots&a_{s}^{t_{2}}\\
	\vdots&\vdots&\vdots&\vdots\\
	a_{1}^{t_{s}}&a_{2}^{t_{s}}&\cdots&a_{s}^{t_{s}}\\
	\end{pmatrix}=\prod\limits_{1\leq i<j\leq s}(a_{j}-a_{i})\cdot\vartriangle,
	\end{equation*}
	where $\vartriangle$ denotes the determinant of the following matrix
	$$\begin{pmatrix}
	S_{s-r_{1}}(\mathcal{S})&S_{s-r_{2}}(\mathcal{S})&\cdots&S_{s-r_{s^{\prime}}}(\mathcal{S})\\
	S_{s-r_{1}+1}(\mathcal{S})&S_{s-r_{2}+1}(\mathcal{S})&\cdots&S_{s-r_{s^{\prime}}+1}(\mathcal{S})\\
	\vdots&\vdots&\vdots&\vdots\\
	S_{s-r_{1}+s^{\prime}-1}(\mathcal{S})&S_{s-r_{2}+s^{\prime}-1}(\mathcal{S})&\cdots&S_{s-r_{s^{\prime}}+s^{\prime}-1}(\mathcal{S})
	\end{pmatrix}.$$
	
\end{lemma}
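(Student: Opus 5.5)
The plan is the classical reduction of a generalized Vandermonde determinant to a determinant of elementary symmetric functions. Let $a_1,\dots,a_s$ be the entries and set
\[
P(x)=\prod_{j=1}^{s}(x-a_j)=\sum_{i=0}^{s}(-1)^{i}S_i(\mathcal{S})\,x^{s-i}.
\]
Consider the full Vandermonde matrix $V_m$ of size $m\times m$ on the variables $a_1,\dots,a_s,y_1,\dots,y_{s'}$, with rows the exponents $0,1,\dots,m-1$. The steps are as follows.

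First, expand $\det V_m$ by Laplace expansion along the first $s$ columns, those indexed by the $a_j$. The term corresponding to the row set $\{t_1,\dots,t_s\}$ is, up to sign, the target determinant $D$ times the generalized Vandermonde minor in the $y$'s with exponents $r_1,\dots,r_{s'}$. The other terms involve different monomial patterns in the $y$'s. Hence $D$ is, up to sign, the coefficient of the Schur-type alternant $\det(y_j^{r_i})$ in $\det V_m$, viewed as a polynomial in the $y$'s.

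Second, compute $\det V_m$ directly:
\[
\det V_m=V(a)\,V(y)\prod_{i,j}(y_j-a_i)=V(a)\,V(y)\prod_{j=1}^{s'}P(y_j).
\]
Expand $\prod_j P(y_j)$ in monomials. Since $V(y)$ is an antisymmetric factor, we need the coefficient of $\det(y_j^{r_i})$ in $V(y)\prod_j P(y_j)$. Write $V(y)=\det(y_j^{i-1})_{i\le s'}$. Multiplying each column $j$ by $P(y_j)$ gives $\det\big(y_j^{i-1}P(y_j)\big)$. Each entry $y^{i-1}P(y)$ is a polynomial whose coefficient on $y^{r}$ is $(-1)^{s+i-1-r}S_{s+i-1-r}$. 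By multilinearity (Cauchy--Binet), the coefficient of $\det(y_j^{r_i})$ is therefore $\det\big((-1)^{\cdot}S_{s-r_c+i-1}\big)_{i,c}$, which is exactly $\vartriangle$ up to a global sign, since signs factor out of rows and columns.

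Third, fix the signs. This is the main nuisance. I would track the sign of the Laplace term, $(-1)^{\sum t_i+\sum(1..s)}$, together with the product of the $(-1)$ factors, and check that they cancel. A sanity check is the case $s'=0$, which gives the standard Vandermonde. Alternatively, following the cited source, the identity can be stated up to sign, which is all the MDS applications need, since only nonvanishing matters. I expect this sign bookkeeping, and verifying that only the intended Laplace term contributes the monomial pattern $\{r_i\}$, to be the only delicate point. The latter holds because complementary row sets give distinct $y$-exponent sets and alternants with distinct strictly increasing exponents are linearly independent.
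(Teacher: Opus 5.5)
The paper does not prove this lemma; it quotes it from \cite{yan2024mutually}, so there is no in-paper argument to match. Your route is the standard proof of the dual Jacobi--Trudi identity: you Laplace-expand the full Vandermonde in $(a_1,\dots,a_s,y_1,\dots,y_{s'})$ along the $a$-columns, and compare with $V(a)\det\bigl(y_j^{i-1}P(y_j)\bigr)$ expanded by Cauchy--Binet. Your first two steps are sound. This includes the point that alternants with different exponent sets have disjoint monomial supports, so they can be compared coefficientwise.

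What is missing is the sign, and it cannot be waved away. The lemma asserts an exact equality, and the paper relies on it. In Theorem~\ref{Thm:4.2} it is used with $s'=1$, $r_1=\ell$ to evaluate $\Delta_2$, $\Delta_4$ and the $\eta$-part of $\det(A)$. These are then added to $(-1)^{k-\ell-1}$ times an ordinary Vandermonde. If the lemma's sign flipped, the criterion $1+(-1)^{k-\ell-1}\eta S_{k-\ell}\neq 0$ would change, so ``up to sign'' is not enough here.

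Moreover, the Laplace sign you wrote, $(-1)^{\sum t_i+\sum(1..s)}$, is off by $(-1)^s$. Rows are numbered from $1$ while exponents start at $0$, so the term for $T$ carries $(-1)^{\sum_i(t_i+1)+\binom{s+1}{2}}$. With your version, $s=3$, $T=\{0,1,3\}$ would give $-V(a)S_1$.

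With the correct sign, the Cauchy--Binet side contributes $(-1)^{\sum_{i=1}^{s'}(s+i-1)+\sum_c r_c}=(-1)^{ss'+\binom{s'}{2}+\sum_c r_c}$. Using $\sum_i t_i+\sum_c r_c=\binom{m}{2}=\binom{s}{2}+\binom{s'}{2}+ss'$, the total exponent is $s^2+s+2\binom{s'}{2}+2ss'$, which is even. Hence $\det(a_j^{t_i})=V(a)\,\vartriangle$ exactly.

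For the paper's needs, the case $s'=1$ is even simpler. Expand $V(a_1,\dots,a_s,y)=V(a)P(y)$ along its last column, and let $D_r$ be the minor with exponents $\{0,\dots,s\}\setminus\{r\}$. This gives $(-1)^{r+s}D_r=(-1)^{s-r}S_{s-r}V(a)$, i.e.\ $D_r=V(a)S_{s-r}$.
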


The following theorem gives necessary and sufficient conditions for these codes to be MDS.

\begin{theorem}\label{Thm:4.2}
	The code $\operatorname{RCTRS}_{\ell,\eta}(\mathcal{A},\boldsymbol{B},\boldsymbol{\lambda})$ is MDS if and only if the following conditions hold:
	\begin{description}
		\item[(i)] For any $k$-subset $\mathcal{L}\subseteq [n-2]$, we have $$1+(-1)^{k-\ell-1}\eta S_{k-\ell}(\boldsymbol{\alpha}_{\mathcal{L}})\neq 0,$$
        where $S_{k-\ell}(\boldsymbol{\alpha}_{\mathcal{L}})=\sum\limits_{L_{1}\subseteq \mathcal{L}\atop \left|L_{1}\right|=k-\ell}\prod\limits_{i\in L_{1}}\alpha_{i}$.
		\item[(ii)] For any $k-1$-subset $\mathcal{I}\subseteq [n-2]$, we have 
$$\Psi_{\mathcal{I},\ell,\eta}^{(1)}(b_{t})\neq \lambda_{t}\Psi^{(1)}_{\mathcal{I},\ell,\eta}(b_{3}),t=1,2,$$
where $\Psi^{(1)}_{\mathcal{I},\ell,\eta}(x)=\prod\limits_{i\in \mathcal{I}}(x-\alpha_{i})\left(1+(-1)^{k-\ell-1}\eta S_{k-\ell}(\alpha_{\mathcal{I}},x)\right)$.
\item[(iii)] For any $k-2$-subset $\mathcal{J}\subseteq [n-2]$, we have
$$\Psi_{\mathcal{J},\ell,\eta}^{(2)}(b_{1},b_{2})-\lambda_{2}\Psi_{\mathcal{J},\ell,\eta}^{(2)}(b_{1},b_{3})+\lambda_{1}\Psi_{\mathcal{J},\ell,\eta}^{(2)}(b_{2},b_{3})\neq 0,$$
       where $\Psi_{\mathcal{J},\ell,\eta}^{(2)}(x,y)=(y-x)\prod\limits_{1\leq j\leq k-2}(y-\alpha_{i_{j}})(x-\alpha_{i_{j}})\cdot\left(1+(-1)^{k-\ell-1}\eta S_{k-\ell}(\alpha_{\mathcal{J}},x,y)\right)$. 
	\end{description}
\end{theorem}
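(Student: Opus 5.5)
As in Theorem~\ref{Theorem1}, I will use the fact that the code is MDS if and only if every $k\times k$ submatrix of $G_{\operatorname{RCTRS}}$ is invertible. A choice of $k$ columns falls into one of three types, according to how many of the last two columns it contains. Type one uses $k$ columns from the first $n-2$ (index set $\mathcal{L}$). Type two uses $k-1$ of the first $n-2$ (index set $\mathcal{I}$) together with column $n-2+t$ for $t\in\{1,2\}$. Type three uses $k-2$ of the first $n-2$ (index set $\mathcal{J}$) together with both last columns. The three conditions (i)--(iii) should correspond exactly to nonvanishing of these three families of determinants.

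The basic computation is the determinant of the TRS-type matrix $M(x_1,\dots,x_k)$, whose columns are evaluations at $x_1,\dots,x_k$ of the basis $x^0,\dots,x^{\ell-1},x^{\ell+1},\dots,x^{k-1},x^\ell+\eta x^k$. After reordering rows, linearity in the last row splits $\det M$ into two pieces. The first is the ordinary Vandermonde determinant $V(x_1,\dots,x_k)$. The second is $\eta$ times a generalized Vandermonde with exponent set $\{0,\dots,k\}\setminus\{\ell\}$. I apply Lemma~\ref{Lem:4.1} with $m=k+1$, $s=k$, $s'=1$, $r_1=\ell$ to the second piece. It gives $V(x_1,\dots,x_k)\,S_{k-\ell}(x_1,\dots,x_k)$, up to the sign obtained by moving the row $x^k$ past the rows of exponents $\ell+1,\dots,k-1$. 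That sign is $(-1)^{k-\ell-1}$, which matches the statement. Hence
\[
\det M(x_1,\dots,x_k)=\pm V(x_1,\dots,x_k)\bigl(1+(-1)^{k-\ell-1}\eta S_{k-\ell}(x_1,\dots,x_k)\bigr),
\]
with a global sign depending only on the row ordering. Type one gives condition (i) immediately, since the $\alpha_i$ are distinct and so $V\neq 0$.

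For types two and three, the column attached to $b_t$ is $M$'s column at $b_t$ minus $\lambda_t$ times $M$'s column at $b_3$. This is because every row is evaluation of a fixed polynomial $g$ at $b_t$ minus $\lambda_t g(b_3)$. Multilinearity in columns then does the rest.

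For type two, the determinant equals $\det M(\boldsymbol{\alpha}_{\mathcal{I}},b_t)-\lambda_t\det M(\boldsymbol{\alpha}_{\mathcal{I}},b_3)$. Factoring out $V(\boldsymbol{\alpha}_{\mathcal{I}})$ leaves $\Psi^{(1)}_{\mathcal{I},\ell,\eta}(b_t)-\lambda_t\Psi^{(1)}_{\mathcal{I},\ell,\eta}(b_3)$, which gives (ii).

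For type three, expanding both last columns gives four terms. The term $\lambda_1\lambda_2\det M(\boldsymbol{\alpha}_{\mathcal{J}},b_3,b_3)$ vanishes because it has two equal columns. The remaining terms are $\det M(\cdot,b_1,b_2)-\lambda_2\det M(\cdot,b_1,b_3)-\lambda_1\det M(\cdot,b_3,b_2)$. Writing the last of these as $+\lambda_1\det M(\cdot,b_2,b_3)$ by swapping columns, and factoring out $V(\boldsymbol{\alpha}_{\mathcal{J}})$, yields exactly the expression in (iii). Here $\Psi^{(2)}(x,y)$ is $V(\boldsymbol{\alpha}_{\mathcal{J}},x,y)/V(\boldsymbol{\alpha}_{\mathcal{J}})$ times the twist factor.

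The main obstacle is sign bookkeeping. One must verify that the global sign and the Vandermonde orientation convention $\prod_{j_1<j_2}(\beta_{j_2}-\beta_{j_1})$ are the same across all terms in each expansion, so that the relative signs in (ii) and (iii) come out as stated. I would fix the row order once, as in $G_{\operatorname{RCTRS}}$, and always place the $\alpha$-columns first. A global sign is then harmless, and only the column swap in type three needs care.
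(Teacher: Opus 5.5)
Your proposal is correct and follows essentially the same route as the paper. Both classify the $k\times k$ minors of $G_{\operatorname{RCTRS}}$ by how many of the last two columns they use, evaluate the twisted generalized Vandermonde determinant via Lemma~\ref{Lem:4.1}, and use linearity to reduce to conditions (i)--(iii). The only difference is organizational: you first derive a closed form for the generic minor $\det M(x_1,\dots,x_k)$ and then use column multilinearity alone, which makes the vanishing $\lambda_1\lambda_2$ term and the column swap explicit. The paper instead expands along the twisted row into the pieces $\Delta_1,\dots,\Delta_4$ before recombining them.
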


\begin{proof}
	We need to show that for any subsets $\mathcal{L}=\left\{i_{1},\cdots,i_{k}\right\}$, $\mathcal{I}=\left\{i_{1},\cdots,i_{k-1}\right\}$ and $\mathcal{J}=\left\{i_{1},\cdots,i_{k-2}\right\}$ of $[n-2]$, the following matrices
	\begin{equation*}
	A=\small{\begin{pmatrix}
		1&\cdots&1\\
		\alpha_{i_{1}}&\cdots&\alpha_{i_{k}}\\
		\vdots&\vdots&\vdots\\
		\alpha_{i_1}^{\ell-1}&\cdots&\alpha_{i_{k}}^{\ell-1}\\
		\alpha_{i_1}^{\ell+1}&\cdots&\alpha_{i_{k}}^{\ell+1}\\
		\vdots&\vdots&\vdots\\
		\alpha_{i_1}^{k-1}&\cdots&\alpha_{i_{k}}^{k-1}\\
		\alpha_{i_1}^{\ell}+\eta\alpha_{i_1}^{k}&\cdots&\alpha_{i_{k}}^{\ell}+\eta\alpha_{i_{k}}^k
		\end{pmatrix}},
	\end{equation*}
	\begin{equation*}
	C_{t}=\small{\begin{pmatrix}
		1&\cdots&1&1-\lambda_{t}\\
		\alpha_{i_{1}}&\cdots&\alpha_{i_{k-1}}&b_{t}-\lambda_{t}b_{3}\\
		\vdots&\vdots&\vdots&\vdots\\
		\alpha_{i_1}^{\ell-1}&\cdots&\alpha_{i_{k-1}}^{\ell-1}&b_{t}^{\ell-1}-\lambda_{t}b_{3}^{\ell-1}\\
		\alpha_{i_1}^{\ell+1}&\cdots&\alpha_{i_{k-1}}^{\ell+1}&b_{t}^{\ell+1}-\lambda_{t}b_{3}^{\ell+1}\\
		\vdots&\vdots&\vdots&\vdots\\
		\alpha_{i_1}^{k-1}&\cdots&\alpha_{i_{k-1}}^{k-1}&b_{t}^{k-1}-\lambda_{t}b_{3}^{k-1}\\
		\alpha_{i_1}^{\ell}+\eta\alpha_{i_1}^{k}&\cdots&\alpha_{i_{k-1}}^{\ell}+\eta\alpha_{i_{k-1}}^k&b_{t}^{\ell}-\lambda_{t}b_{3}^{\ell}+\eta\left( b_{t}^{k}-\lambda_{t}b_{3}^{k}\right)\\
		\end{pmatrix}},t=1,2
	\end{equation*}
	and 
	\begin{equation*}
     D=\small{\begin{pmatrix}
     	1&\cdots&1&1-\lambda_{1}&1-\lambda_{2}\\
     	\alpha_{i_{1}}&\cdots&\alpha_{i_{k-2}}&b_{1}-\lambda_{1}b_{3}&b_{2}-\lambda_{2}b_{3}\\
     	\vdots&\vdots&\vdots&\vdots&\vdots\\
     	\alpha_{i_{1}}^{\ell-1}&\cdots&\alpha_{i_{k-2}}^{\ell-1}&b_{1}^{\ell-1}-\lambda_{1}b_{3}^{\ell-1}&b_{2}^{\ell-1}-\lambda_{2}b_{3}^{\ell-1}\\
     	\alpha_{i_{1}}^{\ell+1}&\cdots&\alpha_{i_{k-2}}^{\ell+1}&b_{1}^{\ell+1}-\lambda_{1}b_{3}^{\ell+1}&b_{2}^{\ell+1}-\lambda_{2}b_{3}^{\ell+1}\\
     	\vdots&\vdots&\vdots&\vdots&\vdots\\
     	\alpha_{i_{1}}^{k-1}&\cdots&\alpha_{i_{k-2}}^{k-1}&b_{1}^{k-1}-\lambda_{1}b_{3}^{k-1}&b_{2}^{k-1}-\lambda_{2}b_{3}^{k-1}\\
     	\alpha_{i_1}^{\ell}+\eta\alpha_{i_1}^{k}&\cdots&\alpha_{i_{k-2}}^{\ell}+\eta\alpha_{i_{k-2}}^k&b_{1}^{\ell}-\lambda_{1}b_{3}^{\ell}+\eta\left( b_{1}^{k}-\lambda_{1}b_{3}^{k}\right)&b_{2}^{\ell}-\lambda_{2}b_{3}^{\ell}+\eta\left( b_{2}^{k}-\lambda_{2}b_{3}^{k}\right)\\
     	\end{pmatrix}}
	\end{equation*}
	are all invertible. 
    
    (i) Computation of $\det (A)$. From Lemma~\ref{Lem:4.1}, we know that
	\begin{equation*}
	\begin{aligned}
	\det(A)&=(-1)^{k-\ell-1}V(\alpha_{i_{1}},\cdots,\alpha_{i_{k}})+\eta S_{k-\ell}(\alpha_{i_{1}},\cdots,\alpha_{i_{k}})\cdot V(\alpha_{i_{1}},\cdots,\alpha_{i_{k}})\\
	&=(-1)^{k-\ell-1}V(\alpha_{i_{1}},\cdots,\alpha_{i_{k}})(1+(-1)^{k-\ell-1}\eta S_{k-\ell}(\alpha_{i_{1}},\cdots,\alpha_{i_{k}})).
	\end{aligned}
	\end{equation*}
	Thus, $\det(A)\neq 0$ if and only if $1+(-1)^{k-\ell-1}\eta S_{k-\ell}(\alpha_{i_{1}},\cdots,\alpha_{i_{k}})\neq 0$. 
    
    (ii) Computation of $\det (C_{t})$ for $t=1,2$. It is easy to see that \begin{equation*}
	\det(C_{t})=(-1)^{k-\ell-1}\Delta_{1}(b_{t})+(-1)^{k-\ell}\lambda_{t}\Delta_{1}(b_{3})+\eta\Delta_{2}(b_{t})-\lambda_{t}\eta\Delta_{2}(b_{3}),
	\end{equation*}
	where $\Delta_{1}(x)=V(\alpha_{i_{1}},\cdots,\alpha_{i_{k-1}},x)$ and 
	$$\Delta_{2}(x)=\left|\begin{array}{cccc}
	1&\cdots&1&1\\
	\alpha_{i_{1}}&\cdots&\alpha_{i_{k-1}}&x\\
	\vdots&\vdots&\vdots&\vdots\\
	\alpha_{i_{1}}^{\ell-1}&\cdots&\alpha_{i_{k-1}}^{\ell-1}&x^{\ell-1}\\
	\alpha_{i_{1}}^{\ell+1}&\cdots&\alpha_{i_{k-1}}^{\ell+1}&x^{\ell+1}\\
	\vdots&\vdots&\vdots&\vdots\\
	\alpha_{i_{1}}^{k}&\cdots&\alpha_{i_{k-1}}^{k}&x^{k}
	\end{array}\right|=S_{k-\ell}(\alpha_{i_{1}},\cdots,\alpha_{i_{k-1}},x)\cdot V(\alpha_{i_{1}},\cdots,\alpha_{i_{k-1}},x).$$
	 Thus, \small{\begin{equation*}
	\begin{aligned}
	&\det(C_{t})\\
    =&(-1)^{k-\ell-1}V(\alpha_{i_{1}},\cdots,\alpha_{i_{k-1}},b_{t})+\eta S_{k-\ell}(\alpha_{i_{1}},\cdots,\alpha_{i_{k-1}},b_{t})\cdot V(\alpha_{i_{1}},\cdots,\alpha_{i_{k-1}},b_{t})\\
	&+(-1)^{k-\ell}\lambda_{t}V(\alpha_{i_{1}},\cdots,\alpha_{i_{k-1}},b_{3})-\lambda_{t}\eta S_{k-\ell}(\alpha_{i_{1}},\cdots,\alpha_{i_{k-1}},b_{3})\cdot V(\alpha_{i_{1}},\cdots,\alpha_{i_{k-1}},b_{3})\\
	=&(-1)^{k-\ell-1}V(\alpha_{i_{1}},\cdots,\alpha_{i_{k-1}})\left(\prod\limits_{1\leq j\leq k-1}(b_{t}-\alpha_{i_{j}})+(-1)^{k-\ell-1}\eta S_{k-\ell}(\alpha_{i_{1}},\cdots,\alpha_{i_{k-1}},b_{t})\prod\limits_{1\leq j\leq k-1}(b_{t}-\alpha_{i_{j}})\right)\\
	&+(-1)^{k-\ell}\lambda_{t}V(\alpha_{i_{1}},\cdots,\alpha_{i_{k-1}})\left(\prod\limits_{1\leq j\leq k-1}(b_{3}-\alpha_{i_{j}})+(-1)^{k-\ell-1}\eta S_{k-\ell}(\alpha_{i_{1}},\cdots,\alpha_{i_{k-1}},b_{3})\prod\limits_{1\leq j\leq k-1}(b_{3}-\alpha_{i_{j}})\right)\\
	=&(-1)^{k-\ell-1}V(\alpha_{i_{1}},\cdots,\alpha_{i_{k-1}})\left(\Psi^{(1)}_{\mathcal{I},\ell,\eta}(b_{t})-\lambda_{t}\Psi^{(1)}_{\mathcal{I},\ell,\eta}(b_{3})\right),
	\end{aligned}	
	\end{equation*}}
	where $\Psi^{(1)}_{\mathcal{I},\ell,\eta}(x)=\prod\limits_{i\in \mathcal{I}}(x-\alpha_{i})\left(1+(-1)^{k-\ell-1}\eta S_{k-\ell}(\alpha_{\mathcal{I}},x)\right)$. Thus, $\det(C_{t})\neq 0$ if and only if $\Psi_{\mathcal{I},\ell,\eta}^{(1)}(b_{t})\neq \lambda_{t}\Psi^{(1)}_{\mathcal{I},\ell,\eta}(b_{3})$. 
    
    (iii) Computation of $\det (D)$. We have 
    \begin{equation*}
        \begin{aligned}
            \det(D)&=(-1)^{k-\ell-1}\Delta_{3}(b_{1},b_{2})+(-1)^{k-\ell}\lambda_{2}\Delta_{3}(b_{1},b_{3})+(-1)^{k-\ell-1}\lambda_{1}\Delta_{3}(b_{2},b_{3})\\
            &+\eta\Delta_{4}(b_{1},b_{2})-\lambda_{2}\eta\Delta_{4}(b_{1},b_{3})+\eta\lambda_{1}\Delta_{4}(b_{2},b_{3}),
        \end{aligned}
    \end{equation*}
    where $\Delta_{3}(x,y)=V(\alpha_{i_{1}},\cdots,\alpha_{i_{k-2}},x,y)=(y-x)V(\alpha_{i_{1}},\cdots,\alpha_{i_{k-2}})\prod\limits_{1\leq j\leq k-2}(x-\alpha_{i_{j}})(y-\alpha_{i_{j}})$ and 
    $$\Delta_{4}(x,y)=\left|
    \begin{array}{ccccc}
    1&\cdots&1&1&1\\
	\alpha_{i_{1}}&\cdots&\alpha_{i_{k-2}}&x&y\\
	\vdots&\vdots&\vdots&\vdots&\vdots\\
	\alpha_{i_{1}}^{\ell-1}&\cdots&\alpha_{i_{k-2}}^{\ell-1}&x^{\ell-1}&y^{\ell-1}\\
	\alpha_{i_{1}}^{\ell+1}&\cdots&\alpha_{i_{k-2}}^{\ell+1}&x^{\ell+1}&y^{\ell+1}\\
	\vdots&\vdots&\vdots&\vdots&\vdots\\
	\alpha_{i_{1}}^{k}&\cdots&\alpha_{i_{k-2}}^{k}&x^{k}&y^k  
    \end{array}\right|=S_{k-\ell}(\alpha_{i_{1}},\cdots,\alpha_{i_{k-2}},x,y)\cdot V(\alpha_{i_{1}},\cdots,\alpha_{i_{k-2}},x,y).$$
    Thus, \begin{equation*}
        \begin{aligned}
            &\det(D)\\
            =&(-1)^{k-\ell-1}\left(\Delta_{3}(b_{1},b_{2})+(-1)^{k-\ell-1}\eta\Delta_{4}(b_{1},b_{2})\right)+(-1)^{k-\ell}\lambda_{2}\left(\Delta_{3}(b_{1},b_{3})+(-1)^{k-\ell-1}\eta\Delta_{4}(b_{1},b_{3})\right)\\
            &+(-1)^{k-\ell-1}\lambda_{1}\left(\Delta_{3}(b_{2},b_{3})+(-1)^{k-\ell-1}\eta\Delta_{4}(b_{2},b_{3})\right)\\
            =&(-1)^{k-\ell-1}V(\alpha_{i_{1}},\cdots,\alpha_{i_{k-2}})(b_{2}-b_{1})\prod\limits_{1\leq j\leq k-2}(b_{2}-\alpha_{i_{j}})(b_{1}-\alpha_{i_{j}})\cdot\left(1+(-1)^{k-\ell-1}\eta S_{k-\ell}(\alpha_{\mathcal{J}},b_{1},b_{2})\right)\\
            &+(-1)^{k-\ell}\lambda_{2}V(\alpha_{i_{1}},\cdots,\alpha_{i_{k-2}})(b_{3}-b_{1})\prod\limits_{1\leq j\leq k-2}(b_{3}-\alpha_{i_{j}})(b_{1}-\alpha_{i_{j}})\cdot\left(1+(-1)^{k-\ell-1}\eta S_{k-\ell}(\alpha_{\mathcal{J}},b_{1},b_{3})\right)\\
            &+(-1)^{k-\ell-1}\lambda_{1}V(\alpha_{i_{1}},\cdots,\alpha_{i_{k-2}})(b_{3}-b_{2})\prod\limits_{1\leq j\leq k-2}(b_{3}-\alpha_{i_{j}})(b_{2}-\alpha_{i_{j}})\cdot\left(1+(-1)^{k-\ell-1}\eta S_{k-\ell}(\alpha_{\mathcal{J}},b_{2},b_{3})\right)\\
            =&(-1)^{k-\ell-1}V(\alpha_{i_{1}},\cdots,\alpha_{i_{k-2}})\left(\Psi_{\mathcal{J},\ell,\eta}^{(2)}(b_{1},b_{2})-\lambda_{2}\Psi_{\mathcal{J},\ell,\eta}^{(2)}(b_{1},b_{3})+\lambda_{1}\Psi_{\mathcal{J},\ell,\eta}^{(2)}(b_{2},b_{3})\right),
        \end{aligned}
    \end{equation*}
	where $\Psi_{\mathcal{J},\ell,\eta}^{(2)}(x,y)=(y-x)\prod\limits_{1\leq j\leq k-2}(y-\alpha_{i_{j}})(x-\alpha_{i_{j}})\cdot\left(1+(-1)^{k-\ell-1}\eta S_{k-\ell}(\alpha_{\mathcal{J}},x,y)\right)$. Thus, $\det(D)\neq 0$ if and only if $\Psi_{\mathcal{J},\ell,\eta}^{(2)}(b_{1},b_{2})-\lambda_{2}\Psi_{\mathcal{J},\ell,\eta}^{(2)}(b_{1},b_{3})+\lambda_{1}\Psi_{\mathcal{J},\ell,\eta}^{(2)}(b_{2},b_{3})\neq 0$.  This completes the proof.
\end{proof}

We provide a toy example to illustrate the validity of the theorem.
\begin{example}
    Let $n=11,k=7$ and $\mathcal{A}=\{0,1,2,3,4,5,6,7,8\}\subseteq\mathbb{F}_{47}$. Let $b_{1}=9,b_2=10,b_3=11,\lambda_{1}=1,\lambda_{2}=10,\eta=19,\ell=6$.
    Then we have
    \begin{equation}
    \addtocounter{MaxMatrixCols}{11}
G_{\operatorname{RCTRS}}=\begin{pmatrix}
	1&1&1&1&1&1&1&1&1&0&38\\
0&1&2&3&4&5&6&7&8&45&41\\
0&1&4&9&16&25&36&2&17&7&18\\
0&1&8&27&17&31&28&14&42&9&4\\
0&1&16&34&21&14&27&4&7&4&31\\
0&1&32&8&37&23&21&28&9&35&23\\
0&20&5&29&22&42&14&38&18&28&28
	\end{pmatrix}.
\end{equation}
On the one hand, one can check that the above construction satisfies the three conditions of Theorem~\ref{Thm:4.2} using Magma. On the other hand, from the given generator matrix, it can be directly verified  that this code is an MDS code with parameters $[11,7,5]$ over $\mathbb{F}_{47}$ by using Magma. %Therefore, this example verifies the correctness of Theorem~\ref{Thm:4.2}.
    
\end{example}

Next, we present some explicit constructions for the code $\operatorname{RCTRS}_{\ell,\eta}(\mathcal{A},\boldsymbol{B},\boldsymbol{\lambda})$, which are MDS and non-RS MDS codes.

\begin{theorem}\label{The:4.3}
    Let $\mathbb{F}_{q_{0}}\subsetneq\mathbb{F}_{q_{1}}\subsetneq\mathbb{F}_{q_{2}}\subsetneq\mathbb{F}_{q}$ be a chain of subfields of $\mathbb{F}_{q}$. Let $\mathcal{A}=\left\{\alpha_{1},\cdots,\alpha_{n-2}\right\}\subseteq \mathbb{F}_{q_{0}}$ and $\boldsymbol{B}=\left(b_{1},b_{2},b_{3}\right)\in\mathbb{F}_{q_{0}}^3$, where $b_{1},b_{2},b_{3}$ are not all the same, and $\alpha_{i}\neq b_{j}$ for all $1\leq i\leq n-2,1\leq j\leq 3$. Let $\eta\in\mathbb{F}_{q_{1}}^{*}\backslash\mathbb{F}_{q_{0}}^*,0\leq\ell\leq k-1$ and $\boldsymbol{\lambda}=(\lambda_{1},\lambda_{2})\in\mathbb{F}_{q}^2$, where $\lambda_{1}\in \mathbb{F}_{q_{2}}^{*}\backslash \mathbb{F}_{q_{1}}^*,\lambda_{2}\in \mathbb{F}_{q}^{*}\backslash \mathbb{F}_{q_{2}}^*$.
    \begin{description}
        \item[(i)] The code $\operatorname{RCTRS}_{\ell,\eta}(\mathcal{A},\boldsymbol{B},\boldsymbol{\lambda})$ is an MDS code.
        \item[(ii)] If $7\leq k<\frac{n}{2}-1$ and $b_{1},b_{2},b_{3}\in\mathbb{F}_{q}^{*}$ are pairwise distinct, then the dimension of the Schur square code of $\operatorname{RCTRS}_{\ell,\eta}(\mathcal{A},\boldsymbol{B},\boldsymbol{\lambda})$ is $2k+2$ if $\ell\in\{0,k-1\}$ and is $2k+3$ if $1\leq\ell\leq k-2$. Thus, the code $\operatorname{RCTRS}_{\ell,\eta}(\mathcal{A},\boldsymbol{B},\boldsymbol{\lambda})$ is a non-RS MDS code.
    \end{description}
\end{theorem}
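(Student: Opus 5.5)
For part (i) the plan is to verify the three conditions of Theorem~\ref{Thm:4.2} using the tower $\mathbb{F}_{q_0}\subsetneq\mathbb{F}_{q_1}\subsetneq\mathbb{F}_{q_2}\subsetneq\mathbb{F}_{q}$, as in Theorem~\ref{The:3.2}. Each condition is linear in the "newest" parameter of the tower, and all coefficients lie in a smaller field.

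\emph{Condition (i).} Since $\alpha_i\in\mathbb{F}_{q_0}$, we have $S_{k-\ell}(\boldsymbol{\alpha}_{\mathcal{L}})\in\mathbb{F}_{q_0}$. If $S_{k-\ell}(\boldsymbol{\alpha}_{\mathcal{L}})=0$, the expression equals $1\neq0$. Otherwise $1+(-1)^{k-\ell-1}\eta S=0$ would force $\eta\in\mathbb{F}_{q_0}$, a contradiction.

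\emph{Condition (ii).} By the same argument, $1+(-1)^{k-\ell-1}\eta S_{k-\ell}(\alpha_{\mathcal{I}},b_t)\in\mathbb{F}_{q_1}^*$. Because $\alpha_i\neq b_j$, it follows that $\Psi^{(1)}_{\mathcal{I},\ell,\eta}(b_t)$ and $\Psi^{(1)}_{\mathcal{I},\ell,\eta}(b_3)$ are nonzero elements of $\mathbb{F}_{q_1}$.
\begin{itemize}
\item For $t=1$, equality would give $\lambda_1\in\mathbb{F}_{q_1}$, which is impossible.
\item For $t=2$, equality would give $\lambda_2\in\mathbb{F}_{q_1}\subseteq\mathbb{F}_{q_2}$, also impossible.
\end{itemize}

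\emph{Condition (iii).} The values $\Psi^{(2)}_{\mathcal{J},\ell,\eta}(x,y)$ at pairs of the $b$'s lie in $\mathbb{F}_{q_1}$. They are nonzero exactly when the two arguments differ: the factor $1+\cdots$ is nonzero as before, and $\alpha_i\neq b_j$ handles the product. Split into the cases $b_1=b_2$, $b_3\in\{b_1,b_2\}$, and pairwise distinct. In each case, one coefficient multiplying $\lambda_2$ or $\lambda_1$ is nonzero.
\begin{itemize}
\item If $b_1\neq b_3$, then $\Psi^{(2)}(b_1,b_3)\ne0$. Solving for $\lambda_2$ puts it in $\mathbb{F}_{q_1}(\lambda_1)\subseteq\mathbb{F}_{q_2}$, a contradiction.
\item If $b_1=b_3$, then $b_2\ne b_3$. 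Then $\Psi^{(2)}(b_1,b_3)=0$ and the relation reads $\Psi^{(2)}(b_1,b_2)+\lambda_1\Psi^{(2)}(b_2,b_3)=0$. Using $\Psi^{(2)}(b_2,b_3)=-\Psi^{(2)}(b_2,b_1)\neq 0$, we get $\lambda_1=1\in\mathbb{F}_{q_1}$, a contradiction.
\end{itemize}
Theorem~\ref{Thm:4.2} then gives the MDS property.

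For part (ii), I would imitate the Schur square computation of Theorem~\ref{The:3.2}. Write the codewords as evaluation vectors of $x^i$ ($i\ne\ell$) and of $x^\ell+\eta x^k$, with the last two coordinates twisted.
\begin{itemize}
\item Products of two untwisted monomials $x^i x^j$ with $i,j\neq\ell$ give the vectors $\boldsymbol g_{i,j}$. As before, differences $\boldsymbol g_{1,1}-\boldsymbol g_{0,2}$ and $\boldsymbol g_{1,2}-\boldsymbol g_{0,3}$ (or analogous pairs avoiding $\ell$; $k\ge7$ leaves room) span the two-dimensional space supported on the last two coordinates. This uses the $2\times2$ determinant $\lambda_1\lambda_2(b_1-b_3)^2(b_2-b_3)^2(b_2-b_1)\neq0$.
\item Modulo that space, the Schur square is spanned by evaluations on $\mathcal{A}$ of the span of $\{x^{i+j}: i,j\ne\ell\}\cup\{x^{i}(x^\ell+\eta x^k)\}\cup\{(x^\ell+\eta x^k)^2\}$, padded with zeros on the last two coordinates.
\item If $1\le\ell\le k-2$, the monomials $x^{i+j}$ already cover all degrees $0,\dots,2k-2$. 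Then $x^i\cdot x^k$ gives degrees up to $2k-1$, and the square contributes $x^{2k}$. This gives polynomials of degree $\le 2k$, that is, $2k+1$ dimensions on $\mathcal{A}$ because $2k+1\le n-2$. Together with the 2 tail dimensions the total is $2k+3$.
\item If $\ell=0$ or $\ell=k-1$, one extremal degree is missing from the pure products. I would track degrees carefully:
  \begin{itemize}
  \item For $\ell=0$, the constant term only appears inside $1+\eta x^k$, so the span is $2k$-dimensional, giving $2k+2$.
  \item For $\ell=k-1$, the top degree $2k-2$ merges with the twisted terms, and the count is also $2k+2$.
  \end{itemize}
\end{itemize}
Independence follows from a Vandermonde block times the nonzero $2\times2$ block, as in Theorem~\ref{The:3.2}.

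The main obstacle is this exact dimension count in the boundary cases $\ell\in\{0,k-1\}$. One must show precisely one degree is lost, by exhibiting a linear relation (for $\ell=0$, $1$ and $x^k$ only appear via $1+\eta x^k$ combined with $x^k$ from $x^1x^{k-1}$...). This requires careful checking against all generators, including whether the tail coordinates of twisted products stay inside the already-spanned tail space. In every case the dimension exceeds $2k-1$ with $k<n/2$, so Proposition~\ref{distinguisher:RS} yields the non-RS conclusion.
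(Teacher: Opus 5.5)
Part (i) follows the paper's argument exactly. In your $b_1=b_3$ subcase, which forces $\lambda_1=1$, you are even slightly sharper. Your strategy for (ii) is also the paper's: fill $\{0\}^{n-2}\times\mathbb{F}_q^2$ with two difference vectors, then count $\dim\operatorname{span}(S_{k,\ell,\eta}\cdot S_{k,\ell,\eta})$, using that evaluation is injective on degree $\le 2k$ since $n-2>2k$. However, the degree bookkeeping, which is the whole content of (ii), is partly wrong and partly not done.

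\textbf{Case $1\le\ell\le k-2$.} The pure products do \emph{not} cover all degrees $0,\dots,2k-2$. For $\ell=1$, degree $1$ only arises as $x^0x^1$. For $\ell=k-2$, degree $2k-3$ only arises as $x^{k-2}x^{k-1}$. The count $2k+1$ survives only through twisted products, e.g. $x=1\cdot(x+\eta x^k)-\eta\,x^2x^{k-2}$ and $\eta x^{2k-3}=x^{k-3}(x^{k-2}+\eta x^k)-x^{k-4}x^{k-1}$. Similarly, $x^{2k-1}$ comes from $x^{k-1}(x^\ell+\eta x^k)$ only once $x^{k-1+\ell}$ is known to be in the span.

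\textbf{Boundary cases $\ell\in\{0,k-1\}$.} You leave these open, and your heuristics name the wrong monomials. For $\ell=0$, the constant occurs only in $(1+\eta x^k)^2=1+2\eta x^k+\eta^2x^{2k}$, which is also the only source of $x^{2k}$. Together with $x=x(1+\eta x^k)-\eta\,x^2x^{k-1}$ and $x^{2k-1}$ from $x^{k-1}(1+\eta x^k)$, the span is exactly $\langle 1+\eta^2x^{2k},x,\dots,x^{2k-1}\rangle$ (the paper's $M_1$). For $\ell=k-1$, it is $x^{2k-1}$ and $x^{2k}$, not $x^{2k-2}$, that occur only in $(x^{k-1}+\eta x^k)^2$. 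This gives $\langle 1,x,\dots,x^{2k-2},2\eta x^{2k-1}+\eta^2x^{2k}\rangle$ (the paper's $M_4$). Checking the three kinds of products (pure, one twisted factor, twisted square) against these spaces gives both inclusions, hence exactly $2k$.

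\textbf{Tail step.} Shifting your pairs by $a$ multiplies the $2\times2$ determinant by $(b_1b_2b_3^2)^a$; this is where $b_i\neq 0$ is used. More seriously, ``$k\ge 7$ leaves room'' fails for $(k,\ell)=(7,3)$. The set $\{0,\dots,6\}\setminus\{3\}$ contains no four consecutive integers, so neither your pairs nor any shift of them exists; the paper's case (II) passes over this too. Other pure relations can degenerate: e.g. $x^2x^2-x^0x^4$ together with $x\cdot x-x^0x^2$ gives a determinant containing the factor $b_1+b_2+2b_3$, which can vanish.

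A clean fix uses the twist. Write $\boldsymbol c_f$ for the codeword of $f\in S_{k,\ell,\eta}$. Any polynomial identity $\sum_r c_rf_rg_r=0$ yields a vector $\sum_r c_r\,\boldsymbol c_{f_r}*\boldsymbol c_{g_r}$ supported on the last two coordinates, with $t$-th entry $-\lambda_t\sum_r c_r\bigl(f_r(b_t)g_r(b_3)+f_r(b_3)g_r(b_t)\bigr)$. Apply this to $1\cdot(x^3+\eta x^7)-x\cdot x^2-\eta\,x\cdot x^6=0$. The $t$-th entry is $-\lambda_t(b_t-b_3)^2\bigl[(b_t+b_3)+\eta\,h(b_t,b_3)\bigr]$, where $h(u,v)=(u^6-v^6)/(u-v)$. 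Paired with the vector from $x\cdot x-1\cdot x^2$, the determinant is $\lambda_1\lambda_2(b_1-b_3)^2(b_2-b_3)^2\bigl[(b_1-b_2)+\eta\,(h(b_1,b_3)-h(b_2,b_3))\bigr]$. This is nonzero because $b_1\neq b_2$, $h(b_t,b_3)\in\mathbb{F}_{q_0}$, and $\eta\notin\mathbb{F}_{q_0}$.
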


\begin{proof}
We first prove that the code $\operatorname{RCTRS}_{\ell,\eta}(\mathcal{A},\boldsymbol{B},\boldsymbol{\lambda})$ is MDS. For any $k$-subset $\mathcal{L}\subseteq [n-2]$, we know that $(-1)^{k-\ell-1}S_{k-\ell}(\boldsymbol{\alpha}_{\mathcal{L}})\in\mathbb{F}_{q_{0}}$
and $\eta\in\mathbb{F}_{q_{1}}^{*}\backslash\mathbb{F}_{q_{0}}^*$. Thus, $1+(-1)^{k-\ell-1}\eta S_{k-\ell}(\boldsymbol{\alpha}_{\mathcal{L}})\neq 0$.
 
 For any $k-1$-subset $\mathcal{I}\subseteq [n-2]$, we have
\[
\frac{\Psi_{\mathcal{I},\ell,\eta}^{(1)}(b_{t})}{\Psi_{\mathcal{I},\ell,\eta}^{(1)}(b_{3})}\in\mathbb{F}_{q_{1}}^{*},\ t=1,2\ \lambda_{1}\in \mathbb{F}_{q_{2}}^{*}\backslash \mathbb{F}_{q_{1}}^*\ \mbox{and}\ \lambda_{2}\in \mathbb{F}_{q}^{*}\backslash \mathbb{F}_{q_{2}}^*
\]
%$$\Psi_{\mathcal{I},\ell,\eta}^{(1)}(b_{t})\in\mathbb{F}_{q_{1}},\ 
%\lambda_{1}\Psi^{(1)}_{\mathcal{I},\ell,\eta}(b_{3})\in\mathbb{F}_{q_{2}}^{*}\backslash \mathbb{F}_{q_{1}}^*,\ \lambda_{2}\Psi^{(1)}_{\mathcal{I},\ell,\eta}(b_{3})\in\mathbb{F}_{q}^{*}\backslash \mathbb{F}_{q_{2}}^*,\  t=1,2.$$
Thus, $$\Psi_{\mathcal{I},\ell,\eta}^{(1)}(b_{t})\neq \lambda_{t}\Psi^{(1)}_{\mathcal{I},\ell,\eta}(b_{3})\  \mbox{for}\ t=1,2.$$
	Next, suppose that there is a $k-2$-subset $\mathcal{J}\subseteq [n-2]$ such that
$$\Psi_{\mathcal{J},\ell,\eta}^{(2)}(b_{1},b_{2})-\lambda_{2}\Psi_{\mathcal{J},\ell,\eta}^{(2)}(b_{1},b_{3})+\lambda_{1}\Psi_{\mathcal{J},\ell,\eta}^{(2)}(b_{2},b_{3})= 0.$$

If $b_{1}\neq b_{3}$, then $\Psi_{\mathcal{J},\ell,\eta}^{(2)}(b_{1},b_{3})\neq 0$. Thus, $$\lambda_{2}=\frac{\Psi_{\mathcal{J},\ell,\eta}^{(2)}(b_{1},b_{2})}{\Psi_{\mathcal{J},\ell,\eta}^{(2)}(b_{1},b_{3})}+\lambda_{1}\frac{\Psi_{\mathcal{J},\ell,\eta}^{(2)}(b_{2},b_{3})}{\Psi_{\mathcal{J},\ell,\eta}^{(2)}(b_{1},b_{3})}.$$
Since $\frac{\Psi_{\mathcal{J},\ell,\eta}^{(2)}(b_{1},b_{2})}{\Psi_{\mathcal{J},\ell,\eta}^{(2)}(b_{1},b_{3})},\frac{\Psi_{\mathcal{J},\ell,\eta}^{(2)}(b_{2},b_{3})}{\Psi_{\mathcal{J},\ell,\eta}^{(2)}(b_{1},b_{3})}\in \mathbb{F}_{q_{1}}$ and $\lambda_{1}\in\mathbb{F}_{q_{2}}^{*}\backslash \mathbb{F}_{q_1}^*$, we know that $\frac{\Psi_{\mathcal{J},\ell,\eta}^{(2)}(b_{1},b_{2})}{\Psi_{\mathcal{J},\ell,\eta}^{(2)}(b_{1},b_{3})}+\lambda_{1}\frac{\Psi_{\mathcal{J},\ell,\eta}^{(2)}(b_{2},b_{3})}{\Psi_{\mathcal{J},\ell,\eta}^{(2)}(b_{1},b_{3})}\in\mathbb{F}_{q_{2}}$, which contradicts $\lambda_{2}\in\mathbb{F}_{q}^{*}\backslash \mathbb{F}_{q_2}^* $

If $b_{1}=b_{3}$, then $b_{2}\neq b_{3},\Psi_{\mathcal{J},\ell,\eta}^{(2)}(b_{1},b_{3})=0$ and $\Psi_{\mathcal{J},\ell,\eta}^{(2)}(b_{2},b_{3})\neq 0$. Thus, 
$$\lambda_{1}=-\frac{\Psi_{\mathcal{J},\ell,\eta}^{(2)}(b_{1},b_{2})}{\Psi_{\mathcal{J},\ell,\eta}^{(2)}(b_{2},b_{3})}.$$
 we know  $-\frac{\Psi_{\mathcal{J},\ell,\eta}^{(2)}(b_{1},b_{2})}{\Psi_{\mathcal{J},\ell,\eta}^{(2)}(b_{2},b_{3})}\in\mathbb{F}_{q_{1}}$, which contradicts $\lambda_{1}\in\mathbb{F}_{q_{2}}^{*}\backslash\mathbb{F}_{q_{1}}^*$.
Thus, for any $k-2$-subset $\mathcal{J}\subseteq [n-2]$, we have
$$\Psi_{\mathcal{J},\ell,\eta}^{(2)}(b_{1},b_{2})-\lambda_{2}\Psi_{\mathcal{J},\ell,\eta}^{(2)}(b_{1},b_{3})+\lambda_{1}\Psi_{\mathcal{J},\ell,\eta}^{(2)}(b_{2},b_{3})\neq 0.$$
From Theorem~\ref{Thm:4.2}, we know that $\operatorname{RCTRS}_{\ell,\eta}(\mathcal{A},\boldsymbol{B},\boldsymbol{\lambda})$ is MDS.  

To prove that $\mathcal{C}:=\operatorname{RCTRS}_{\ell,\eta}(\mathcal{A},\boldsymbol{B},\boldsymbol{\lambda})$ is a  non-RS code, we compute the dimension of the Schur square of $\mathcal{C}$. Let $\boldsymbol{g}_{i}$ be the $i+1$-th row of $G_{\operatorname{RCTRS}}$ and $\boldsymbol{g}_{i,j}=\boldsymbol{g}_{i}\ast \boldsymbol{g}_{j}$ for any $0\leq i,j\leq k-1$.  We divide our discussion into four cases:
    \begin{itemize}
        \item[(I)] If $\ell=0$, choose $(i,j)=(0,2),(1,1)$, then we have
        \begin{equation*}
\begin{aligned}
     \boldsymbol{g}_{1,1}-\boldsymbol{g}_{0,2}&=\left(\alpha_{1}^4,\cdots,\alpha_{n-2}^4,(b_{1}^2-\lambda_{1}b_{3}^2)^2, (b_{2}^2-\lambda_{2}b_{3}^2)^2\right)\\
     &-
     \left(\alpha_{1}^4,\cdots,\alpha_{n-2}^4,(b_{1}-\lambda_{1}b_{3})(b_{1}^3-\lambda_{1}b_{3}^3), (b_{2}-\lambda_{2}b_{3})(b_{2}^3-\lambda_{2}b_{3}^3)\right)\\
     &=(0,\cdots,0,\lambda_{1}b_{1}b_{3}(b_{1}-b_{3})^2,\lambda_{2}b_{2}b_{3}(b_{2}-b_{3})^2)
\end{aligned}
        \end{equation*}
        Choosing $(i,j)=(0,3),(1,2)$, we have
        \begin{equation*}
            \begin{aligned}
                \boldsymbol{g}_{1,2}-\boldsymbol{g}_{0,3}&=\left(\alpha_{1}^5,\cdots,\alpha_{n-2}^5,(b_{1}^2-\lambda_{1}b_{3}^2)(b_{1}^3-\lambda_{1}b_{3}^3), (b_{2}^2-\lambda_{2}b_{3}^2)(b_{2}^3-\lambda_{2}b_{3}^3)\right)\\
     &-
     \left(\alpha_{1}^5,\cdots,\alpha_{n-2}^5,(b_{1}-\lambda_{1}b_{3})(b_{1}^4-\lambda_{1}b_{3}^4), (b_{2}-\lambda_{2}b_{3})(b_{2}^4-\lambda_{2}b_{3}^4)\right)\\
     &=(0,\cdots,0,\lambda_{1}b_{1}b_{3}(b_{1}+b_{3})(b_{1}-b_{3})^2,\lambda_{2}b_{2}b_{3}(b_{2}+b_{3})(b_{2}-b_{3})^2)
            \end{aligned}
        \end{equation*}

	Because $b_{1},b_{2},b_{3}\in\mathbb{F}_{q}^{*}$ are pairwise distinct, we have $$\left| \begin{array}{cc}
	\lambda_{1}b_{1}b_{3}(b_{1}-b_{3})^2&\lambda_{2}b_{2}b_{3}(b_{2}-b_{3})^2\\
	\lambda_{1}b_{1}b_{3}(b_{1}-b_{3})^2(b_{1}+b_{3})&\lambda_{2}b_{2}b_{3}(b_{2}-b_{3})^2(b_{2}+b_{3})
	\end{array}\right|=\lambda_{1}\lambda_{2}b_{1}b_{2}b_{3}^2(b_{1}-b_{3})^2(b_{2}-b_{3})^2(b_{2}-b_{1})\neq 0.$$
    Thus,
    $(0,\cdots,0,\beta_{1},\beta_{2})\in \mathbb{F}_{q}^n$ can be linearly expressed by $\boldsymbol{g}_{1,1}-\boldsymbol{g}_{0,2}$ and $\boldsymbol{g}_{1,2}-\boldsymbol{g}_{0,3}$, where $\beta_{1},\beta_{2}\in \mathbb{F}_{q}$. Thus, for any $(i_{1},j_{1}),(i_{2},j_{2})\in [0,k-2]\times [0,k-2]$ that satisfies $i_{1}+j_{1}=i_{2}+j_{2}$, we get $\boldsymbol{g}_{i_1,j_1}-\boldsymbol{g}_{i_2,j_2}$  that can be linearly expressed by $\boldsymbol{g}_{1,1}-\boldsymbol{g}_{0,2}$ and $\boldsymbol{g}_{1,2}-\boldsymbol{g}_{0,3}$. 
	
	Therefore, $\mathcal{C}^2$ is generated by the following matrix
	\begin{equation*}
	{\footnotesize \begin{pmatrix}
	\alpha_{1}^2&\cdots&\alpha_{n-2}^2&(b_{1}-\lambda_{1}b_{3})^2&(b_{2}-\lambda_{2}b_{3})^2\\
	\alpha_{1}^3&\cdots&\alpha_{n-2}^3&(b_{1}-\lambda_{1}b_{3})(b_1^2-\lambda_{1}b_{3}^2)&(b_{2}-\lambda_{2}b_{3})(b_2^2-\lambda_{2}b_{3}^2)\\
    \vdots&\vdots&\vdots&\vdots&\vdots\\
	\alpha_{1}^k&\cdots&\alpha_{n-2}^k&(b_{1}-\lambda_{1}b_{3})(b_1^{k-1}-\lambda_{1}b_{3}^{k-1})&(b_2-\lambda_{2}b_3)(b_2^{k-1}-\lambda_{2}b_{3}^{k-1})\\
	\alpha_{1}^{k+1}&\cdots&\alpha_{n-2}^{k+1}&(b_{1}^2-\lambda_{1}b_{3}^2)(b_1^{k-1}-\lambda_{1}b_{3}^{k-1})&(b_2^2-\lambda_{2}b_3^2)(b_2^{k-1}-\lambda_{2}b_{3}^{k-1})\\
     \vdots&\vdots&\vdots&\vdots&\vdots\\
	\alpha_{1}^{2k-2}&\cdots&\alpha_{n-2}^{2k-2}&(b_1^{k-1}-\lambda_{1}b_{3}^{k-1})^2&(b_2^{k-1}-\lambda_{2}b_{3}^{k-1})^2\\
	\alpha_{1}^4&\cdots&\alpha_{n-2}^4&(b_{1}^2-\lambda_{1}b_{3}^2)^2&(b_{2}^2-\lambda_{2}b_{3}^2)^2\\
    \alpha_{1}^5&\cdots&\alpha_{n-2}^5&(b_{1}^2-\lambda_{1}b_{3}^2)(b_{1}^3-\lambda_{1}b_{3}^3)&(b_{2}^2-\lambda_{2}b_{3}^2)(b_{2}^3-\lambda_{2}b_{3}^3)\\
	\vdots&\vdots&\vdots&\vdots&\vdots\\
    \alpha_{1}(1+\eta\alpha_{1}^k)&\cdots&\alpha_{n-2}(1+\eta\alpha_{n-2}^k)&(b_{1}-\lambda_{1}b_{3})(1-\lambda_{1}+\eta(b_{1}^k-\lambda_{1}b_{3}^k))&(b_{2}-\lambda_{2}b_{3})(1-\lambda_{2}+\eta(b_{2}^k-\lambda_{2}b_{3}^k))\\
   \alpha_{1}^2(1+\eta\alpha_{1}^k)&\cdots&\alpha_{n-2}^2(1+\eta\alpha_{n-2}^k)&(b_{1}^2-\lambda_{1}b_{3}^2)(1-\lambda_{1}+\eta(b_{1}^k-\lambda_{1}b_{3}^k))&(b_{2}^2-\lambda_{2}b_{3}^2)(1-\lambda_{2}+\eta(b_{2}^k-\lambda_{2}b_{3}^k))\\
	\vdots&\vdots&\vdots&\vdots&\vdots\\
    (1+\eta\alpha_{1}^k)^2&\cdots&(1+\eta\alpha_{n-2}^k)^2&(1-\lambda_{1}+\eta(b_{1}^k-\lambda_{1}b_{3}^k))^2&(1-\lambda_{2}+\eta(b_{2}^k-\lambda_{2}b_{3}^k))^2\\
	\end{pmatrix}},
	\end{equation*}
	which is row equivalent to
	\begin{equation}
	M_1=\begin{pmatrix}
	1+\eta^2 \alpha_{1}^{2k}&\cdots&1+\eta^2\alpha_{n-2}^{2k}&0&0\\
	\alpha_{1}&\cdots&\alpha_{n-2}&0&0\\
	\alpha_{1}^2&\cdots&\alpha_{n-2}^2&0&0\\
    \vdots&\vdots&\vdots&\vdots&\vdots\\
	\alpha_{1}^{2k-1}&\cdots&\alpha_{n-2}^{2k-1}&0&0\\
    0&\cdots&0&\lambda_{1}b_{1}b_{3}(b_{1}-b_{3})^2&\lambda_{2}b_{2}b_{3}(b_{2}-b_{3})^2\\
	0&\cdots&0&\lambda_{1}b_{1}b_{3}(b_{1}-b_{3})^2(b_{1}+b_{3})&\lambda_{2}b_{2}b_{3}(b_{2}-b_{3})^2(b_{2}+b_{3})
	\end{pmatrix}.
	\end{equation}
    Because $n-2>2k$, $\rank\begin{pmatrix}
        1+\eta^2\alpha_{1}^{2k}&\cdots&1+\eta^2\alpha_{n-2}^{2k}\\ \alpha_{1}&\cdots&\alpha_{n-2}\\
        \vdots&\vdots&\vdots\\ \alpha_{1}^{2k-1}&\cdots&\alpha_{n-2}^{2k-1}
    \end{pmatrix}=2k,$
	which means that the rank of matrix $M_1$ is $\rank(M_1)=2k+2$.

\item[(II)] If $1\leq\ell\leq k-3$, since $k\geq 7$,  
$\mathcal{C}^2$ is generated by a matrix equivalent to the following form:
\begin{equation*}
M_2=\begin{pmatrix}
	1&\cdots&1&0&0\\
	\alpha_{1}&\cdots&\alpha_{n-2}&0&0\\
	\alpha_{1}^2&\cdots&\alpha_{n-2}^2&0&0\\
    \vdots&\vdots&\vdots&\vdots&\vdots\\
	\alpha_{1}^{2k}&\cdots&\alpha_{n-2}^{2k}&0&0\\
    0&\cdots&0&\lambda_{1}(b_{1}-b_{3})^2&\lambda_{2}(b_{2}-b_{3})^2\\
	0&\cdots&0&\lambda_{1}(b_{1}-b_{3})^2(b_{1}+b_{3})&\lambda_{2}(b_{2}-b_{3})^2(b_{2}+b_{3})
	\end{pmatrix}.
	\end{equation*}
 Because $n-2>2k$ and $$\left|\begin{array}{cc} \lambda_{1}(b_1-b_3)^2&\lambda_{2}(b_2-b_3)^2\\ \lambda_{1}(b_{1}-b_{3})^2(b_{1}+b_{3})&\lambda_{2}(b_{2}-b_{3})^2(b_{2}+b_{3})\end{array}\right|=\lambda_1\lambda_2(b_1-b_3)^2(b_2-b_3)^2(b_2-b_1)\neq 0,$$
    we know that the rank of matrix $M_2$ is $2k+3$.

\item[(III)] If $\ell=k-2$,   $\mathcal{C}^2$ is generated by a matrix equivalent to the following form:
\begin{equation*}
M_3=\begin{pmatrix}
	1&\cdots&1&0&0\\
	\alpha_{1}&\cdots&\alpha_{n-2}&0&0\\
	\alpha_{1}^2&\cdots&\alpha_{n-2}^2&0&0\\
    \vdots&\vdots&\vdots&\vdots&\vdots\\
	\alpha_{1}^{2k}&\cdots&\alpha_{n-2}^{2k}&0&0\\
    0&\cdots&0&\lambda_{1}(b_{1}-b_{3})^2&\lambda_{2}(b_{2}-b_{3})^2\\
	0&\cdots&0&\lambda_{1}(b_{1}-b_{3})^2(b_{1}+b_{3})&\lambda_{2}(b_{2}-b_{3})^2(b_{2}+b_{3})
	\end{pmatrix}.
	\end{equation*}
    Because $n-2>2k$,
    we know that the rank of matrix $M_3$ is $2k+3$. 

%$$\mathcal{D}=\left\{\boldsymbol{g}_{0,0},\cdots,\boldsymbol{g}_{0,k-2},\boldsymbol{g}_{1,k-2},\cdots,\boldsymbol{g}_{k-2,k-2},\boldsymbol{g}_{1,1},\boldsymbol{g}_{1,2},\boldsymbol{g}_{0,k-1},\boldsymbol{g}_{k-3,k-1},\boldsymbol{g}_{k-2,k-1},\boldsymbol{g}_{k-1,k-1}\right\}$$ has size $2k+3$ and is a basis for $\mathcal{C}^2$.
\item[(IV)] If $\ell=k-1$, $\mathcal{C}^2$ is generated by a matrix equivalent to the following form:
\begin{equation*}
M_4=\begin{pmatrix}
	1&\cdots&1&0&0\\
	\alpha_{1}&\cdots&\alpha_{n-2}&0&0\\
	\alpha_{1}^2&\cdots&\alpha_{n-2}^2&0&0\\
    \vdots&\vdots&\vdots&\vdots&\vdots\\
	\alpha_{1}^{2k-2}&\cdots&\alpha_{n-2}^{2k-2}&0&0\\
    2\eta\alpha_{1}^{2k-1}+\eta^2\alpha_{1}^{2k}&\cdots&2\eta\alpha_{n-2}^{2k-1}+\eta^2\alpha_{n-2}^{2k}&0&0\\
    0&\cdots&0&\lambda_{1}(b_{1}-b_{3})^2&\lambda_{2}(b_{2}-b_{3})^2\\
	0&\cdots&0&\lambda_{1}(b_{1}-b_{3})^2(b_{1}+b_{3})&\lambda_{2}(b_{2}-b_{3})^2(b_{2}+b_{3})
	\end{pmatrix}.
	\end{equation*}
    Since $n-2>2k$,
    we know that the rank of matrix $M_4$ is $2k+2$. 
\end{itemize}

In summary, the dimension of the Schur square code $\mathcal C^2$ is $2k+2$ if $\ell\in\{0,k-1\}$ and is $2k+3$ if $1\leq\ell\leq k-2$. Thus, the code $\mathcal{C}$  is a non-RS MDS code by Proposition~\ref{distinguisher:RS}.
\end{proof}
\begin{remark}
The Schur-square dimension of the constructed row-column twisted family is $2k+2$ or $2k+3$ in the parameter range of Theorem~\ref{The:4.3} $(ii)$, whereas the column-twisted family constructed in Section~3 has Schur square dimension $2k+1$. Therefore, in the cases where these dimensions differ, Schur square dimension separates the present $\operatorname{RCTRS}$ codes from the $\operatorname{CTRS}$ codes constructed in Section~3 and from the corresponding previously known families~\cite{bhagat2025row,liu2025column}. This provides an additional invariant for distinguishing these non-RS MDS codes.
\end{remark}

Indeed, the multiplicative subgroup $\mathbb{F}_{q_{0}}^*$ used in Theorem~\ref{The:4.3}(ii) can be generalized to any multiplicative subgroup of $\mathbb{F}_{q_{1}}^*$. Taking a proper multiplicative subgroup, one can obtain non-RS  MDS codes $\operatorname{CTRS}(\mathcal{A},\boldsymbol{B},\boldsymbol{\lambda})$ with longer lengths compared to those in Theorem~\ref{The:4.3}(ii).

\begin{theorem}
    \label{The:4.6}
    Let $\mathbb{F}_{q_{1}}\subsetneq\mathbb{F}_{q_{2}}\subsetneq\mathbb{F}_{q}$ be a chain of subfields of $\mathbb{F}_{q}$ and $H$ be a subgroup of $\mathbb{F}_{q_{1}}^{*}$ of order $n-1$. Let $H=\left\{1,\mu_{1},\cdots,\mu_{n-2}\right\},\boldsymbol{B}=\left(b_{1},b_{2},b_{3}\right)\in\mathbb{F}_{q_{1}}^3$ and $\boldsymbol{\lambda}=(\lambda_{1},\lambda_{2})$, where 
    $b_{1},b_{2},b_{3}$ are pairwise distinct, $\lambda_{1}\in\mathbb{F}_{q_{1}}^{*}\backslash H,\lambda_{2}\in\mathbb{F}_{q_2}^{*}\backslash\mathbb{F}_{q_{1}}^*$.  Let $\eta\in\mathbb{F}_{q}^{*}\backslash \mathbb{F}_{q_2}^*$ and  $\alpha_{i}=\frac{b_{3}\mu_{i}-b_{1}}{\mu_{i}-1}$ for all $1\leq i\leq n-2$. If $7\leq k<\frac{n}{2}-1$, then 
the code $\operatorname{RCTRS}_{\ell,\eta}(\mathcal{A},\boldsymbol{B},\boldsymbol{\lambda})$ is a non-RS MDS code.
\end{theorem}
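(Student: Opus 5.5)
The plan is to verify the three conditions of Theorem~\ref{Thm:4.2} using the tower $\mathbb{F}_{q_1}\subsetneq\mathbb{F}_{q_2}\subsetneq\mathbb{F}_q$. The idea is to separate the part of each determinant that involves $\eta$ from the part that does not. All of $\alpha_i$, $b_j$ and $\mu_i$ lie in $\mathbb{F}_{q_1}$. Since $\eta\notin\mathbb{F}_{q_2}$, the elements $1$ and $\eta$ are linearly independent over $\mathbb{F}_{q_2}$. So an expression $X+\eta Y$ with $X,Y\in\mathbb{F}_{q_2}$ vanishes only if $X=Y=0$, and it suffices to show $X\neq 0$.

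\emph{Preliminaries and condition (i).} As in the proof of Theorem~\ref{The:3.3}, $\alpha_i\neq b_1,b_3$ and $(b_1-\alpha_i)/(b_3-\alpha_i)=\mu_i$. Write $\varepsilon=(-1)^{k-\ell-1}$. For condition (i), $S_{k-\ell}(\boldsymbol{\alpha}_{\mathcal L})\in\mathbb{F}_{q_1}$. If it is $0$, then $1+\varepsilon\eta S=1\neq 0$. Otherwise $\eta=-\varepsilon/S\in\mathbb{F}_{q_1}$, which is a contradiction.

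\emph{Condition (ii).} Put $P_t=\prod_{i\in\mathcal I}(b_t-\alpha_i)$ and $S_t=S_{k-\ell}(\alpha_{\mathcal I},b_t)$. Then
\[
\Psi^{(1)}(b_t)-\lambda_t\Psi^{(1)}(b_3)=(P_t-\lambda_tP_3)+\varepsilon\eta\,(P_tS_t-\lambda_tP_3S_3),
\]
where both brackets lie in $\mathbb{F}_{q_2}$, so it is enough that $P_t\neq\lambda_tP_3$. Since $P_3\neq 0$, for $t=1$ this follows from $P_1/P_3=\prod_{i\in\mathcal I}\mu_i\in H$ and $\lambda_1\notin H$. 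For $t=2$ it follows from $P_2/P_3\in\mathbb{F}_{q_1}$ and $\lambda_2\notin\mathbb{F}_{q_1}$; this also covers the case $P_2=0$.

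\emph{Condition (iii).} The $\eta$-free part of $\Psi^{(2)}_{\mathcal J,\ell,\eta}(x,y)$ is exactly $\Phi_J(y,x)$. Hence the expression in (iii) equals
\[
X+\varepsilon\eta Y,\qquad X=\Phi_J(b_2,b_1)-\lambda_2\Phi_J(b_3,b_1)+\lambda_1\Phi_J(b_3,b_2),
\]
with $Y\in\mathbb{F}_{q_2}$. The argument in the proof of Theorem~\ref{The:3.3} gives $X\neq 0$. Indeed, $\Phi_J(b_3,b_1)\neq 0$, while $\lambda_2\notin\mathbb{F}_{q_1}$ and all other terms of $X/\Phi_J(b_3,b_1)$ lie in $\mathbb{F}_{q_1}$. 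Theorem~\ref{Thm:4.2} then shows the code is MDS.

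\emph{Schur square.} Here I would repeat the case analysis (I)--(IV) of Theorem~\ref{The:4.3}(ii) verbatim. That computation used only four facts: $\lambda_1,\lambda_2\neq 0$, the $b_j$ are pairwise distinct, $n-2>2k$, and $k\ge 7$. It gives $\dim\mathcal C^2\in\{2k+2,2k+3\}\neq 2k-1$, so Proposition~\ref{distinguisher:RS} shows the code is non-RS.

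\emph{Expected obstacle.} The main difficulty is case $\ell=0$. There the $2\times 2$ determinant carries the factor $b_1b_2b_3^2$, so I must ensure $b_j\neq 0$ or adjust the chosen pair of rows. I would first try to use another pair, such as $(i,j)=(1,3),(2,2)$, or to note that a zero $b_j$ can be avoided by translating all points. A second point to check carefully is the rank claim for the first block of $M_1$ and $M_4$, namely that the $\alpha$-part has full rank $2k$ when $n-2>2k$.
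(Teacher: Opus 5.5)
Your MDS argument is correct and matches the paper's proof. The paper also shows that the $\eta$-free parts, $P_t-\lambda_tP_3$ and $\Phi_J(b_2,b_1)-\lambda_2\Phi_J(b_3,b_1)+\lambda_1\Phi_J(b_3,b_2)$, are nonzero using $H$ and $\mathbb{F}_{q_1}$. It then observes that solving for $\eta$ would force $\eta\in\mathbb{F}_{q_2}$; your $\mathbb{F}_{q_2}$-linear independence of $1,\eta$ is the same step.

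The non-RS half has a genuine gap, and it is the one you flag but leave open. Theorem~\ref{The:4.3}(ii) assumes $b_1,b_2,b_3\in\mathbb{F}_q^{*}$. Here the $b_j$ are only pairwise distinct elements of $\mathbb{F}_{q_1}$, so one of them may be $0$, and your list of ``four facts'' omits this. Neither remedy you propose works for $\ell=0$. Write $c(f)$ for the codeword of $f$. When $\ell=0$, every untwisted row is $c(x^i)$ with $i\ge 1$, and for $i+j=i'+j'$ the $t$-th twisted coordinate of $c(x^i)*c(x^j)-c(x^{i'})*c(x^{j'})$ is
\[
-\lambda_t\bigl(b_t^{i}b_3^{j}+b_3^{i}b_t^{j}-b_t^{i'}b_3^{j'}-b_3^{i'}b_t^{j'}\bigr).
\]
This is divisible by $b_tb_3$ because all exponents are positive. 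So if $b_3=0$ every such difference vanishes, and if $b_1=0$ or $b_2=0$ one coordinate always vanishes. Choosing a different pair, such as $(1,3),(2,2)$, therefore cannot help. Translating the points does not help either. It turns $\operatorname{span}\{x,\dots,x^{k-1}\}$ into the polynomials of degree $<k$ vanishing at a nonzero point, which is not a monomial space, so Theorem~\ref{The:4.3}'s computation no longer applies. Reusing case (II) ``verbatim'' has a second defect. The rows $\lambda_t(b_t-b_3)^2$ and $\lambda_t(b_t-b_3)^2(b_t+b_3)$ of $M_2$ come from $x\cdot x-1\cdot x^2$ and $x\cdot x^2-1\cdot x^3$, and these use the missing monomial $x^{\ell}$ when $\ell\in\{1,2,3\}$. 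The paper's proof says only ``similar to Theorem~\ref{The:4.3}'' and has the same hole.

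The missing observation is that you do not need the two extra dimensions at all, since Proposition~\ref{distinguisher:RS} only requires $\dim\mathcal{C}^2\neq 2k-1$.
\begin{itemize}
\item Project $\mathcal{C}^2$ onto the first $n-2$ coordinates. The image is the evaluation at $\alpha_1,\dots,\alpha_{n-2}$ of $\operatorname{span}\{fg: f,g\in S_{k,\ell,\eta}\}$, which lies in $\mathbb{F}_q[x]_{\le 2k}$.
\item Since $n-2\ge 2k+1$, this evaluation is injective.
\item The product space has dimension $2k$ for $\ell\in\{0,k-1\}$ and $2k+1$ otherwise. These are exactly the ranks of the $\alpha$-blocks of $M_1,\dots,M_4$, which was your second check, and they need only $\eta\neq 0$ and $n-2>2k$.
\end{itemize}
Hence $\dim\mathcal{C}^2\ge 2k$ for every admissible $\boldsymbol{B}$ and $\boldsymbol{\lambda}$, and the theorem follows.

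If you want the sharper value $2k+2$ for $\ell=0$ when, say, $b_3=0$, you must use the twisted row. For $2\le i\le k-2$,
\[
c(1+\eta x^k)*c(x^i)-c(x)*c(x^{i-1})-\eta\,c(x^{k-1})*c(x^{i+1})=(0,\ldots,0,-\lambda_1b_1^{i},-\lambda_2b_2^{i}).
\]
Taking $i=2,3$ gives two independent vectors because $b_1,b_2\neq 0$.
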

\begin{proof}
Since $b_3\mu_i-b_1\neq b_3(\mu_i-1)$ and $b_3\mu_i-b_1\neq b_1(\mu_i-1)$, we know that $\alpha_{i}\neq b_j$ for all $1\leq i\leq n-2$ and $j=1,3$.
 We first prove that the code $\operatorname{RCTRS}_{\ell,\eta}(\mathcal{A},\boldsymbol{B},\boldsymbol{\lambda})$ is MDS. We will divide the proof into three steps.
 \begin{itemize}
     \item [(i)] For any $k$-subset $\mathcal{L}\subseteq [n-2]$,  if $S_{k-\ell}(\boldsymbol{\alpha}_{\mathcal{L}})=0$, then $1+(-1)^{k-\ell-1}\eta S_{k-\ell}(\boldsymbol{\alpha}_{\mathcal{L}})\neq 0$. If $S_{k-\ell}(\boldsymbol{\alpha}_{\mathcal{L}})\neq 0$, then $(-1)^{k-\ell}S_{k-\ell}(\boldsymbol{\alpha}_{\mathcal{L}})^{-1}\in\mathbb{F}_{q_{1}}^{*}$ and $\eta\in\mathbb{F}_{q}^{*}\backslash\mathbb{F}_{q_2}^*$. Thus, $1+(-1)^{k-\ell-1}\eta S_{k-\ell}(\boldsymbol{\alpha}_{\mathcal{L}})\neq 0$. Therefore, for any $k$-subset $\mathcal{L}\subseteq [n-2]$, we have $1+(-1)^{k-\ell-1}\eta S_{k-\ell}(\alpha_{\mathcal{L}})\neq 0.$
     \item [(ii)] Suppose that there exists a $k-1$-subset $\mathcal{I} = \{i_1, i_2, \ldots, i_{k-1}\} \subseteq [n-2]$ and $1\leq t\leq 2$ such that
\[
\Psi_{\mathcal{I},\ell,\eta}^{(1)}(b_t) = \lambda_t \Psi_{\mathcal{I},\ell,\eta}^{(1)}(b_3), 
\]
that is,
\[
\prod_{i \in \mathcal{I}}(b_t - \alpha_i)\left(1 + (-1)^{k-\ell-1}\eta\, S_{k-\ell}(\alpha_\mathcal{I}, b_t)\right) = \lambda_t \prod_{i \in \mathcal{I}}(b_3 - \alpha_i)\left(1 + (-1)^{k-\ell-1}\eta\, S_{k-\ell}(\alpha_\mathcal{I}, b_3)\right).
\]
So
\begin{equation}\label{Eeq:case 2}
 \begin{aligned}
&(-1)^{k-\ell}\eta\left(\prod_{i \in \mathcal I}(b_t - \alpha_i)S_{k-\ell}(\alpha_{\mathcal I}, b_t) - \lambda_t \prod_{i \in \mathcal I}(b_3 - \alpha_i)S_{k-\ell}(\alpha_{\mathcal I}, b_3)\right)\\
&= \prod_{i \in \mathcal I}(b_t - \alpha_i) - \lambda_t \prod_{i \in \mathcal I}(b_3 - \alpha_i). 
\end{aligned}   
\end{equation}

We first show that the right-hand side of Equation~\eqref{Eeq:case 2} is nonzero. For $t = 1$, since $\lambda_1 \in \mathbb{F}_{q_1} \setminus H$ and
\[
\frac{\prod_{i \in \mathcal I}(b_1 - \alpha_i)}{\prod_{i \in \mathcal I}(b_3 - \alpha_i)} = \prod_{i \in I} \mu_i \in H,
\]
we know that the right-hand side of Equation~\eqref{Eeq:case 2} is nonzero. For $t = 2$, since $\lambda_2 \in \mathbb{F}_{q_2}^* \setminus \mathbb{F}_{q_1}^*$ and
\[
\frac{\prod_{i \in I}(b_2 - \alpha_i)}{\prod_{i \in I}(b_3 - \alpha_i)} \in \mathbb{F}_{q_1},
\]
we know that the right-hand side of Equation~\eqref{Eeq:case 2} is also nonzero. Hence, the right-hand side of Equation~\eqref{Eeq:case 2} is nonzero.

Let
\[
a_t = (-1)^{k-\ell}\left(\prod_{i \in\mathcal I}(b_t - \alpha_i)S_{k-\ell}(\alpha_\mathcal{I}, b_t) - \lambda_t \prod_{i \in\mathcal I}(b_3 - \alpha_i)S_{k-\ell}(\alpha_\mathcal{I}, b_3)\right)
\]
for $t=1,2$. If $a_t = 0$, then Equation~\eqref{Eeq:case 2} does not hold. If $a_t \neq 0$, then
\[
\eta = (-1)^{k-\ell}\frac{\prod_{i \in\mathcal I}(b_t - \alpha_i) - \lambda_t \prod_{i \in\mathcal I}(b_3 - \alpha_i)}{a_t} \in \mathbb{F}_{q_2},
\]
which contradicts the assumption that $\eta \in \mathbb{F}_q^* \setminus \mathbb{F}_{q_2}^*$. Thus, for any $k-1$-subset $\mathcal{I}\subseteq [n-2]$, we have  \[
\Psi_{\mathcal{I},\ell,\eta}^{(1)}(b_t) \neq  \lambda_t \Psi_{\mathcal{I},\ell,\eta}^{(1)}(b_3), \quad t = 1, 2.
\]
\item [(iii)] Suppose that there exists a $k-2$-subset $\mathcal J \subseteq [n-2]$ such that
\[
\Psi_{\mathcal J,\ell,\eta}^{(2)}(b_1,b_2) - \lambda_2 \Psi_{\mathcal J,\ell,\eta}^{(2)}(b_1,b_3) + \lambda_1 \Psi_{\mathcal J,\ell,\eta}^{(2)}(b_2,b_3) = 0,
\]
that is,
\begin{align}
&(-1)^{k-\ell}\eta\Bigg((b_2-b_1)\prod_{i \in\mathcal J}(b_2-\alpha_i)(b_1-\alpha_i)S_{k-\ell}(\alpha_{\mathcal J},b_1,b_2) \nonumber\\
&\qquad - \lambda_2(b_3-b_1)\prod_{i \in\mathcal J}(b_3-\alpha_i)(b_1-\alpha_i)S_{k-\ell}(\alpha_{\mathcal J},b_1,b_3) \nonumber\\
&\qquad + \lambda_1(b_3-b_2)\prod_{i \in\mathcal J}(b_2-\alpha_i)(b_3-\alpha_i)S_{k-\ell}(\alpha_{\mathcal J},b_2,b_3)\Bigg) \nonumber\\
&= (b_2-b_1)\prod_{i \in\mathcal J}(b_2-\alpha_i)(b_1-\alpha_i) - \lambda_2(b_3-b_1)\prod_{i \in\mathcal J}(b_3-\alpha_i)(b_1-\alpha_i) \nonumber\\
&\qquad + \lambda_1(b_3-b_2)\prod_{i \in\mathcal J}(b_3-\alpha_i)(b_2-\alpha_i). \label{Eeq:case 3}
\end{align}

Similarly to Step (ii), we first prove that the right-hand side of Equation~\eqref{Eeq:case 3} is nonzero. If the right-hand side of Equation~\eqref{Eeq:case 3} equals zero, then
\[
(b_2-b_1)\prod_{i \in\mathcal J}\frac{b_2-\alpha_i}{b_3-\alpha_i}\mu_i - \lambda_2(b_3-b_1)\prod_{i \in\mathcal J}\mu_i + \lambda_1(b_3-b_2)\prod_{i \in\mathcal J}\frac{b_2-\alpha_i}{b_3-\alpha_i} = 0.
\]
Because $b_1,b_2,b_3$ are pairwise distinct, then
\[
\lambda_2 = \frac{b_2-b_1}{b_3-b_1}\prod_{i \in\mathcal J}\frac{b_2-\alpha_i}{b_3-\alpha_i} + \lambda_1\frac{b_3-b_2}{b_3-b_1}\prod_{i \in \mathcal J}\frac{b_2-\alpha_i}{b_3-\alpha_i}\mu_i^{-1} \in \mathbb{F}_{q_1},
\]
which contradicts the assumption that $\lambda_2 \in \mathbb{F}_{q_2}^* \setminus \mathbb{F}_{q_1}$.
Therefore, the right-hand side of Equation~\eqref{Eeq:case 3} is nonzero. Let
\begin{equation*}
\begin{aligned}
m=&(b_2-b_1)\prod_{i \in\mathcal J}(b_2-\alpha_i)(b_1-\alpha_i)S_{k-\ell}(\alpha_{\mathcal J},b_1,b_2)\\
-& \lambda_2(b_3-b_1)\prod_{i \in\mathcal J}(b_3-\alpha_i)(b_1-\alpha_i)S_{k-\ell}(\alpha_{\mathcal J},b_1,b_3)\\
+& \lambda_1(b_3-b_2)\prod_{i \in\mathcal J}(b_2-\alpha_i)(b_3-\alpha_i)S_{k-\ell}(\alpha_{\mathcal J},b_2,b_3).
\end{aligned}   
\end{equation*}
 If $m=0$, then Equation~\eqref{Eeq:case 3} does not hold. If $m\neq 0$, then 
\begin{equation*}
    \begin{aligned}
        \eta=(-1)^{k-\ell}m^{-1}&\left((b_2-b_1)\prod_{i \in\mathcal J}(b_2-\alpha_i)(b_1-\alpha_i) - \lambda_2(b_3-b_1)\prod_{i \in\mathcal J}(b_3-\alpha_i)(b_1-\alpha_i)\right.\\
&+\left. \lambda_1(b_3-b_2)\prod_{i \in\mathcal J}(b_3-\alpha_i)(b_2-\alpha_i)\right)\in\mathbb{F}_{q_2},
    \end{aligned}
\end{equation*}
which contradicts the assumption that $\eta \in \mathbb{F}_{q}^* \setminus \mathbb{F}_{q_2}^*$. Thus, for any $k-2$-subset $\mathcal J \subseteq [n-2]$, we have
\[
\Psi_{\mathcal J,\ell,\eta}^{(2)}(b_1,b_2) - \lambda_2 \Psi_{\mathcal J,\ell,\eta}^{(2)}(b_1,b_3) + \lambda_1 \Psi_{\mathcal J,\ell,\eta}^{(2)}(b_2,b_3)\neq 0.
\]
 \end{itemize}

 From Theorem~\ref{Thm:4.2}, we know that $\operatorname{RCTRS}_{\ell,\eta}(\mathcal{A},\boldsymbol{B},\boldsymbol{\lambda})$ is MDS. 
    
    For calculating the dimension of the Schur square of $\operatorname{RCTRS}_{\ell,\eta}(\mathcal{A},\boldsymbol{B},\boldsymbol{\lambda})$,  similar to Theorem~\ref{The:4.3}, the dimension of the Schur square code of $\operatorname{RCTRS}_{\ell,\eta}(\mathcal{A},\boldsymbol{B},\boldsymbol{\lambda})$ is at least $2k+2$ for $7\leq k<\frac{n}{2}-1$. Thus, the code $\operatorname{RCTRS}_{\ell,\eta}(\mathcal{A},\boldsymbol{B},\boldsymbol{\lambda})$ is a non-RS MDS code by Proposition~\ref{distinguisher:RS}.

 \end{proof}

\section{The dual codes of \texorpdfstring{$\operatorname{CTRS}(\mathcal{A},\boldsymbol{B},\boldsymbol{\lambda})$ and $\operatorname{RCTRS}_{\ell,\eta}(\mathcal{A},\boldsymbol{B},\boldsymbol{\lambda})$}{CTRS(A,B,lambda) and RCTRS(l,eta,A,B,lambda)}}\label{sec5}
In this section, we determine the dual codes or parity-check matrices of $\operatorname{CTRS}(\mathcal{A},\boldsymbol{B},\boldsymbol{\lambda})$ and $\operatorname{RCTRS}_{\ell,\eta}(\mathcal{A},\boldsymbol{B},\boldsymbol{\lambda})$, respectively.

For any two vectors $\boldsymbol{x}=(x_{1},\cdots,x_{n}),\boldsymbol{y}=(y_{1},\cdots,y_{n})\in\mathbb{F}_{q}^n$, the inner product of $\boldsymbol{x}$ and $\boldsymbol{y}$ is defined as $\boldsymbol{x}\cdot\boldsymbol{y}=\sum\limits_{i=1}^{n}x_{i}y_{i}$. The dual code of a linear code $\mathcal{C}$ is defined to be
$$
\mathcal{C}^{\perp}=\left\{\boldsymbol{x} \in \mathbb{F}_{q}^{n}: \boldsymbol{x} \cdot \boldsymbol{c}=\sum_{i=1}^{n} c_{i} x_{i}=0 \mbox { for all } \boldsymbol{c} \in \mathcal{C}\right\}.
$$
A generator matrix of the dual code $\mathcal{C}^{\perp}$ is called a parity-check matrix of the code $\mathcal{C}$.

Firstly, we recall a useful result from~\cite{sui2022mds1}.
\begin{lemma}[\cite{sui2022mds1}]\label{lemcont:4.1}
	Let $\alpha_{1},\cdots,\alpha_{n-2}$ be distinct elements of $\mathbb{F}_{q}$ and $\prod\limits_{i=1}^{n-2}(x-\alpha_{i})=\sum\limits_{j=0}^{n-2}\sigma_{j}x^{n-2-j}$. Let $\Lambda_{0}=1$ and $\boldsymbol{y}=(\Lambda_{0},\Lambda_{1},\cdots,\Lambda_{n-2})$ be the unique solution of the following system of equations:
	\begin{equation*}
	\begin{pmatrix}
	\sigma_{0}&0&0&\cdots&0\\
	\sigma_{1}&\sigma_{0}&0&\cdots&0\\
	\sigma_{2}&\sigma_{1}&\sigma_{0}&\cdots&0\\
	\vdots&\vdots&\vdots&\ddots&\vdots\\
	\sigma_{n-2}&\sigma_{n-3}&\sigma_{n-4}&\cdots&\sigma_{0}
	\end{pmatrix}
	\begin{pmatrix}
	\Lambda_{0}\\\Lambda_{1}\\ \vdots\\ \Lambda_{n-2}
	\end{pmatrix}=
	\begin{pmatrix}
	1\\0\\ \vdots\\ 0
	\end{pmatrix}.
	\end{equation*}
	For any fixed $0\leq t\leq n-2$, if $\alpha_{i}^{n-3+t}=\sum\limits_{j=0}^{n-3}f_{j}\alpha_{i}^j$ for $1\leq i\leq n-2$, then $f_{n-3}=\Lambda_{t}$.
\end{lemma}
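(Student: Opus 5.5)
The plan is to interpret $f_{n-3}$ as a linear functional applied to the remainder of $x^{n-3+t}$ modulo $P(x)=\prod_{i=1}^{n-2}(x-\alpha_i)=\sum_{j=0}^{n-2}\sigma_j x^{n-2-j}$. We then show that the resulting sequence obeys exactly the triangular system defining $\Lambda_0,\dots,\Lambda_{n-2}$.

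\emph{Step 1: reduction to polynomial remainders.} Since the $\alpha_i$ are distinct, a polynomial of degree at most $n-3$ is determined by its values at $\alpha_1,\dots,\alpha_{n-2}$. Hence the hypothesis $\alpha_i^{n-3+t}=\sum_{j=0}^{n-3}f_j\alpha_i^j$ for all $i$ forces $\sum_j f_jx^j$ to equal the remainder $R_{m}(x)$ of $x^{m}$ upon division by $P(x)$, where $m=n-3+t$. Indeed, $x^m-R_m(x)$ vanishes at every $\alpha_i$, and the difference of two interpolants of degree at most $n-3$ is zero. For each $m\ge 0$ define $r_m$ to be the coefficient of $x^{n-3}$ in $R_m(x)$. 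The claim becomes $r_{n-3+t}=\Lambda_t$ for $0\le t\le n-2$.

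\emph{Step 2: a linear recurrence for $r_m$.} The map $g\mapsto$ (coefficient of $x^{n-3}$ in $g \bmod P$) is $\mathbb{F}_q$-linear on $\mathbb{F}_q[x]$ and kills every multiple of $P$. Applying it to $x^{s}P(x)=\sum_{j=0}^{n-2}\sigma_j x^{s+n-2-j}$ for $s\ge 0$ gives
\[
\sum_{j=0}^{n-2}\sigma_j\, r_{s+n-2-j}=0 .
\]
The initial values are immediate: for $0\le m\le n-3$ the remainder is $x^m$ itself, so $r_m=0$ for $m<n-3$ and $r_{n-3}=1$.

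\emph{Step 3: matching with the triangular system.} Put $u_t=r_{n-3+t}$, so $u_0=1=\Lambda_0$ and $u_t=0$ for $t<0$. Substituting $s=t-1$ with $1\le t\le n-2$ into the recurrence gives $\sum_{j=0}^{n-2}\sigma_j u_{t-j}=0$. The terms with $j>t$ vanish, so this reduces to $\sum_{j=0}^{t}\sigma_j u_{t-j}=0$, which is precisely row $t+1$ of the lower-triangular system. Row $1$ reads $\sigma_0u_0=1$, which holds since $\sigma_0=1$. The matrix has diagonal $\sigma_0=1$, so the system has a unique solution, and therefore $u_t=\Lambda_t$ for all $0\le t\le n-2$. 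This gives $f_{n-3}=\Lambda_t$.

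The only delicate point is Step 2: one must make sure the functional is well defined and linear, and that the index shift correctly captures the boundary terms (negative indices and the value $r_{n-3}=1$). Everything else is bookkeeping. As a sanity check, $\Lambda_t$ should be the complete homogeneous symmetric polynomial $h_t(\alpha_1,\dots,\alpha_{n-2})$, since $1/\prod_i(1-\alpha_iz)=\sum_t h_tz^t$. This agrees with the known Lagrange identity $\sum_i \alpha_i^{n-3+t}/P'(\alpha_i)=h_t$, which provides an alternative route if needed.
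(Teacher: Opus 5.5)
Your proof is correct and complete. The paper gives no proof of this lemma: it is quoted as a black box from \cite{sui2022mds1}, so there is no in-paper argument to compare yours with.

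The three steps check out:
\begin{itemize}
\item Step 1 correctly identifies $\sum_j f_jx^j$ with $x^{n-3+t}\bmod P$, since an interpolant of degree at most $n-3$ through $n-2$ distinct nodes is unique.
\item Step 2's recurrence $\sum_{j=0}^{n-2}\sigma_j r_{s+n-2-j}=0$, obtained by applying your linear functional to $x^sP(x)$, is valid.
\item In Step 3, taking $s=t-1$ with $1\le t\le n-2$ makes every index $n-3+t-j$ nonnegative. The indices below $n-3$ contribute zero. So $(r_{n-3},\dots,r_{2n-5})$ satisfies the same unit-diagonal triangular system as $(\Lambda_0,\dots,\Lambda_{n-2})$, and uniqueness finishes the argument.
\end{itemize}
Strictly speaking, your shifted sequence is only defined for indices $t\ge-(n-3)$, but only such indices occur, so nothing is lost.

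One remark connects your argument to how the paper actually uses the lemma. Your functional $g\mapsto[x^{n-3}](g\bmod P)$ coincides with $g\mapsto\sum_r u_r\,g(\alpha_r)$, where $u_r=\prod_{j\ne r}(\alpha_r-\alpha_j)^{-1}=1/P'(\alpha_r)$ are the weights of Theorem~\ref{The:5.2}. This holds because $\sum_r u_r g(\alpha_r)$ is the leading coefficient of the Lagrange interpolant of $g$. Your Steps 2--3 therefore directly establish $w_{n-3+t}=\sum_r u_r\alpha_r^{n-3+t}=\Lambda_t$. In the proof of Theorem~\ref{The:5.2}, that identity is what the lemma is combined with $w_r=\delta_{r,n-3}$ (for $r\le n-3$) to produce. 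Your closing sanity check $\Lambda_t=h_t(\alpha_1,\dots,\alpha_{n-2})$ via $1/\prod_i(1-\alpha_i z)$ is also correct.
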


{We first give a parity-check matrix for the $\operatorname{RCTRS}$ code. Then as a special sub-class (setting $\eta=0$), we can obtain  a parity-check matrix for the $\operatorname{CTRS}$ code.}
\begin{theorem}\label{The:5.2}
	Let $3\leq k\leq n-3$. Let $\mathcal{A}=\left\{\alpha_{1},\cdots,\alpha_{n-2}\right\}\subseteq\mathbb{F}_{q},\prod\limits_{i=1}^{n-2}(x-\alpha_{i})=\sum\limits_{j=0}^{n-2}\sigma_{j}x^{n-2-j}$ and $u_{i}=\prod\limits_{1\leq j\leq n-2\atop j\neq i}(\alpha_{i}-\alpha_{j})^{-1}$ for all $1\leq i\leq n-2$. Let $\boldsymbol{B}=\left(b_{1},b_{2},b_{3}\right)\in\mathbb{F}_{q}^3,\boldsymbol{\lambda}=\left(\lambda_{1},\lambda_{2}\right)\in\mathbb{F}_{q}^2$ and $\eta\in\mathbb{F}_{q},0\leq\ell\leq k-1$. Then $\mathcal{C}=\operatorname{RCTRS}_{\ell,\eta}(\mathcal{A},\boldsymbol{B},\boldsymbol{\lambda})$ has a parity-check matrix of the following form 
    \begin{equation}\label{equ:4.1}
	H=\begin{pmatrix}
	\cdots&u_{r}&\cdots&0&0\\
    \cdots&u_{r}\alpha_{r}&\cdots&0&0\\
    \vdots&\vdots&\vdots&\vdots&\vdots\\
    \cdots&u_{r}\alpha_{r}^{n-k-4}&\cdots&0&0\\
    \cdots&u_{r}f(\alpha_{r})&\cdots&0&0\\
    \cdots&u_{r}\sum\limits_{i=n-k-3}^{n-3}\sum\limits_{j=0}^{n-3-i}\sigma_{j}\left(b_{1}^{n-3-i-j}-\lambda_{1}b_{3}^{n-3-i-j}\right)\alpha_{r}^i&\cdots&-1&0\\
    \cdots&u_{r}\sum\limits_{i=n-k-3}^{n-3}\sum\limits_{j=0}^{n-3-i}\sigma_{j}\left(b_{2}^{n-3-i-j}-\lambda_{2}b_{3}^{n-3-i-j}\right)\alpha_{r}^i&\cdots&0&-1\\
	\end{pmatrix}_{1\leq r\leq n-2}\in\mathbb{F}_{q}^{(n-k)\times n},
	\end{equation}
    where $f(x)=x^{n-k-3}\left(1-\eta\sum\limits_{j=0}^{k-\ell}\sigma_{j}x^{k-\ell-j}\right)$ and $\sigma_{j}=(-1)^j\sum\limits_{I\subseteq [n-2]\atop \left|I\right|=j}\prod\limits_{i\in I}\alpha_{i}$.
\end{theorem}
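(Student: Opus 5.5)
The plan is to verify directly that $H$ has full row rank $n-k$ and that $G_{\operatorname{RCTRS}}H^{T}=0$. Then the row space of $H$ is exactly $\mathcal{C}^{\perp}$, since $\dim\mathcal{C}^{\perp}=n-k$.

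\emph{Rank.} The last two rows have entries $-1$ in two distinct positions, namely columns $n-1$ and $n$. Every other row vanishes in these two columns, so it suffices to show that the first $n-k-2$ rows, restricted to the first $n-2$ coordinates, are independent. These rows are $(u_r g(\alpha_r))_r$ with $g$ running over $1,x,\dots,x^{n-k-4}$ and $f$. Here $f$ has degree at most $n-\ell-3\le n-3$, and its coefficient of $x^{n-k-3}$ is $1-\eta\sigma_{k-\ell}$ if $\ell=0$ and $1$ otherwise. I will handle the possibly degenerate case by noting that $f$ is in any case not in the span of $1,\dots,x^{n-k-4}$. I expect to have to argue this via the top-degree term $-\eta\sigma_0 x^{n-\ell-3}$ when $\eta\neq 0$. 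Since $u_r\neq 0$ and evaluation is injective on polynomials of degree $<n-2$, independence follows.

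\emph{Orthogonality.} The basic tool is the dual-GRS identity $\sum_r u_r\alpha_r^m=0$ for $0\le m\le n-4$ and $=1$ for $m=n-3$. More generally, Lemma~\ref{lemcont:4.1} gives $\sum_r u_r\alpha_r^{n-3+t}=\Lambda_t$, where $\sum_{j}\sigma_j\Lambda_{t-j}=0$ for $t\ge 1$. The rows of $G_{\operatorname{RCTRS}}$ are evaluations of $x^i$ ($i\ne\ell$, $i<k$) and of $x^\ell+\eta x^k$, with twisted last two entries. I will take the inner products in turn.

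\begin{enumerate}
\item The first $n-k-3$ rows of $H$ are zero on the twisted columns. Against $x^i$ they give $\sum_r u_r\alpha_r^{s}$ with $s\le n-4$, hence $0$. The twisted row has degree $k$, but the needed $x^{k}$ term pairs only with degrees up to $n-k-4$, so $s\le n-4$ again.
\item For the $f$-row, against $x^i$ with $i\ne \ell$, I expand $f(x)x^i=x^{n-k-3+i}-\eta\sum_j\sigma_j x^{n-3-\ell+i-j}$ and apply Lemma~\ref{lemcont:4.1} termwise. This yields $\Lambda_{i-k}-\eta\sum_{j}\sigma_j\Lambda_{i-\ell-j}$, with $\Lambda_{<0}=0$. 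For $i<\ell$ everything vanishes. For $\ell<i<k$ the sum is a full recurrence $\sum_j\sigma_j\Lambda_{(i-\ell)-j}=0$ with $i-\ell\ge 1$, so it also vanishes. Against $x^\ell+\eta x^k$ one gets $\Lambda_{\ell-k}+\eta\Lambda_0-\eta\sum_j\sigma_j\Lambda_{-j}-\eta^2\sum_j\sigma_j\Lambda_{k-\ell-j}=\eta-\eta\sigma_0\Lambda_0=0$. This is exactly why $f$ is chosen with the factor $1-\eta\sum\sigma_j x^{k-\ell-j}$.
\item For the last two rows, write $P(x)=\prod(x-\alpha_i)$. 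The key observation is that
\[
\frac{P(x)-P(b)}{x-b}=\sum_{i=0}^{n-3}\sum_{j=0}^{n-3-i}\sigma_j b^{n-3-i-j}x^i .
\]
The $r$-th entry is therefore $u_r$ times the truncation $h_b$ of this quotient to degrees $i\ge n-k-3$, taken in the combination $b_t$ minus $\lambda_t b_3$. The discarded low-degree part, multiplied by $\alpha_r^s$ with $s\le k$, has degree $\le n-4$ and contributes $0$.
\item For the untruncated quotient, $P(\alpha_r)=0$ gives $\sum_r u_r\alpha_r^s\frac{-P(b)}{\alpha_r-b}$. By Lagrange interpolation of $x^s$ (degree $s\le n-3$) at the nodes $\alpha_r$, evaluated at $b$, this equals $b^s$. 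This holds also when $b=\alpha_r$ for some $r$, by continuity as a polynomial identity.
\item Hence the pairing with the row of $x^s$ equals $(b_t^s-\lambda_tb_3^s)-(b_t^s-\lambda_tb_3^s)=0$, thanks to the $-1$ entry. The twisted row works the same way by linearity in $s\in\{\ell,k\}$, using $k\le n-3$.
\end{enumerate}

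The main obstacle is step 2: the bookkeeping of the $f$-row against the twisted generator row and the index ranges in the $\Lambda$-recurrence, including the boundary cases $\ell=0$ and $i$ near $k$.
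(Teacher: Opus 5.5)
Your proposal is correct, and it follows a genuinely different route from the paper.

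\textbf{The paper's route.} The paper writes $H$ in block form $\begin{pmatrix} H_1 & 0\\ \Gamma & -I_2\end{pmatrix}$. It imports $H_1$ (the rows $u_r\alpha_r^i$ for $i\le n-k-4$, and $u_rf(\alpha_r)$) from \cite{gu2025deep} as a known parity-check matrix of $TRS_k(\mathcal{A},\ell,\eta)$. It then \emph{derives} $\gamma_t$:
\begin{itemize}
\item it replaces $G_1\gamma_t^T=\beta_t$ by the stronger system $G_2\gamma_t^T=\phi_t$, where $G_2$ is the full Vandermonde matrix of degree $k$;
\item it makes the ansatz $\gamma_t=(u_r a^{(t)}(\alpha_r))_r$;
\item it uses Lemma~\ref{lemcont:4.1} to rewrite $G_2\gamma_t^T$ as a lower-triangular Toeplitz matrix in the $\Lambda$'s applied to the top coefficients of $a^{(t)}$;
\item it inverts that matrix by the Toeplitz matrix of the $\sigma$'s.
\end{itemize}

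\textbf{Your route.} You instead \emph{verify} $G_{\operatorname{RCTRS}}H^T=0$ and $\operatorname{rank}H=n-k$ directly. This makes the argument self-contained.
\begin{itemize}
\item The $f$-row is checked with the $\Lambda$-recurrence rather than cited. Your index bookkeeping there is right: every index used lies in $[1,n-2]$, since $k-\ell\le k\le n-3$.
\item For the two twisted rows, you recognize the coefficients as a truncation of $Q_b(x)=(P(x)-P(b))/(x-b)$. Since $u_rQ_b(\alpha_r)=L_r(b)$ is the Lagrange basis polynomial at the nodes $\alpha_1,\ldots,\alpha_{n-2}$, orthogonality reduces to Lagrange interpolation of $x^s$ with $s\le k\le n-3$. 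No Toeplitz inversion is needed.
\end{itemize}

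\textbf{What each route buys.} Yours explains the shape of the formula. Modulo the first $n-k-3$ rows of $H$, the twisted rows are simply $\bigl(L_r(b_t)-\lambda_tL_r(b_3)\bigr)_r$ followed by $-1$, so the truncation at $i\ge n-k-3$ is inessential. The paper's route shows how the formula is found, and it reuses the same $\Lambda/\sigma$ machinery that underlies the TRS part.

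\textbf{Three small repairs.}
\begin{itemize}
\item The coefficient of $x^{n-k-3}$ in $f$ is $1-\eta\sigma_{k-\ell}$ for \emph{every} $\ell$, not only for $\ell=0$, because the term $j=k-\ell$ always lands there. This is harmless. $f$ is a nonzero multiple of $x^{n-k-3}$ (its top term is $-\eta x^{n-\ell-3}$ when $\eta\neq0$) of degree at most $n-3$, and that is all the rank argument needs.
\item ``Continuity'' has no meaning over $\mathbb{F}_q$. Use the exact identity $Q_b(\alpha_r)=\prod_{j\ne r}(b-\alpha_j)$ in $\mathbb{F}_q[b]$ instead. It covers $b\in\mathcal{A}$ without a case split, which matters because the theorem does not exclude $b_t\in\mathcal{A}$.
\item State explicitly that $\dim\mathcal{C}=k$. $S_{k,\ell,\eta}$ consists of polynomials of degree at most $k$, and evaluation at the $n-2\ge k+1$ distinct $\alpha_r$ is injective on them. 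This is what makes $\operatorname{rank}H=n-k$ together with $G_{\operatorname{RCTRS}}H^T=0$ sufficient.
\end{itemize}
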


\begin{proof}
    Let $G_{1}\in\mathbb{F}_{q}^{k\times (n-2)}$  be the generator matrix  of $TRS_{k}(\mathcal{A},\ell,\eta)$. From \cite[Theorem 2.3]{gu2025deep}, we know that 
    \begin{equation*}
        H_{1}=\begin{pmatrix}
            u_{1}&\cdots&u_{n-2}\\
            u_{1}\alpha_{1}&\cdots&u_{n-2}\alpha_{n-2}\\
            \vdots&\vdots&\vdots\\
            u_{1}\alpha_{1}^{n-k-4}&\cdots&u_{n-2}\alpha_{n-2}^{n-k-4}\\
            u_{1}f(\alpha_{1})&\cdots&u_{n-2}f(\alpha_{n-2})\\
        \end{pmatrix}
    \end{equation*}
    is a parity-check matrix of $TRS_{k}(\mathcal{A},\ell,\eta)$, where $f(x)=x^{n-k-3}\left(1-\eta\sum\limits_{j=0}^{k-\ell}\sigma_{j}x^{k-\ell-j}\right)$. Let $G_{\operatorname{RCTRS}}=\begin{pmatrix}
G_{1}&\boldsymbol{\beta}_{1}&\boldsymbol{\beta}_{2}
    \end{pmatrix},$ 
    where for $t=1,2$,
    $$\boldsymbol{\beta}_{t}^T=\left(1-\lambda_{t},\cdots,b_{t}^{\ell-1}-\lambda_{t}b_{3}^{\ell-1},b_{t}^{\ell+1}-\lambda_{t}b_{3}^{\ell+1},\cdots,b_{t}^{k-1}-\lambda_{t}b_{3}^{k-1},b_{t}^{\ell}-\lambda_{t}b_{3}^{\ell}+\eta(b_{t}^{k}-\lambda_{t}b_{3}^{k})\right)\in\mathbb{F}_{q}^{k}.$$
    Obviously, the code $\operatorname{RCTRS}_{\ell,\eta}(\mathcal{A},\boldsymbol{B},\boldsymbol{\lambda})$ has a parity-check matrix of the following form:
    \[ H_{\operatorname{RCTRS}}=\begin{pmatrix}
        H_{1}&\boldsymbol{0}_{(n-k-2)\times 1}&\boldsymbol{0}_{(n-k-2)\times 1}\\
        \boldsymbol{\gamma}_{1}& -1&0\\
        \boldsymbol{\gamma}_{2}&0&-1
    \end{pmatrix},\]
    where $\gamma_{1},\gamma_{2}\in \mathbb{F}_{q}^{ n-2}$.
    It suffices to construct $\gamma_t$ satisfying the stronger system
$G_2\gamma_t^{T}=\phi_t$. Since the rows of $G_1$ are obtained from the
rows of $G_2$ by replacing $x^\ell$ with $x^\ell+\eta x^k$ and deleting
the unused row, this stronger system implies $G_1\gamma_t^{T}=\beta_t$ for $1\leq t\leq 2$. Let $$G_{2}=\begin{pmatrix}
        1&1&\cdots&1\\ \alpha_{1}&\alpha_{2}&\cdots&\alpha_{n-2}\\
        \vdots&\vdots&\vdots&\vdots\\
        \alpha_{1}^{k-1}&\alpha_{2}^{k-1}&\cdots&\alpha_{n-2}^{k-1}\\
        \alpha_{1}^{k}& \alpha_{2}^{k}&\cdots&\alpha_{n-2}^{k}
    \end{pmatrix}$$ and $$\phi_{t}^{T}=\left(1-\lambda_{t},b_{t}-\lambda_{t}b_{3},\cdots,b_{t}^{k-1}-\lambda_{t}b_{3}^{k-1},b_{t}^{k}-\lambda_{t}b_{3}^k\right).$$ 
    Because the solution of equation $G_{2}\boldsymbol{x}^T=\phi_{t}$ must be the solution of $G_{1}\boldsymbol{x}^T=\beta_{t}$, thus it suffices to verify $G_{2}\gamma^T_{t}=\phi_{t}$ for $t=1,2$.
    %{\color{red}Because the solution of equation $G_{2}\gamma_{t}^T=\phi_{t}$ must be the solution of $G_{1}\gamma_{t}^T=\beta_{t}$, thus it suffices to verify $G_{2}\gamma^T_{t}=\phi_{t}$ for $t=1,2$.}

For any nonnegative integer $r$ let $w_{r}=\sum\limits_{i=1}^{n-2}u_{{i}}\alpha_{i}^r$, then 
	\begin{equation}\label{equation:3.1}
	    w_{r}=
    \begin{cases}
       0&\mbox{if}\ 0\leq r\leq n-4\\
	1&\mbox{if}\ r=n-3
    \end{cases}    .
	\end{equation}
%    \left\{
%	\begin{array}{ll}
%	0&if\ 1\leq t\leq n-k-3\\
%	1&if\ t=n-k-2
%	\end{array}\right. 
    Let 
    \begin{equation*}
        \Lambda_{0}=1\ \mbox{and}\ \Lambda_{i}=-\sum\limits_{j=1}^{i}\sigma_{j}\Lambda_{i-j},\ \forall 1\leq i\leq n-2.
    \end{equation*}
     For any fixed $0\leq r\leq n-k-3$, by  Lagrange interpolation there exists $f_{r,0},f_{r,1},\cdots, f_{r,n-3}$ such that
    \[
    \alpha_{i}^{n-3+r}=\sum\limits_{j=0}^{n-3}f_{r,j}\alpha_{i}^j,\,\forall 1\leq i\leq n-2.
    \]
     By Lemma~\ref{lemcont:4.1}, we have $f_{r,n-3}=\Lambda_{r}$. So \begin{equation}\label{equation:3.2}
	w_{n-3+r}=\sum\limits_{i=1}^{n-2}u_{i}\alpha_{i}^{n-3+r}=\sum\limits_{j=0}^{n-3}f_{r,j}\sum\limits_{i=1}^{n-2}u_{i}\alpha_{i}^j=f_{r,n-3}w_{n-3}=f_{r,n-3}=\Lambda_{r}.
	\end{equation}
    Let
    \begin{equation*}
        \gamma_{t}^T=\left(u_{1}\sum\limits_{i=0}^{n-3}a_{i}^{(t)}\alpha_{1}^i,\cdots,u_{n-2}\sum\limits_{i=0}^{n-3}a_{i}^{(t)}\alpha_{n-2}^i\right),
    \end{equation*}
    then
 \begin{equation*}
        \begin{aligned}
            G_{2}\cdot\gamma_{t}^T&=\begin{pmatrix}
                1&1&\cdots&1\\ \alpha_{1}&\alpha_{2}&\cdots&\alpha_{n-2}\\
                \vdots&\vdots&\vdots&\vdots\\
\alpha_{1}^{k}&\alpha_{2}^{k}&\cdots&\alpha_{n-2}^{k}
            \end{pmatrix}\cdot\begin{pmatrix}
                   u_{1}\sum\limits_{i=0}^{n-3}a_{i}^{(t)}\alpha_{1}^i\\
                   \vdots\\
                   u_{n-2}\sum\limits_{i=0}^{n-3}a_{i}^{(t)}\alpha_{n-2}^i
                \end{pmatrix}
                =\begin{pmatrix}
                    a_{n-3}^{(t)}\Lambda_{0}\\
                    a_{n-4}^{(t)}\Lambda_{0}+a_{n-3}^{(t)}\Lambda_{1}\\
                    \vdots\\
                    \sum\limits_{i=n-k-3}^{n-3}a_{i}^{(t)}\Lambda_{i-(n-k-3)}
                \end{pmatrix}\\
                &=\small{\begin{pmatrix}
            \Lambda_{0}&0&\cdots&0\\
    \Lambda_{1}&\Lambda_{0}&\cdots&0\\
    \vdots&\vdots&\vdots&\vdots\\
    \Lambda_{k}&\Lambda_{k-1}&\cdots&\Lambda_{0}
                \end{pmatrix}\cdot\begin{pmatrix}
                    a_{n-3}^{(t)}\\a_{n-4}^{(t)}\\
                    \vdots\\ a_{n-k-3}^{(t)}
                \end{pmatrix}}.
        \end{aligned}
    \end{equation*} 
From Lemma~\ref{lemcont:4.1}, we know that
    \[  \begin{pmatrix}
        \sigma_{0}&0&\cdots&0\\ \sigma_{1}&\sigma_{0}&\cdots&0\\
        \vdots&\vdots&\vdots&\vdots\\
        \sigma_{k}&\sigma_{k-1}&\cdots&\sigma_{0}
    \end{pmatrix}\cdot \begin{pmatrix}
                    \Lambda_{0}&0&\cdots&0\\
                 \Lambda_{1}&\Lambda_{0}&\cdots&0\\
        \vdots&\vdots&\vdots&\vdots\\
                    \Lambda_{k}&\Lambda_{k-1}&\cdots&\Lambda_{0}
                \end{pmatrix}=I_{k+1}.   \]
    Thus, we just need 
    \begin{equation*}
        \begin{aligned}
            \begin{pmatrix}
                    a_{n-3}^{(t)}\\ a_{n-4}^{(t)}\\ \vdots\\
                    a_{n-k-3}^{(t)}
            \end{pmatrix}        
                    &=\begin{pmatrix}
        \sigma_{0}&0&\cdots&0\\ \sigma_{1}&\sigma_{0}&\cdots&0\\
        \vdots&\vdots&\vdots&\vdots\\
        \sigma_{k}&\sigma_{k-1}&\cdots&\sigma_{0}
    \end{pmatrix}\cdot G_{2}\cdot\phi_{t}^T=\begin{pmatrix}
        \sigma_{0}&0&\cdots&0\\ \sigma_{1}&\sigma_{0}&\cdots&0\\
        \vdots&\vdots&\vdots&\vdots\\
        \sigma_{k}&\sigma_{k-1}&\cdots&\sigma_{0}
    \end{pmatrix}\cdot \begin{pmatrix}
        1-\lambda_{t}\\ b_{t}-\lambda_{t}b_{3}\\
        \vdots\\ b_{t}^{k}-\lambda_{t}b_{3}^{k}
    \end{pmatrix}\\
    &=\begin{pmatrix}
        1-\lambda_{t}\\
        \sum\limits_{i=0}^1\sigma_{i}\left(b_{t}^{1-i}-\lambda_{t}b_{3}^{1-i}\right)\\
        \vdots\\
        \sum\limits_{i=0}^{k}\sigma_{i}\left(b_{t}^{k-i}-\lambda_{t}b_{3}^{k-i}\right)
    \end{pmatrix} .
        \end{aligned}
    \end{equation*}
    In other words, 
    \begin{equation*}
        \begin{aligned}
\gamma_{t}^{T}&=\left(\cdots,u_{r}\sum\limits_{i=n-k-3}^{n-3}a_{i}^{(t)}\alpha_{r}^i,\cdots\right)_{1\leq r\leq n-2}\\
            &=\left(\cdots,u_{r}\sum\limits_{i=n-k-3}^{n-3}\sum\limits_{j=0}^{n-3-i}\sigma_{j}\left(b_{t}^{n-3-i-j}-\lambda_{t}b_{3}^{n-3-i-j}\right)\alpha_{r}^i,\cdots\right)_{1\leq r\leq n-2}
        \end{aligned}
    \end{equation*}
       
\end{proof}

Taking $\eta=0$ in Theorem~\ref{The:5.2}, we can obtain a parity-check matrix of the code $\operatorname{CTRS}(\mathcal{A},\boldsymbol{B},\boldsymbol{\lambda})$.

\begin{theorem}\label{Lem:5.3}
Let $3\leq k\leq n-3$.	Let $\mathcal{A}=\left\{\alpha_{1},\cdots,\alpha_{n-2}\right\}\subseteq\mathbb{F}_{q},\prod\limits_{i=1}^{n-2}(x-\alpha_{i})=\sum\limits_{j=0}^{n-2}\sigma_{j}x^{n-2-j}$ and $u_{i}=\prod\limits_{1\leq j\leq n-2\atop j\neq i}(\alpha_{i}-\alpha_{j})^{-1}$ for all $1\leq i\leq n-2$. Let $\boldsymbol{B}=\left(b_{1},b_{2},b_{3}\right)\in\mathbb{F}_{q}^3$ and $\boldsymbol{\lambda}=\left(\lambda_{1},\lambda_{2}\right)\in\mathbb{F}_{q}^2$. Then $\mathcal{C}=\operatorname{CTRS}(\mathcal{A},\boldsymbol{B},\boldsymbol{\lambda})$ has a parity-check matrix of the following form
    \begin{equation}\label{equ:4.1}
	H=\begin{pmatrix}
	\cdots&u_{r}&\cdots&0&0\\
    \cdots&u_{r}\alpha_{r}&\cdots&0&0\\
    \vdots&\vdots&\vdots&\vdots&\vdots\\
    \cdots&u_{r}\alpha_{r}^{n-k-3}&\cdots&0&0\\
    \cdots&u_{r}\sum\limits_{i=n-k-2}^{n-3}\sum\limits_{j=0}^{n-3-i}\sigma_{j}\left(b_{1}^{n-3-i-j}-\lambda_{1}b_{3}^{n-3-i-j}\right)\alpha_{r}^i&\cdots&-1&0\\
    \cdots&u_{r}\sum\limits_{i=n-k-2}^{n-3}\sum\limits_{j=0}^{n-3-i}\sigma_{j}\left(b_{2}^{n-3-i-j}-\lambda_{2}b_{3}^{n-3-i-j}\right)\alpha_{r}^i&\cdots&0&-1\\
	\end{pmatrix}_{1\leq r\leq n-2}.
	\end{equation}
\end{theorem}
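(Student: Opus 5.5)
The plan is to specialize Theorem~\ref{The:5.2} to $\eta=0$ and then remove one redundant term by a row operation. When $\eta=0$ we have $S_{k,\ell,0}=\mathbb{F}_q[x]_{<k}$ for every $\ell$. Hence $\operatorname{RCTRS}_{\ell,0}(\mathcal{A},\boldsymbol{B},\boldsymbol{\lambda})=\operatorname{CTRS}(\mathcal{A},\boldsymbol{B},\boldsymbol{\lambda})$, and Theorem~\ref{The:5.2} applies directly under the same hypothesis $3\leq k\leq n-3$. In that parity-check matrix, the polynomial $f(x)=x^{n-k-3}\bigl(1-\eta\sum_{j}\sigma_j x^{k-\ell-j}\bigr)$ becomes $x^{n-k-3}$. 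So the first $n-k-2$ rows are exactly $(u_r\alpha_r^i)_{1\leq r\leq n-2}$ for $0\leq i\leq n-k-3$, followed by two zeros. These are the first $n-k-2$ rows of the matrix $H$ in the statement.

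It remains to compare the last two rows. In Theorem~\ref{The:5.2} the sum defining $\boldsymbol{\gamma}_t$ runs over $n-k-3\leq i\leq n-3$; in the present statement it runs over $n-k-2\leq i\leq n-3$. The difference is the single term $i=n-k-3$, which contributes $c_t\,u_r\alpha_r^{n-k-3}$ in coordinate $r$. Here
\[
c_t=\sum_{j=0}^{k}\sigma_j\left(b_t^{k-j}-\lambda_t b_3^{k-j}\right).
\]
This vector (with zeros in the last two coordinates) is $c_t$ times the row $(u_r\alpha_r^{n-k-3})_r$, which already belongs to $H$. Subtracting $c_t$ times that row from the $\boldsymbol{\gamma}_t$-row does not change the row space, and it turns the matrix of Theorem~\ref{The:5.2} into the claimed $H$. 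Row-equivalent matrices span the same code, so $H$ is again a parity-check matrix of $\mathcal{C}$.

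As a sanity check, and as a fallback that does not rely on the $k+1$-row system of Theorem~\ref{The:5.2}, I would also verify the result directly. The first $n-k-2$ rows of $H$ form the parity-check matrix of $RS_k(\mathcal{A})$. This follows from \eqref{equation:3.1}: $\sum_r u_r\alpha_r^{s+i}=0$ whenever $s+i\leq n-4$.

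For the last two rows, write $\boldsymbol{\gamma}_t=(u_r g_t(\alpha_r))_r$ with $g_t=\sum_{i=n-k-2}^{n-3}a_i^{(t)}x^i$. Using \eqref{equation:3.1} and \eqref{equation:3.2}, the condition $G_1\boldsymbol{\gamma}_t^T=\boldsymbol{\beta}_t$ for $0\leq s\leq k-1$ becomes a $k\times k$ lower-triangular Toeplitz system in $(\Lambda_0,\dots,\Lambda_{k-1})$. Inverting it with the $\sigma$-Toeplitz matrix (Lemma~\ref{lemcont:4.1}) gives
\[
a^{(t)}_{n-3-m}=\sum_{j=0}^{m}\sigma_j\left(b_t^{m-j}-\lambda_t b_3^{m-j}\right),
\]
which matches the stated entries with $m=n-3-i$.

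Finally, $H$ has full rank $n-k$. The $n-k-2$ GRS rows are independent because $n-k-2\leq n-2$ and they form a Vandermonde-type matrix scaled by the nonzero $u_r$. The last two columns contain $-I_2$ in the bottom rows and zeros above, so $H$ is block triangular and its rank is $(n-k-2)+2$. Nothing here is a genuine obstacle. The only point needing care is the index bookkeeping: checking that the dropped $i=n-k-3$ term lies exactly in the span of the existing row $u_r\alpha_r^{n-k-3}$, and that the range $s+i\leq n-4$ is correct so that lower coefficients do not contribute.
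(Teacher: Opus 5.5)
Your proposal is correct and follows the paper's route: the paper proves this result simply by setting $\eta=0$ in Theorem~\ref{The:5.2}. You also notice that the specialized $\boldsymbol{\gamma}_t$-rows contain an extra $i=n-k-3$ term, which lies in the span of the existing row $(u_r\alpha_r^{n-k-3})_r$ and can be removed by a row operation (or avoided by solving the $k\times k$ system directly); this fills in a step the paper leaves implicit when it changes the summation range from $n-k-3$ to $n-k-2$.
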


\section{Conclusion}\label{sec6}
In this paper, we introduced and studied two variants of twisted Reed-Solomon codes, namely column-twisted Reed-Solomon codes and row-column twisted Reed-Solomon codes. For both families, we provide necessary and sufficient conditions for the MDS property by reducing the relevant k-column minors to explicit subset product and elementary-symmetric-function conditions. These necessary and sufficient conditions lead to explicit field-extension and subgroup-based constructions of MDS codes.

We further studied the Schur squares of the constructed codes. For the column-twisted family, the Schur square has dimension $2k+1$, which distinguishes the resulting MDS codes from generalized Reed-Solomon codes. For the row-column twisted family, the larger Schur-square dimension gives another way to distinguish these codes from $\operatorname{GRS}$ codes and known twisted families.
 Finally, we derived explicit parity-check matrices and dual descriptions for the two families.
  
  For future research, the $\operatorname{RCTRS}$ codes associated with more general $(\mathcal{L},\mathcal{P})$-twist form may be considered. It would also be interesting to investigate their self-orthogonality, self-duality, Euclidean and Hermitian hulls, and applications to quantum-code constructions.
  %That is to say, we can consider  RCTRS codes of the following form: 
  %\[
  %\mathcal{C}=\left\{(f(\alpha_{1}),\cdots,f(\alpha_{n-\ell}),f(b_{1})-\lambda_{1}f(b_{\ell+1}),\cdots,f(b_{\ell})-\lambda_{\ell}f(b_{\ell+1})):f(x)\in F(\mathcal{L},\mathcal{P},B)\right\},
%  \]
  %where $\lambda_{1},\cdots,\lambda_{\ell}\in\mathbb{F}_{q},\alpha_{1},\cdots,\alpha_{n-\ell}\in\mathbb{F}_{q}$ are distinct, $b_{1},\cdots,b_{\ell}$ are also distinct and $F(\mathcal{L},\mathcal{P},B)$  is given by Equation~\eqref{Equ:F,L,P,B}. Furthermore, the MDS property, non-RS property, and duality of the code $\mathcal{C}$ are worth investigating.

\bibliographystyle{plain}
\bibliography{RCTRS}

\end{document}